\newtheorem{definition}{Definition}[section]
\newtheorem{theorem}{Theorem}[section]
\newtheorem{proof}{Proof}[section]
\newtheorem{lemma}{Lemma}[section]
\begin{document}

\title{Reachability Queries with Label and Substructure Constraints on Knowledge Graphs}


\author{Xiaolong~Wan, Hongzhi~Wang
\IEEEcompsocitemizethanks{\IEEEcompsocthanksitem The authors are with the School of Computer Science and Technology, Harbin Institute of Technology, China.\protect\\
E-mail: wxl@hit.edu.cn, wangzh@hit.edu.cn
}
\thanks{This work has been submitted to the IEEE for possible publication. Copyright may be transferred without notice, after which this version may no longer be accessible.}}

\markboth{IEEE Transactions on Knowledge and Data Engineering,~Vol.~XX, No.~XX, XX~20XX}%
{Wan \MakeLowercase{\textit{et al.}}: Reachability Queries with Label and Substructure Constraints on Knowledge Graphs}


\IEEEtitleabstractindextext{%
\begin{abstract}
	Since knowledge graphs~(KGs) describe and model the relationships between entities and concepts in the real world, reasoning on KGs often correspond to the \underline{r}eachability queries with \underline{l}abel and \underline{s}ubstructure \underline{c}onstraints~(LSCR). Specially, for a search path $p$, LSCR queries not only require that the labels of the edges passed by $p$ are in a certain label set, but also claim that a vertex in $p$ could satisfy a certain substructure constraint. LSCR queries is much more complex than the label-constraint reachability~(LCR) queries, and there is no efficient solution for LSCR queries on KGs, to the best of our knowledge. Motivated by this, we introduce two solutions for such queries on KGs, UIS and INS. The former can also be utilized for general edge-labeled graphs, and is relatively handy for practical implementation. The latter is an efficient \textit{local-index}-based informed search strategy. An extensive experimental evaluation, on both synthetic and real KGs, illustrates that our solutions can efficiently process LSCR queries on KGs.
\end{abstract}

\begin{IEEEkeywords}
Knowledge graph, Reachability query, Label constraint, Substructure constraint.
\end{IEEEkeywords}}

\maketitle

\IEEEdisplaynontitleabstractindextext

\IEEEpeerreviewmaketitle



\IEEEraisesectionheading{\section{Introduction}}\label{section:introduction}
	\IEEEPARstart{N}{owadays}, numerous highly formatted databases are utilized to construct domain specific KGs \cite{SenDHCHRLR18Fonduer}, reasoning on which is widely used in various applications, e.g. criminal link analysis~\cite{Schroeder2003CrimeLink, SchroederXCC07}, suspicious transaction detection~\cite{RajputKLH14}, e-commerce recommendation~\cite{RecommendationAlgorithm}, etc. In such applications, reasoning between two entities often relates to the reachability queries with both label and substructure constraints~\cite{SchroederXCC07, XuC04}, as KGs can be considered as edge-labeled graphs. 
	
	For example, we model a financial KG as $G$, where (\romannumeral1) each vertex represents a person; (\romannumeral2) each edge $e$$=$$(u,l,v)$ correlates to an account transfer or a relationship from vertex $u$ to vertex $v$; (\romannumeral3) the label $l$ of $e$ is either a timestamp that corresponds to the transfer occurred time, or a social relationship~(e.g. \textit{friend-of}, \textit{parent-of}, \textit{married-to}, etc.). In some detection tasks, to verify the economic criminal relationship between Suspect $\mathcal{C}$ and Suspect $\mathcal{P}$, with the known information: \textit{``An indirect transaction from $\mathcal{C}$ to $\mathcal{P}$ occurred in April 2019, in which one of the middlemen of the transaction and Amy are married ...''}, as depicted in Figure~\ref{fig:person_figure}(a).

	If the economic criminal relationship exists, a transaction path $p=<\mathcal{C},e_0,v_0,\dots,v_{k-1},e_k,\mathcal{P}>$ from $\mathcal{C}$ to $\mathcal{P}$ exists in $G$ as shown in Figure~\ref{fig:person_figure}(b), where (\romannumeral1) $l_{0}$, ..., $l_{k}$ represent the labels of edges $e_0$, ..., $e_{k}$, respectively, and the timestamps of $l_{0}$, ..., $l_{k}$ are in April 2019; (\romannumeral2) $\exists \mathcal{X}\in v_0,\dots,v_{k-1}$, which is one of the passing vertex of $p$, represents a person in $G$ who married $Amy$. Note that $G$ could contain multiple married couples, maybe only some of those couples could satisfy the above conditions.
	
	
	\begin{figure}
		\centering
		\renewcommand{\thesubfigure}{}
		\subfigure[(a)]{
			\includegraphics[scale = 0.42]{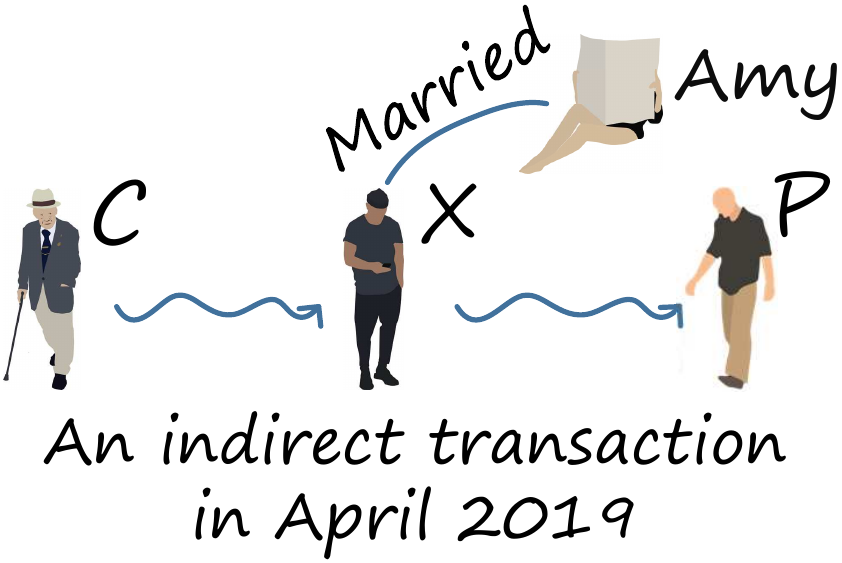}}
		\subfigure[(b)]{
			\includegraphics[scale = 0.47]{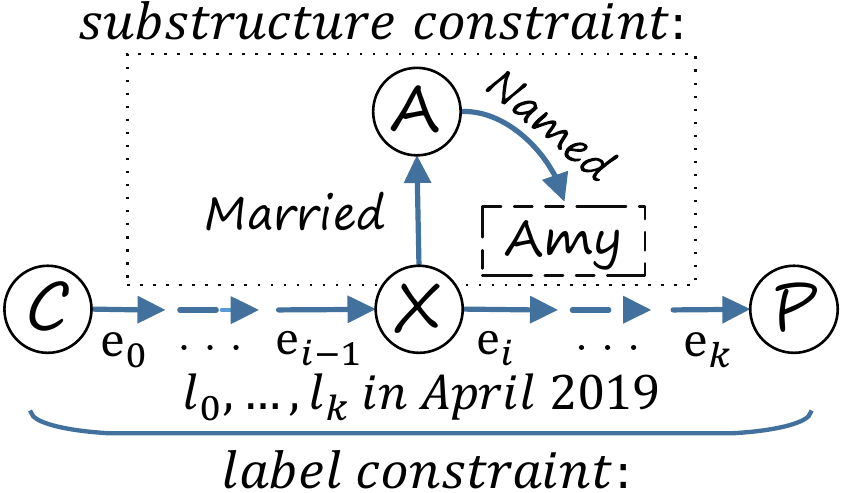}}
		\caption{A financial crime detection scenario query example.}
		\label{fig:person_figure}
	\end{figure}
	
	Moreover, such queries can also be used in a variety of situations, e.g. traffic navigation~\cite{FTSPWUR}, domain specific anomaly detection~\cite{AnomalyDetectionSurvey}, etc., asking the following question: \textit{Can a vertex $s$ reach to a vertex $t$ by a path $p$, where the edge labels in $p$ are in a certain label set~(label constraint), and $p$ passes a certain substructure~(substructure constraint)?}
	
	Even though the \underline{l}abel- and \underline{s}ubstructure-\underline{c}onstraint \underline{r}eachability (LSCR) queries on KGs are quite useful for the above reasoning tasks, as far as we know, there is no particular solution.  The reason is that the LSCR problem is much more complicated than the label-constraint reachability~(LCR) problem, originated in \cite{Jin2010CLR}. According to \cite{Jin2010CLR}, there are two kinds of techniques for LCR queries, \textit{online search (DFS/BFS)} and \textit{full transitive closure~(TC) precomputing}.
	
	The former can be applied directly, for the LCR queries. Assuming the input KG contains $|V|$ vertices and $|E|$ edges, the time complexity of DFS/BFS applied in LCR queries is $O(|V|+|E|)$, as the label constraint prunes the search space of DFS/BFS. Unfortunately, the problem of LSCR queries on KGs is too complex. As analyzed in Section~\ref{section:uninformed_search}, the time complexity of implementing traditional BFS/DFS for LSCR queris is about $O\big(|V|\times(|V|+|E|)\big)$ for the worst case. Thus, the former techniques can hardly be adopted to solve our problem.
		
	The latter aims to precompute a full TC of the input graph to answer LCR queries efficiently, but, the space complexity of storing a full TC for a graph $G$ is $O(|V|^2\times 2^{\mathcal{L}})$, where $\mathcal{L}$ is the label set of $G$. Hence, many works~\cite{Jin2010CLR, ZouXY0XZ14, Valstar17Landmark} are proposed to reduce the space complexity of indexing an input graph. However, the indexing time complexity of the above methods are too high to index a large KG, as discussed in Section~\ref{section:onlineSearch:LCR}. Note that, in this paper, we do not consider the approximate techniques of LCR queries, e.g. \cite{Nehaicde19ARROW}, because KG reasoning applications require exact results, like the above criminal detection scenario.
	
	\textbf{Contributions.} In this paper, we provide a comprehensive study of LSCR queries on KGs with efficient solutions. Our work conquers the practical unbearable indexing time of the traditional landmark indexing method~\cite{Valstar17Landmark}, with a \textit{local index}. Plus, we overcome the inefficiency of the algorithm~(Section~\ref{section:improved_u_s}) that implements the subgraph matching techniques for LSCR queries on KGs, by developing an informed search strategy based on local index. With the detailed experimental evaluation, our work could address the LSCR queries on KGs efficiently.
	
	Our main contributions are as follows:
	\begin{itemize}
		\item[-] We first define the problem of reachability queries with label and substructure constraints, which is the basis of reasoning on KGs.
		
		
		\item[-] We develop an \underline{u}n\underline{i}nformed \underline{s}earch algorithm, named UIS, for LSCR queries. It has a great potential for practical implementation on general labeled graphs, as UIS is not overly complex, and does not require any of the characteristics of KGs.
		
		
		
		\item[-] Then, we develop an intuition idea of implementing SPARQL engines~\cite{SurveyOfSPARQLEngines}, called UIS$^*$, to address LSCR queries and overcome the disadvantage of UIS. That is because: (\romannumeral1) current KGs are based on the standard of OWL~\cite{OWL} in the form of RDF~\cite{RDF-3X}, usually can be accessed by SPARQL; (\romannumeral2) a substructure constraint can be expressed in SPARQL~(Section~\ref{section:background}).

		
		\item[-] After that, we propose an efficient \underline{in}formed \underline{s}earch algorithm, i.e. INS, with a novel index, \textit{local index}. INS could break the fixed search direction of the uninformed search algorithms, and could process the vertices in the input KG with priorities, by developing two data structures as an evaluation function based on the entries of \textit{local index}.  
		
		
		\item[-] Finally, we conduct extensive experiments on both real and synthetic datasets to evaluate the introduced search algorithms. The experimental results show that the informed search algorithm could address LSCR queries on KGs efficiently.
		
	\end{itemize}
	
	The rest of this paper is organized as follows. The problem of LSCR queries on KGs is formally defined in Section~\ref{section:background}. We introduce UIS, UIS$^*$ and INS in Section~\ref{section:uninformed_search}, Section~\ref{section:improved_u_s} and Section~\ref{section:informed_search}, respectively. The experimental results are presented in Section~\ref{section:result}. Finally, we draw a conclusion in Section~\ref{section:conclusion}.

\section{Problem Definition}\label{section:background}

	We formally define the LSCR queries on KGs, with a table of frequently used notations~(Table~\ref{tab:notation}) and two KG schematic views~(Figure~\ref{fig:RDFS} and Figure~\ref{fig:subgraph_example}).
	
	\textbf{Knowledge Graph~(KG).} In essence, KGs~\cite{Freebase, YAGO} are graphs that consist of vertices and labeled edges. However, different from normal edge-labeled graphs, KGs often correspond to rich information of the real world, and the sizes of KGs are relatively large. For example, at the end of 2014, Freebase~\cite{Freebase} included 68 million entities, 1 billion pieces of relationship information and more than 2.4 billion factual triples. In practice, KGs are stored by RDF triples~\cite{RDF-3X}, and formatted by RDFS~\cite{RDFS-Reasoning-Parallel}~(RDF schema). RDFS defines a set of RDF vocabularies~(e.g. ``rdfs:Class'', ``rdfs:subClassOf'', ``rdf:type'') with special meanings, which could structure RDF resources and format the knowledge representations. From the perspective of graph classification, RDFS represents KGs as scale-free networks~\cite{LKAQ}, in which the relative commonness of vertices with a degree greatly exceeds the average. 
	
	Figure~\ref{fig:RDFS} draws a simple KG with some RDF vocabularies, in which (\romannumeral1) ``eg:Researcher'' represents a class of the researchers in the real world; (\romannumeral2) ``eg:workWith'' is an edge label entity, and the edges labeled by ``eg:workWith'' are incident to~(``rdfs:domain'') and also points to~(``rdfs:range'') an instance of ``eg:Researcher''; (\romannumeral3) Taylor and Walker are real persons and all researchers~(instances of ``eg:Researcher''), who also work together~(with the edges labeled by ``eg:workWith''). Certainly, in KGs, a vertex $u$ corresponding to real word instances has high degree, like ``eg:Researcher''. Besides, a class vertices, like ``eg:Person'' is more important for the KG structure than an instance vertex, e.g. ``eg:Researcher''.

	In a word, KGs are basically edge-labeled graphs, in the form of RDF, and structured by RDFS. The formal definition of a KG is demonstrated in Definition~\ref{definition:kg}. 
	
	\begin{figure}[t]
		\centerline{\includegraphics[scale=0.6]{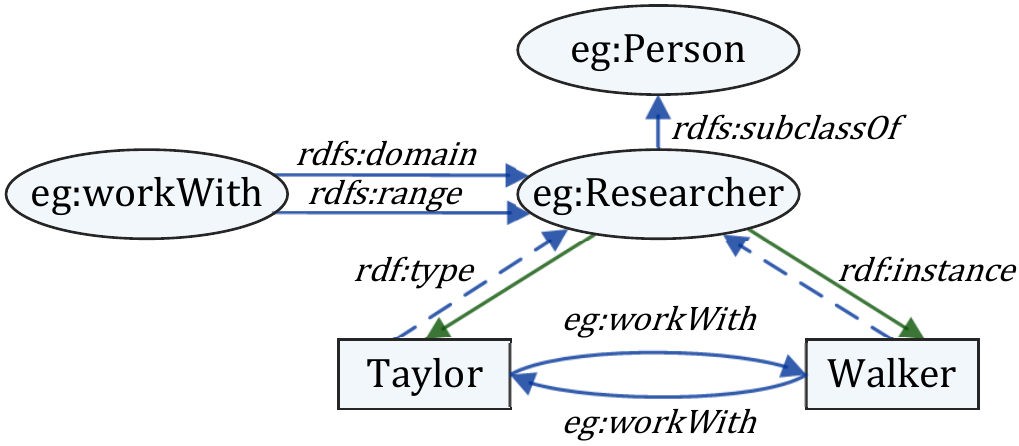}}
		\caption{An example knowledge graph, which is formatted by RDFS.} 
		\label{fig:RDFS}
	\end{figure} 
	
	\begin{definition}
		\label{definition:kg}
		A \textit{knowledge graph} $G$ is a quadruple $(V, E, \mathcal{L}$, $L_{S})$, where: (1) $V$ is the vertex set of $G$ and $\mathcal{L}$ is the edge label set; (2) $E\subseteq V\times\mathcal{L}\times V$ is the labeled edge set of $G$, and for $\forall e\in E$, $e=(s,l,t)$ is an edge from $s$ to $t$ with a label $l$; (3) Let $\lambda$: $E\rightarrow \mathcal{L}$ be a label mapping function, i.e. $\lambda(e)=l$, then $\mathcal{L}=\bigcup_{\forall e\in E}(\lambda(e))$; (4) $L_{S}$ constains the RDFS triples of $G$.
	\end{definition}
	Note that, we illustrate the following definitions based on an input KG $G\negmedspace=\negmedspace(V, E, \mathcal{L}, L_{S})$.
	
	\textbf{Label constraint.} A label constraint in this paper relates to a subset of $\mathcal{L}$ in the KG $G$, for example, in Figure~\ref{fig:subgraph_example}(a), a label constraint can be $\{friendOf,follows\}$.
	
	\textbf{Substructure Constraint.} In order to define the substructure constraints in the KG $G$, we first formalize a variable-substructure $\gamma$ in $G$ as a triple $(V_\gamma,E_\gamma, E_?)$, where (\romannumeral1) $V_\gamma\negmedspace\subseteq\negmedspace V$, $E_\gamma\negmedspace\subseteq\negmedspace E$ and $\forall(u,l,v)\negmedspace\in\negmedspace E_\gamma$, $u,v\in V_\gamma$; (\romannumeral2) $E_?\negthickspace=\negthickspace\big\{\{(u,l,?v)|u\in V_\gamma\wedge l\in\mathcal{L}\}\cup\{(?u,l,v)|v\in V_\gamma\wedge l\in\mathcal{L}\}\big\}$, and an element $e$ in $E_?$ matches zero or multiple edges in $E$. Then, the substructure constraint is defined as follows:
	
	\begin{table}[t]
		\centering \caption{Frequently-used notations.} \label{tab:notation}
		\small
		\begin{tabular}{cp{6.7cm}}
			\hline
			\textbf{Notation} & \textbf{Description}\\%
			\hline
			$\mathbb{V}(p)$ & The vertex set of a path $p$.\\
			$\mathbb{L}(p)$ & The label set of a path $p$.\\
			$P(s,t)$ & All the paths from a vertex $s$ to a vertex $t$.\\
			$M(s,t)$ & The CMS from a vertex $s$ to a vertex $t$.\\
			$V(S,G)$ & A vertex set, containing all the vertices in a KG $G$ that all satisfy a substructure constraint $S$.\\
			\hline
		\end{tabular}
	\end{table}
	
	\begin{definition}
		A substructure constraint $S$ is referred to a tuple $(?x, V_S,E_S, E_?)$, where $(V_S,E_S, E_?)$ corresponds to a variable-substructure in $G$, and, $\exists e\in E_?$, $e$ is incident to $?x$ or points $?x$.
	\end{definition}
	For instance, Figure~\ref{fig:subgraph_example}(b) demonstrates a substructure constraint $S_0$ of $G_0$ in Figure~\ref{fig:subgraph_example}(a), where  $S_0=(?x,\{v_3\},\{\},\\\{(?x,friendOf,v_3),(v_3,likes,?y)\})$.
	
		Note that, although the substructure constraints can be defined in multiple ways, we only study the most widely used one, in this paper, and the other forms can be derived from this definition.
		
	\begin{figure}[t]
		\centerline{\includegraphics[scale=0.4]{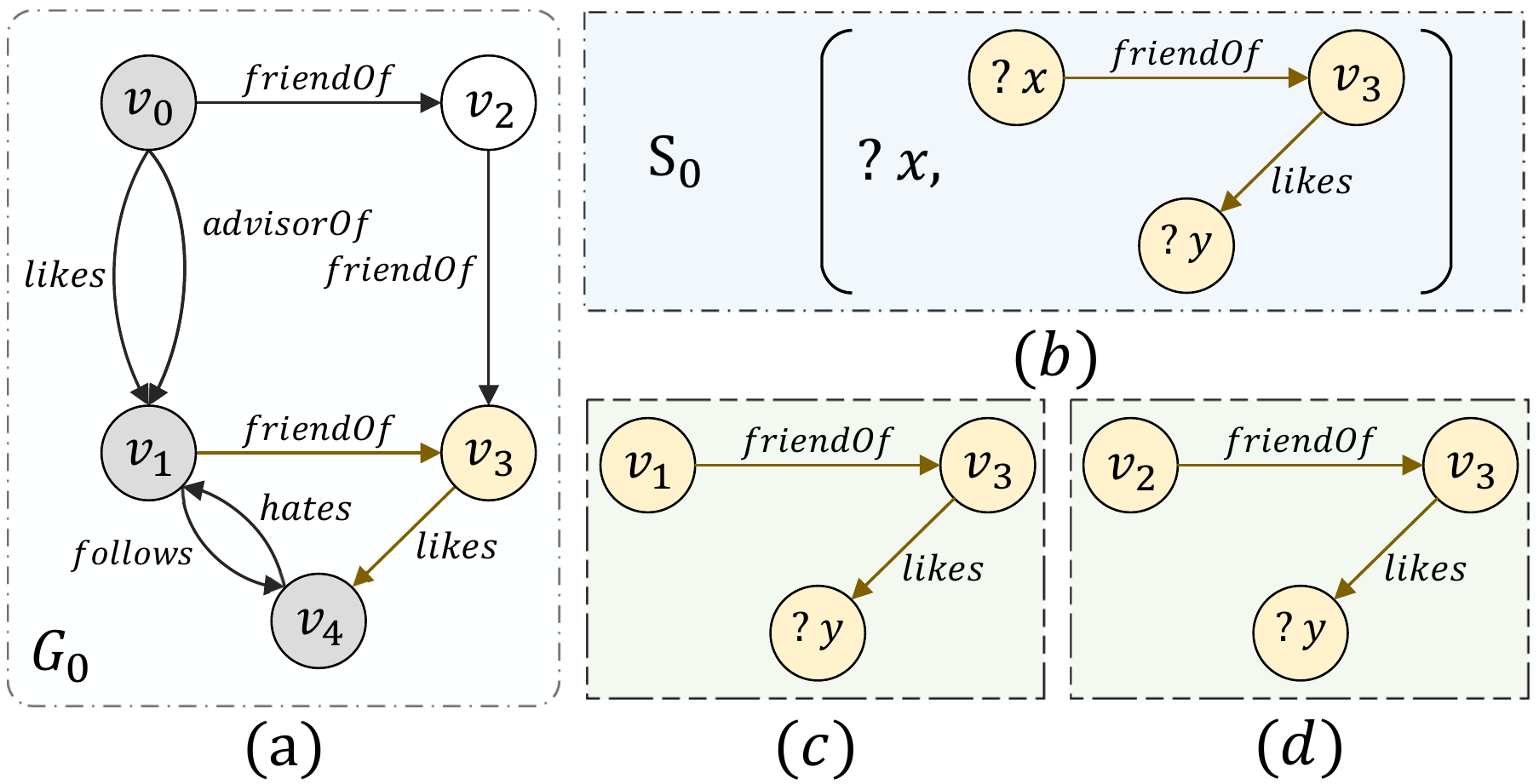}}
		\caption{A schematic view of a running example.} 
		\label{fig:subgraph_example}
	\end{figure} 		
	
	\textbf{Reachability.} For clarity, we let $\mathbb{L}(p)$ and $\mathbb{V}(p)$ denote the label set and the vertex set of a path $p$, respectively, and let $P(s,t)$ represent the set of all the paths from a vertex $s$ to a vertex $t$. Obviously, if $s$ reaches $t$, there is a path from $s$ to $t$~($P(s,t)\negthickspace\neq\negthickspace\phi$). We utilize symbol $s\negmedspace\leadsto\negmedspace t$ and symbol $s\negmedspace \nrightarrow\negmedspace t$ to describe the reachability and non-reachability from a vertex $s$ to a vertex $t$, respectively.

	\textbf{Label-constraint reachability~(LCR).} Similarly, if a vertex $s$ reaches a vertex $t$ under a label constraint $L$ (denoted by $s\negmedspace\stackrel{L}{\leadsto}\negmedspace t$), one path $p$ form $s$ to $t$ exists, where each edge label in $p$ is an element of $L$, i.e. $\mathbb{L}(p)\subseteq L$. 
	
	In contrast, given a label constraint $L$, if $s\negmedspace\stackrel{L}{\leadsto}\negmedspace t$, we say $L$ is a \textit{sufficient path label set}. Furthermore, if $\forall L'\negmedspace\subset\negmedspace L\wedge s\negmedspace\stackrel{L'}{\nrightarrow}\negmedspace t$, $L$ is a \textit{minimal sufficient path label set}. For the \textit{\underline{c}ollection of all the \underline{m}inimal \underline{s}ufficient path label set}s~(\textbf{CMS}) from $s$ to $t$ is  denoted by $M(s,t)$, and defined in Definition~\ref{definition:minimal_path_label_set}.

	\begin{definition}
		\label{definition:minimal_path_label_set}
		A collection of label sets  $M(s,t)$ is a CMS from $s$ to $t$, iff $M(s,t)=\{\mathbb{L}(p_i)|p_i\in P(s,t)\wedge\nexists p_j\in P(s,t)$, $i\neq j$, $such$ $that$ $\mathbb{L}(p_j)\subset \mathbb{L}(p_i)\}$.
	\end{definition}
	For example, $M(v_0,v_3)$$=$$\{\{friendOf\}\}$ and $M(v_0,v_4)=\{\{friendOf,likes\},\{advisorOf,follows\},\{likes, follows\}\}$, as shown in  Figure~\ref{fig:subgraph_example}(a).
	
	\textbf{Substructure-constraint reachability.} Firstly, we say a vertex $u$ \textit{satisfies} a substructure constraint $S\negthickspace=\negthickspace(?x, V_S, E_S$, $E_?)$, if we replace $?x$ with $u$ whose result is still a substructure or a variable-substructure of $G$. For instance, for the substructure constraint $S_0$~(Figure~\ref{fig:subgraph_example}(b)) and two vertices $v_1$ and $v_2$, the replaced results are depicted in Figure~\ref{fig:subgraph_example}(c) and Figure~\ref{fig:subgraph_example}(d), respectively. As the replaced results of $v_1$ and $v_2$ are a variable-substructure of $G_0$, $v_1$ and $v_2$ satisfy $S_0$.  
	
	Then, if a vertex $s$ reaches a vertex $t$ under a substructure constraint $S$~(denoted by $s\negmedspace\stackrel{S}{\leadsto}\negmedspace t$), there is a path $p$ from $s$ to $t$, where $\exists u\negmedspace\in\negmedspace \mathbb{V}(p)$ and $u$ satisfies $S$. For example, in Figure~\ref{fig:subgraph_example}, $v_0\stackrel{S_0}{\leadsto}v_4$,  $v_0\stackrel{S_0}{\leadsto}v_3$ and $v_3\stackrel{S_0}{\leadsto}v_4$.
	
	\textbf{Overall.} Theorem~\ref{theorem:LSCR} presents the condition for the existence of $s \negmedspace\stackrel{L,S}{\leadsto}\negmedspace t$, which states that a vertex $s$ reaches a vertex $t$ under a label constraint $L$ and a substructure constraint $S$. For example, in Figure~\ref{fig:subgraph_example}(a), given a label constraint $L=\{likes,follows\}$, $v_0\negmedspace\stackrel{L,S_0}{\leadsto}\negmedspace v_4$, while, $v_0\negmedspace\stackrel{L,S_0}{\nrightarrow}\negmedspace v_3$. 
	
	\begin{theorem}
		\label{theorem:LSCR}
		$s\negmedspace\stackrel{L,S}{\leadsto}\negmedspace t$ in a KG $G$, iff, $\exists p\in P(s,t)$, $\mathbb{L}(p)\subseteq L$ and $\exists u\in \mathbb{V}(p)\wedge u\ satisfies\ S$.
	\end{theorem}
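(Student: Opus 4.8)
The plan is to prove the biconditional by unfolding the two reachability notions introduced above and showing that the combined operator $\stackrel{L,S}{\leadsto}$ is witnessed by one and the same path. First I would record two auxiliary observations. The condition $\mathbb{L}(p)\subseteq L$ is exactly the requirement appearing in the definition of label-constraint reachability $s\stackrel{L}{\leadsto}t$. Moreover, ``$u$ satisfies $S$'' is a purely local predicate on the vertex $u$: by the definition of satisfaction it is equivalent to $u\in V(S,G)$ and does not depend on the path that passes through $u$. These two facts let me treat ``every edge label of $p$ lies in $L$'' and ``some vertex of $p$ belongs to $V(S,G)$'' as two independent predicates evaluated on a single path.

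For the reverse direction ($\Leftarrow$), I would take a path $p\in P(s,t)$ with $\mathbb{L}(p)\subseteq L$ and some $u\in\mathbb{V}(p)$ satisfying $S$. Such a $p$ simultaneously certifies $s\stackrel{L}{\leadsto}t$ (its label set is contained in $L$) and $s\stackrel{S}{\leadsto}t$ (it contains a vertex satisfying $S$); since it is a single path honoring both constraints at once, the meaning of the combined operator yields $s\stackrel{L,S}{\leadsto}t$. For the forward direction ($\Rightarrow$), I would unfold $s\stackrel{L,S}{\leadsto}t$ according to its intended semantics: reaching $t$ from $s$ under the label constraint $L$ and the substructure constraint $S$ means there is a traversal from $s$ to $t$, i.e. a single path $p\in P(s,t)$, along which both constraints hold. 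Reading the label constraint off $p$ gives $\mathbb{L}(p)\subseteq L$, and reading the substructure constraint off $p$ gives $\exists u\in\mathbb{V}(p)$ with $u$ satisfying $S$, which is precisely the right-hand side.

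The main obstacle is conceptual rather than computational: one must argue carefully that the combined reachability insists on a \emph{single} witnessing path rather than accepting two independent witnesses. The naive reading $s\stackrel{L}{\leadsto}t\wedge s\stackrel{S}{\leadsto}t$ is strictly weaker, since one path could respect the labels while a different path carries the required substructure vertex, yet no common path need exist. The running example supports this: although $v_0\stackrel{S_0}{\leadsto}v_3$ holds on its own, we have $v_0\stackrel{L,S_0}{\nrightarrow}v_3$ under $L=\{likes,follows\}$, because $M(v_0,v_3)=\{\{friendOf\}\}$ leaves no $s$-$t$ path whose labels even fit in $L$, let alone one that additionally passes a vertex satisfying $S_0$. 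Once this single-path semantics is pinned down and justified, both directions follow directly from the definitions of $\stackrel{L}{\leadsto}$ and $\stackrel{S}{\leadsto}$ together with the characterization $u$ satisfies $S \iff u\in V(S,G)$.
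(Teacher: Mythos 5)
Your proof is correct and takes essentially the same route as the paper, whose entire proof is the one-line remark that the statement "can be easily proved by the above definitions" — your careful unfolding of the single-path semantics of $s\stackrel{L,S}{\leadsto}t$ is precisely the elaboration of that remark, and your observation that the naive conjunction reading $s\stackrel{L}{\leadsto}t\wedge s\stackrel{S}{\leadsto}t$ would be strictly weaker is the right conceptual point to pin down. One small caveat: the running example you cite does not actually separate the two readings, since under $L=\{likes,follows\}$ we also have $v_0\stackrel{L}{\nrightarrow}v_3$ (as $M(v_0,v_3)=\{\{friendOf\}\}$), so the conjunction fails there as well — but this is a side remark and does not affect the validity of your proof.
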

	\begin{proof}
		That can be easily proved by the above definitions.
	\end{proof}

	\textbf{Problem Definition.} According to the above definitions, we define the problem in Definition~\ref{definition:LSCR}.
	
	\begin{definition}
		\label{definition:LSCR}
		A LSCR query $Q$, on KG $G$$=$$(V,E,\mathcal{L}, L_S)$, is a quadruple $(s,t,L,S)$, where $s,t\in V$, $L\subseteq\mathcal{L}$, and $S$ is a substructure constraint of $G$. If $s\negmedspace\stackrel{L,S}{\leadsto}\negmedspace t$ in $G$, then the answer of $Q$ is true. Otherwise, the answer of $Q$ is false. 
	\end{definition}
	
	\textbf{Additional.} For the KG $G$ and a substructure constraint $S$, we let $V(S,G)$ denote a vertex set, which contains all the vertices in $G$ that satisfy the substructure constraint $S$.
	
	Besides, a substructure constraint $S$ can be expressed by a SPARQL\cite{Zou2011gStore} query. For instance, $S_0$~(Figure~\ref{fig:subgraph_example}(b)) can be modeled as `$Select\ ?x$ $where\{$$?x$ $\langle fiendOf \rangle$ $v_3.$ $v_3\langle likes \rangle$$?y.\}$''.
	
\section{Uninformed Search}\label{section:uninformed_search} 
	
	The most straight-forward way of answering reachability queries on KGs is the uninformed~(blind) search strategies~\cite{Russell1995AIM193191}: DFS and BFS, which process an input query with no additional information beyond that in problem definition. Simply using label constraint to reduce the search space, we can adopt DFS or BFS to solve LCR queries~\cite{Jin2010CLR}. 
	
	Unfortunately, LSCR queries are too complex, and neither DFS nor BFS can be utilized to such queries directly, as they never revisit the passed vertices in the query processing. For example, supposing a label constraint $L$$=$$\{likes$, $hates, friendOf\}$, in order to verify the existence of $v_3\negthickspace\stackrel{L,S_0}{\leadsto}\negthickspace v_4$ in Figure~\ref{fig:subgraph_example}(a), a search algorithm needs to walk on the path $p\negthickspace=<\negthickspace v_3,likes,v_4,hates,v_1,friendOf,v_3,likes,v_4\negthickspace>$, because only $v_1$ and $v_2$ could satisfy $S_0$.

	To process a LSCR query $Q\negthickspace=\negthickspace(s,t,L,S)$ on a KG $G\negthickspace=\negthickspace(V$, $E, \mathcal{L}, L_{S})$, at least two procedures are required in DFS or BFS. One searches in the space whose vertices can be reached by $s$ under the label constraint $L$; the another one is executed only when the former discovers a vertex $v$ that satisfies the substructure constraint $S$, and runs from $v$ to $t$ by walking on the vertices that can be reached by $v$ under the label constraint $L$. This process stops, when the former stops, or the latter returns true.

	Then, we analyze the time complexity of the above process, assuming a substructure constraint $S\negthickspace=\negthickspace(?x, V_S, E_S, E_?)$. The former procedure requires $O\big(|V|(|V_S|+|E_S|+|E_?|)+|E|\big)$, as it runs in the space with evaluating whether each passed vertex can satisfy $S$ or not. Each invocation of the latter procedure needs $O(|V|\negthinspace+\negthinspace|E|)$. We execute the latter up to $|V(S,G)|$ times, which is less than $|V|$. Hence, we summarize the time complexity in Theorem~\ref{theorem:dfs_complexity}.

	\begin{theorem}
		\label{theorem:dfs_complexity}
		The time complexity of applying DFS/BFS is $O\big(|V|\negthinspace\times\negthinspace(|V_S|\negthinspace+\negthinspace|E_S|\negthinspace+\negthinspace|E_?|\negthinspace+\negthinspace|V|\negthinspace+\negthinspace|E|)\big)= O(|V|\negthinspace\times\negthinspace(|V|\negthinspace+\negthinspace|E|))$.
	\end{theorem}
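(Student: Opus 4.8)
The plan is to derive the stated bound by decomposing the traversal exactly as described above, costing its two cooperating phases separately, and then combining them. First I would fix the decomposition: one \emph{outer} traversal, a label-constrained DFS/BFS rooted at $s$ that explores only the vertices reachable from $s$ using labels in $L$; and, triggered whenever the outer traversal discovers a vertex that satisfies $S$, one \emph{inner} label-constrained DFS/BFS that tests whether $t$ is reachable from that vertex. The correctness of accounting for these two phases (rather than a single naive traversal) rests on Theorem~\ref{theorem:LSCR}: $s\stackrel{L,S}{\leadsto}t$ holds iff some path respects the label constraint and passes through a vertex satisfying $S$, which is precisely what an inner traversal from a satisfying vertex reaching $t$ witnesses; this is also why plain DFS/BFS fails, since it would have to revisit vertices.

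Next I would bound the outer traversal. Its skeleton is an ordinary label-constrained search, touching at most $|V|$ vertices and $|E|$ edges, hence $O(|V|+|E|)$. The additional work is that every visited vertex must be tested for satisfaction of $S=(?x,V_S,E_S,E_?)$, and I would argue one such test is a local pattern match costing $O(|V_S|+|E_S|+|E_?|)$. Charging this test to each of the up to $|V|$ visited vertices gives the outer cost $O\big(|V|(|V_S|+|E_S|+|E_?|)+|E|\big)$.

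Then I would bound the inner work. Each inner call is an ordinary label-constrained reachability search, costing $O(|V|+|E|)$. The essential counting step is that an inner search is launched at most once per vertex satisfying $S$, so the number of calls is at most $|V(S,G)|$, which by definition is at most $|V|$; thus the total inner cost is $O\big(|V|(|V|+|E|)\big)$. Summing the two phases gives
\begin{equation*}
O\big(|V|(|V_S|+|E_S|+|E_?|)+|E|\big)+|V|\cdot O(|V|+|E|)=O\big(|V|(|V_S|+|E_S|+|E_?|+|V|+|E|)\big).
\end{equation*}
To collapse this to the final form, I would use $V_S\subseteq V$ and $E_S\subseteq E$, so $|V_S|\le|V|$ and $|E_S|\le|E|$, and note that $|E_?|$ is dominated by the graph size under the standing assumption that the query substructure is small relative to $G$; hence $|V_S|+|E_S|+|E_?|=O(|V|+|E|)$, yielding $O\big(|V|\times(|V|+|E|)\big)$.

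I expect the main obstacle to be justifying the two quantitative claims that the rest of the bound hinges on: that the per-vertex satisfaction test truly costs only $O(|V_S|+|E_S|+|E_?|)$ rather than hiding a subgraph-isomorphism blow-up (here I would lean on the restricted, star-shaped form of $E_?$ around $?x$, so each match is a local neighborhood check), and that a single inner search per satisfying vertex suffices, so the call count is genuinely capped at $|V(S,G)|\le|V|$. The latter is exactly the delicate point that makes the two-phase scheme avoid the repeated revisiting a single monolithic DFS/BFS would require.
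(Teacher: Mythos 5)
Your proposal is correct and follows essentially the same route as the paper: the paper's proof simply points back to its preceding discussion, which uses exactly your two-phase decomposition (an outer label-constrained search with a per-vertex $O(|V_S|+|E_S|+|E_?|)$ satisfaction check, plus at most $|V(S,G)|\le|V|$ inner label-constrained searches of cost $O(|V|+|E|)$ each). Your added justification for collapsing $|V_S|+|E_S|+|E_?|$ into $O(|V|+|E|)$ is merely more explicit than the paper, which asserts that equality without comment.
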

	\begin{proof}
		It can be easily proved by the above discussion.
	\end{proof}

	Due to the high time complexity of utilizing the above techniques, we introduce a novel \underline{u}n\underline{i}nformed online \underline{s}earch strategy~(UIS, Algorithm~\ref{algorithm:UIS}) for LSCR queries~(Section~\ref{section:dfs/bfs:UIS}), which could be scaled on general labeled graphs, and provides a baseline for LSCR queries. After that, we discuss the usability of adopting current LCR methods on our problem~(Section~\ref{section:onlineSearch:LCR}).
	
	\begin{algorithm}[t]
		\caption{UIS Algorithm($G$, $Q$)}
		\label{algorithm:UIS}
		
		\begin{algorithmic}[1]
			\renewcommand{\algorithmicrequire}{\textbf{Input:}}
			\renewcommand{\algorithmicensure}{\textbf{Output:}}
			\renewcommand{\algorithmiccomment}[1]{\  #1}
			
			\REQUIRE $G=(V, E, \mathcal{L}, L_{S})$ represents a KG, \\$Q=(s,t,L,S)$ is a LSCR query.
			\ENSURE  The answer of Q.
			
			\STATE Let $\mathbb{S}$ be a stack with one element $s$ \label{line:dfs:initial_stack}
			
			\COMMENT{// The initial values in $close$ are set to N}
			\STATE $close[s]\leftarrow\,$SCck$(s,S)$ \label{line:dfs:close_initial}
			
			\STATE \textbf{while} $\mathbb{S}$ is not empty \textbf{do}\label{line:dfs:while_begin}
			\STATE \quad Take an element $u$ from $\mathbb{S}$ \label{line:dfs:pop_pair}
			\STATE \quad \textbf{for} each edge $e=(u,l,v)$, $l\in L$, incident to $u$ \textbf{do} \label{line:dfs:for_begin}
			
			\COMMENT{ \quad// We explore $v$ in the following cases}
			
			\STATE \quad \quad \textbf{if} $close[u]=T\wedge close[v]\neq T$ (\underline{case 1}) \textbf{then} \label{line:dfs:case1}
			\STATE \quad \quad \quad \quad Add $v$ into $\mathbb{S}$, $close[v]\leftarrow T$ \label{line:dfs:case1_add}
			\STATE \quad \quad \textbf{else if} $close[v]=N$ (\underline{case 2}) \textbf{then} \label{line:dfs:case2}
			\STATE \quad \quad \quad \quad Add $v$ into $\mathbb{S}$, $close[v]\leftarrow\,$SCck$(v,S)$ \label{line:dfs:case2_add}
			
			\STATE \quad \quad \textbf{if} $v=t\wedge close[v]=T$ \textbf{then} 
			\STATE \quad \quad \quad \textbf{return} $Q=T$  \label{line:dfs:end_loop} \label{line:dfs:end_for}
			\STATE \textbf{return} $Q=F$ \label{line:dfs:Q_F}
		\end{algorithmic}
	\end{algorithm}

\subsection{Baseline of LSCR Queries} \label{section:dfs/bfs:UIS}
	The algorithm UIS($G,Q$) starts with putting an element $s$ into a stack  $\mathbb{S}$~(Line~\ref{line:dfs:initial_stack}), and runs until $\mathbb{S}$ is empty~($s\negthickspace \stackrel{L,S}{\nrightarrow}\negthickspace t$, Line~\ref{line:dfs:Q_F}) or $s\negthickspace \stackrel{L,S}{\leadsto}\negthickspace t$ is proved~(Line~\ref{line:dfs:end_loop}). In UIS, we devise a function SCck$(v,S)$ to verify whether $v$ can satisfy $S$. Furthermore, in order to manage the recalling process of UIS, we develop a surjection $close\negmedspace:\negmedspace V$$\rightarrow$$\{N,T,F\}$ as illustrated in Definition~\ref{definition:close}.
	
	\begin{definition}
		\label{definition:close}
		A surjection $close\negmedspace:\negmedspace V$$\rightarrow$$\{N,T,F\}$ in a search algorithm for LSCR queries is as follows: 
		(\romannumeral1) initially, $\forall u\in V$, $close[u]$$=$$N$, which represents that $u$ have not been explored; (\romannumeral2) if $s\negthickspace\stackrel{L,S}{\leadsto}\negthickspace u$ and $s\negthickspace\stackrel{L}{\leadsto}\negthickspace u$ have been proved by the search strategy, we set $close[u]$ to $T$ and $F$, respectively.
	\end{definition}
	Note that, this surjection is used throughout this paper.

	After setting $close[s]$ to SCck$(s,S)$~(Line~\ref{line:dfs:close_initial}), the loop (Lines \ref{line:dfs:for_begin}-\ref{line:dfs:end_for}) takes an element $u$ from $\mathbb{S}$~(Line~\ref{line:dfs:pop_pair}), and processes each edge $e$$=$$(u,l,v)$ with the edge label $l$ belonging to $L$. Thus, the algorithm only traverses in the space that $s$ reaches to under $L$, i.e. $s\negmedspace\stackrel{L}{\leadsto}\negmedspace u$ exists. We explore $v$ in the following cases~(Lines~\ref{line:dfs:case1}-\ref{line:dfs:case2_add}). (1) $close[u]$$=$$T\wedge close[v]$$\neq$$T$, which means that $s\negthickspace\stackrel{L,S}{\leadsto}\negthickspace u$ exists. As $u\negmedspace\stackrel{L}{\leadsto}\negmedspace v$ and $e$$=$$(u,l,v)\wedge l\negthickspace\in\negthickspace L$, $s\negthickspace\stackrel{L,S}{\leadsto}\negthickspace v$ exists so that $close[v]$$=$$T$. (2) $close[u]$$\neq$$T\wedge close[v]$$=$$N$, where, if SCck$(v,S)$ returns true, i.e. $v$ satisfies $S$, we set $close[v]$$=$$T$~($s\negmedspace\stackrel{L,S}{\leadsto}\negmedspace v$), otherwise, we set $close[v]$$=$$F$.
	
	\textbf{Analysis.} In order to analyze the proposed algorithm, we first introduce the search tree in LSCR queries.
		
	For general consideration, in the following discussion, we assume SCck($s$,$S$) returns $false$, and loop (Line~\ref{line:dfs:while_begin}-\ref{line:dfs:end_loop}) starts with $close[s]$$=$$F$. Figure~\ref{fig:search_tree} draws a search tree example, in which, $\forall v\in V$, we make a distinction between $close[v]=T$ and $close[v]=F$, and mark them with colors.
	
	In order to illustrate the vertex exploring process, in Lines~\ref{line:dfs:case1}-\ref{line:dfs:case2_add}, we mark three edges with numbers \textit{1}, \textit{2} and \textit{3}, which denote $e_1$$=$$(u,l_1,v_1)$, $e_2$$=$$(u,l_2,v_2)$ and $e_3$$=$$(u',l_3,v_3)$, respectively. $e_1$ and $e_2$ appear, iff, in case~2, SCck$(v_1,S)$$=$$true$ and SCck$(v_2,S)$$=$$false$, respectively;  $e_3$ appears, iff, in case~1, we explore vertex $v_3$. In contrast, the edges in the search tree can only be generated in the above three ways, because we only explore a vertex in case~1 and case~2. We formally define the search tree in Definition~\ref{definition:search_tree}.
	\begin{definition}
		\label{definition:search_tree}
		A search tree of a LSCR query $Q$$=$$(s,t,L,S)$ is formalized as $\mathbb{T}$, where (\romannumeral1) the root of $\mathbb{T}$ is $s$; (\romannumeral2) supposing $N^\mathbb{T}$ denotes the node set of $\mathbb{T}$, there is an injection $f\negmedspace:\negmedspace V$$\rightarrow$$N^\mathbb{T}$ and, $\forall u\in V$, $u$ points at most two nodes $v^F$ and $v^T$ in $N^\mathbb{T}$; (\romannumeral3) $v^F$~($close[v]$$=$$F$) and $v^T$~($close[v]$$=$$T$) represent the states of $v$ in $close$ when vertex $v$ is passed by a searched algorithm.
	\end{definition}
	\begin{figure}[t]
		\centerline{\includegraphics[scale=0.37]{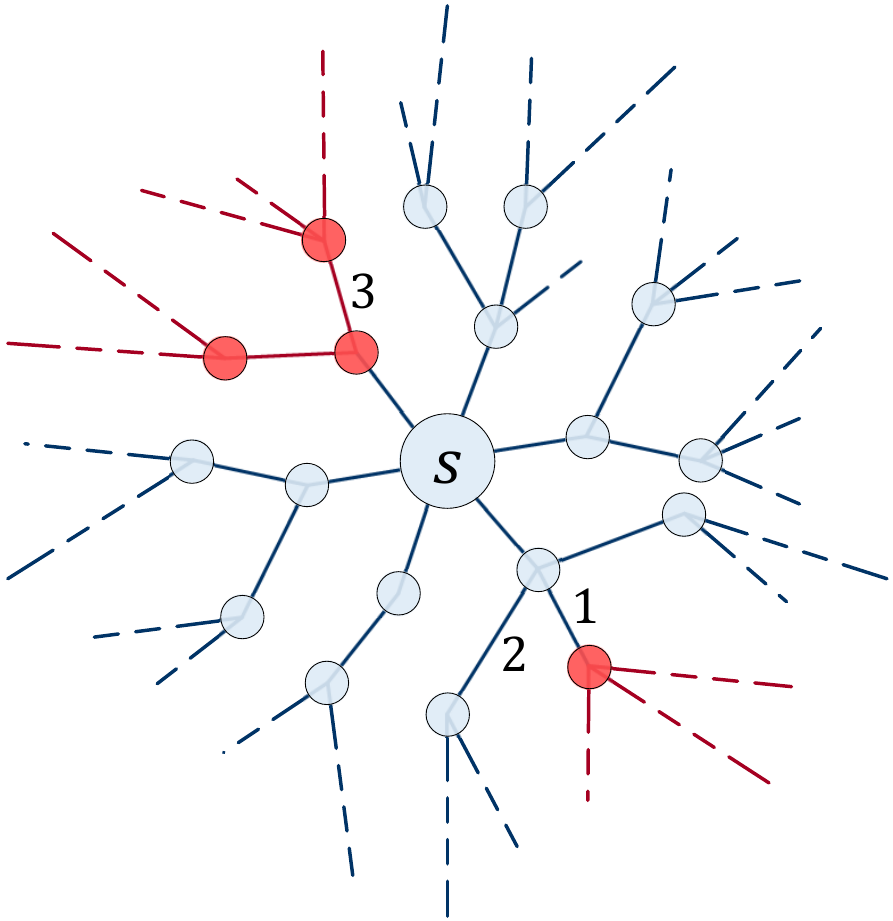}}
		\caption{An UIS search tree, where we make a distinction between $close[v]=T$~(red) and $close[v]=F$~(blue) by colors.} 
		\label{fig:search_tree}
	\end{figure} 
	
	Finally, we prove the algorithm correctness and the time complexity in Theorem~\ref{theorem:UIS} and Theorem~\ref{theorem:UIS_complexity}, respectively.
	
	\begin{theorem}
		\label{theorem:UIS}
		(Correctness of Algorithm~\ref{algorithm:UIS}.) $Q$ is a true query, iff, UIS($G$,$Q$) returns true.
	\end{theorem}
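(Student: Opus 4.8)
The plan is to prove the two directions of the equivalence separately, driving both by invariants on the map $close$, and to settle termination so that the ``return false'' outcome is well defined. By Theorem~\ref{theorem:LSCR} I may fix, once $Q$ is assumed true, a witnessing path $p=\langle s=w_0,e_1,w_1,\dots,e_k,w_k=t\rangle$ with $\mathbb{L}(p)\subseteq L$ and some $w_j\in\mathbb{V}(p)$ that satisfies $S$. First I would establish two invariants holding after every write to $close$: \emph{(Inv-L)} if $close[v]\neq N$ then $s\stackrel{L}{\leadsto}v$; and \emph{(Inv-T)} if $close[v]=T$ then $s\stackrel{L,S}{\leadsto}v$. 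Both follow by induction on the algorithm's steps. The base case is Line~\ref{line:dfs:close_initial}, where the trivial path at $s$ gives $s\stackrel{L}{\leadsto}s$, and $s\stackrel{L,S}{\leadsto}s$ exactly when $s$ satisfies $S$, i.e. when SCck$(s,S)$ sets $close[s]=T$. For the step, a value is written only in case~1 or case~2 while processing an edge $(u,l,v)$ with $l\in L$ from a popped $u$ (hence $close[u]\neq N$, so $s\stackrel{L}{\leadsto}u$ by Inv-L); appending this edge to a path for $u$ keeps the label set inside $L$ and keeps a satisfying vertex present, which yields both invariants. Soundness is then immediate: UIS returns true only at Line~\ref{line:dfs:end_loop}, where $v=t$ and $close[t]=T$, so Inv-T gives $s\stackrel{L,S}{\leadsto}t$ and $Q$ is true.

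For the converse I first note termination: $close[v]$ can only move along $N\to F\to T$ or $N\to T$, so it changes at most twice per vertex, and since a vertex is pushed onto $\mathbb{S}$ only when its $close$-value is (re)assigned, only $O(|V|)$ pushes occur and the loop halts. Now assume $Q$ is true and, for contradiction, that UIS returns false, so the loop runs to stack-exhaustion. Two facts follow. First, every $v$ with $s\stackrel{L}{\leadsto}v$ is explored ($close[v]\neq N$): induct on the length of an $L$-path to $v$; its predecessor is explored, hence pushed and eventually popped, and popping it processes the edge into $v$, forcing $close[v]\neq N$ via case~2 if it was still $N$. Second, the satisfying vertex $w_j$ reaches $close[w_j]=T$: it is explored by the previous fact, and since $w_j$ satisfies $S$ we have SCck$(w_j,S)=true$, so whether $w_j$ is first touched by case~1 or case~2 its value becomes $T$ (never $F$).

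It remains to propagate $T$ along the suffix $w_j,w_{j+1},\dots,w_k=t$, and then to reach a contradiction. Each time a vertex's $close$ becomes $T$ it is pushed, hence eventually popped, and popping $w_i$ processes its edge $(w_i,\cdot,w_{i+1})$ of $p$ (label in $L$); by case~1 this sets $close[w_{i+1}]=T$ unless it already equals $T$. Inducting along the suffix gives $close[t]=T$, and at the very step where $close[t]$ is set to $T$ while an edge $(u,l,t)$ is processed, the condition guarding Line~\ref{line:dfs:end_loop} holds and UIS returns true, contradicting the assumption. Hence a true $Q$ forces a true output, completing the equivalence.

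I expect the propagation step of completeness to be the main obstacle: I must argue that the re-pushing triggered by cases~1 and~2 guarantees that any vertex whose status turns $T$ is re-examined and its successors refreshed, so that $T$ flows all the way to $t$ regardless of the LIFO order in which $\mathbb{S}$ releases vertices; the subtlety is that a successor may already have been visited with $close=F$, which is precisely why case~1 must be allowed to overwrite $F$ with $T$. A secondary, technical point is the degenerate query $s=t$ with $s$ satisfying $S$: here $close[s]$ is set to $T$ outside the loop at Line~\ref{line:dfs:close_initial} and $s$ is never re-examined as a child, so the guard of Line~\ref{line:dfs:end_loop} never certifies $v=t$; this case should be handled separately (or ruled out by assuming $s\neq t$).
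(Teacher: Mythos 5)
Your proof is correct and follows essentially the same route as the paper's: soundness comes from the meaning of the $close$ states (the paper phrases this via the search tree $\mathbb{T}$ and the node $v^T$), and completeness comes from letting the first $S$-satisfying vertex on a witness path acquire $close=T$ and then propagating $T$ along the suffix using case~1's ability to overwrite $F$ --- exactly the ``recalling'' the paper invokes, except that you supply the invariants, termination bound, and inductions that the paper dismisses as ``obvious valid.'' Your flagged corner case ($s=t$ with $s$ satisfying $S$, where $close[t]=T$ is assigned at Line~\ref{line:dfs:close_initial} outside the loop so the guard before Line~\ref{line:dfs:end_loop} never tests $v=t$) is a genuine oversight in the paper's algorithm and proof, and excluding or special-casing it, as you do, is the right call.
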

	\begin{proof}
		($\Rightarrow$) $Q$ is a true query, iff, $\exists p\negmedspace\in\negmedspace P(s,t)\wedge \exists v\negmedspace\in\negmedspace p$ and $v$ satisfies the substructure constraint $S$. Supposing $p$$=$$<s, e_0, \dots, v, \dots, e_k, t>$ and $v$ is the first vertex in $p$, where $s\negmedspace\stackrel{L}{\leadsto}\negmedspace v$, $v\negmedspace\stackrel{L}{\leadsto}\negmedspace t$ and $close[v]$$=$$T$ exist. As UIS has the ability of recalling a vertex $w$, where the state of $w$ in $close$ is $F$, the statement is obvious valid.
		
		($\Leftarrow$) If UIS returns true, a node $v^T$ exists in $\mathbb{T}$. Thus, $s\negthickspace\stackrel{L,S}{\leadsto}\negthickspace t$, and $Q$ is a true query.
	\end{proof}
	
	\begin{theorem}
		\label{theorem:UIS_complexity}
		The time complexity of UIS($G$,$Q$) is $O\big(|V|\times(|V_S|\negmedspace+\negmedspace|E_S|\negmedspace+\negmedspace|E_?|)+ |E|\big)$, assuming $S$$=$$(?x, V_S, E_S, E_?)$.
	\end{theorem}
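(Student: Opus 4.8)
The plan is to bound separately the two sources of work performed by UIS --- the substructure checks SCck$(\cdot,S)$ and the scanning of edges --- and then observe that they add. The whole argument rests on one structural fact: the array $close$ caches the outcome of each substructure check, so SCck is never run twice on the same vertex.

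First I would analyze how $close[v]$ can evolve for a fixed vertex $v$. The only writes to $close[v]$ occur at Line~\ref{line:dfs:case1_add} (case~1, which stores $T$) and Line~\ref{line:dfs:case2_add} (case~2, which stores the result of SCck), together with the initialization of $s$ at Line~\ref{line:dfs:close_initial}. Since case~2 fires only when $close[v]=N$ and no line ever rewrites a value back to $N$, each vertex occupies state $N$ just once, before it is first reached as the target $v$ of an examined edge $(u,l,v)$. Hence SCck is invoked at most once per vertex --- at Line~\ref{line:dfs:case2_add} on its first visit, or at Line~\ref{line:dfs:close_initial} for $s$ --- giving at most $|V|$ invocations in all. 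Charging each invocation the cost $O(|V_S|+|E_S|+|E_?|)$ identified in the analysis preceding Theorem~\ref{theorem:dfs_complexity}, the total substructure-checking work is $O\big(|V|\times(|V_S|+|E_S|+|E_?|)\big)$.

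Next I would bound the edge scanning. Tracking the same writes shows that $close[v]$ moves monotonically along $N\to F\to T$, with the shortcut $N\to T$ available through case~1: a vertex is pushed onto $\mathbb{S}$ once when it leaves $N$ and at most once more when it is upgraded from $F$ to $T$, so it is pushed, and therefore popped at Line~\ref{line:dfs:pop_pair}, at most twice. Each pop of $u$ scans only the out-edges of $u$ whose label lies in $L$, so across all pops the number of edge examinations is $O\big(\sum_{u}2\deg(u)\big)=O(|E|)$. Summing the two bounds yields the claimed $O\big(|V|\times(|V_S|+|E_S|+|E_?|)+|E|\big)$.

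The step needing the most care is the caching claim underlying the SCck count: one must check that the if/else-if structure of Lines~\ref{line:dfs:case1}--\ref{line:dfs:case2_add} truly prevents a second SCck --- in particular, that a vertex resting in state $F$ can only be revisited through case~1 (which does not call SCck) and can never re-enter case~2. This is precisely the gain that separates the UIS bound from the $O\big(|V|\times(|V|+|E|)\big)$ cost of the naive two-procedure DFS/BFS of Theorem~\ref{theorem:dfs_complexity}, where the substructure test is recomputed along every search branch.
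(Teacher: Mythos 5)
Your proposal is correct and follows essentially the same route as the paper: it splits the cost into (at most $|V|$) substructure checks at $O(|V_S|+|E_S|+|E_?|)$ each, plus graph traversal work bounded by $O(|E|)$ because each vertex enters the stack at most twice. The only difference is presentational --- where the paper cites Definition~\ref{definition:search_tree} (each vertex yields at most the two tree nodes $v^F$ and $v^T$) to justify the ``traversed at most twice'' bound, you derive the same fact directly from the monotone $N\to F\to T$ evolution of $close$, which is exactly the argument underlying that definition.
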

	\begin{proof} 
		The time complexity of function $SCck(v,S)$ is $O(|V_S|\negmedspace+\negmedspace|E_S|\negmedspace+\negmedspace|E_?|)$. We invoke such function up to $|V|$ times. Plus, UIS traverses $G$ at most twice, according to Definition~\ref{definition:search_tree}. Hence, the time complexity of UIS traverses $G$ is $O(|V|+|E|)$. Thus, the statement is proved.
	\end{proof}
	
	\subsection{Analysis of LCR Methods} \label{section:onlineSearch:LCR}
	In this section, we first give a brief review of current LCR methods, then discuss why such methods cannot be applied on our problem.
	
	\textbf{Review.} The problem of LCR queries is originated in~\cite{Jin2010CLR}. They aim to address the high space complexity problem in the graph full transitive closure~(TC) precomputation, with a tree-based index framework. This framework contains a spanning tree~(or forest) and a partial transitive closure of the input graph, so that it can cover all the information of full TC to response the LCR queries.
	
	\cite{ZouXY0XZ14} decomposes an input graph into several strongly connected components. For each component a local transitive closure is computed, that is, they precompute all the CMSs of every vertex pair in a local transitive closure. 
	Then, they transfer each component into a bipartite graph. For the vertex pairs $(u, v)$ that $u$ and $v$ are not in the same component, both the CMSs from $u$ to $v$ and from $v$ to $u$ can be found in the topological order of bipartite graph union. That is, the CMSs of the vertices in different components are also precomputed. Thus, the precomputed index contains all the information of the full TC.
	
	To our knowledge, the current state-of-the-art solution of LCR queries is \cite{Valstar17Landmark}.
	Given a label set $\mathcal{L}$ of an input edge-labeled graph, they first choose $k$ landmarks. For each landmark vertex $v$, they precompute all CMSs from $v$ to every vertex that $v$ reaches.
	Besides, for accelerating LCR query processing, they also index each non-landmark vertex with $b$ CMSs. Plus, let $R_L(v)$$=$$\{w\negmedspace\in\negmedspace V| v \negmedspace\stackrel{L}{\leadsto}\negmedspace w\} $, $v$ is a landmark, and $\forall L \subseteq \mathcal{L},\ |L|\leq |\mathcal{L}|/4+1$, they precompute $R_L(v)$ for each landmark $v$, to accelerate false-queries.
	
	\textbf{Discussion.} Even though such techniques are fast on answering LCR queries, they cannot be scaled for indexing the relatively large KGs, due to the prohibitively high indexing time complexity.
	
	\begin{figure}
		\centering
		\renewcommand{\thesubfigure}{}
		\subfigure[(a)$|V|=5000$]{
			\includegraphics[scale = 0.6]{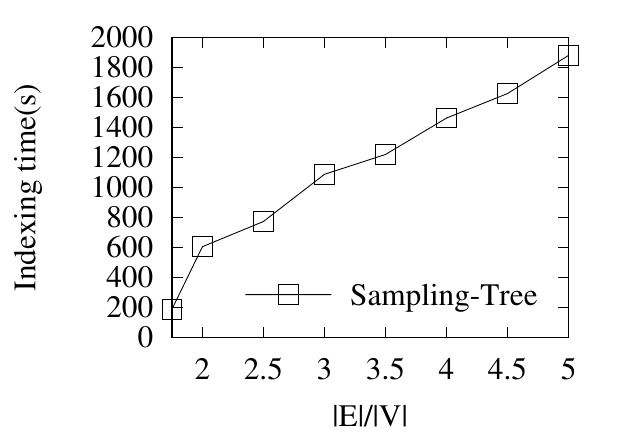}}
		\subfigure[(b)$D=1.5$]{
			\includegraphics[scale = 0.6]{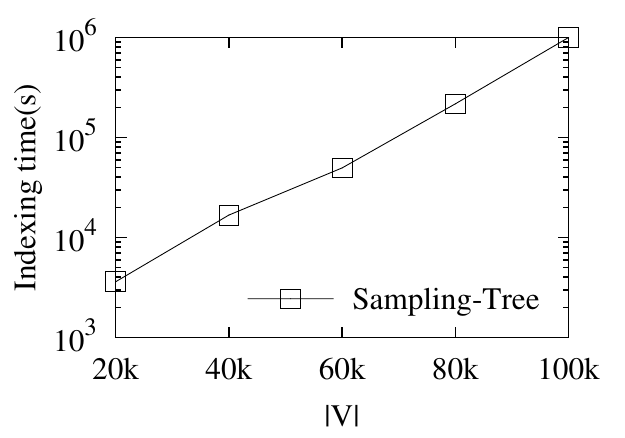}}
		\caption{Sampling-Tree indexing time, where $V$ is the vertex set of the indexed graph, and $D=|E|/|V|$ represents the graph density.}
		\label{fig:J_lcr_indexing}
	\end{figure}
	
	Based on the report in~\cite{Jin2010CLR}~(Table~1 of \cite{Jin2010CLR} and Table~2 of \cite{Jin2010CLR}), the indexing time of spanning tree increases linearly with the increasing of the graph density $D=|E|/|V|$, as depicted in Figure~\ref{fig:J_lcr_indexing}(a). Plus, the indexing time of spanning tree exponentially grows as $|V|$ grows, which is demonstrated in Figure~\ref{fig:J_lcr_indexing}(b). When $|V|=100,000$ and $D=1.5$, the indexing time is about $10^6$s$\approx$$11.57$d. 
	
	\cite{ZouXY0XZ14} represents a baseline of \cite{Valstar17Landmark}, and, in \cite{Valstar17Landmark}, they illustrate that \cite{ZouXY0XZ14} do not scale well on large graphs~($|V|>5.4$k). Besides, the indexing time complexity of \cite{Valstar17Landmark} is $O\big((|V|\log |V|+|E|+2^{|\mathcal{L}|}k+b(|V|-k))|V|2^{|\mathcal{L}|}\big)$, where $k$ is the number of selected landmarks, $b$ is the value of the indexed entries for each non-landmark vertex. In the experimental settings of \cite{Valstar17Landmark}, $k$$=$$1250+\negmedspace\sqrt{|V|}$ and $b$$=$$20$, so the time complexity of \cite{Valstar17Landmark} is equal to $O(|E||V|2^{|L|}+|V|^22^{2|L|})$. 
	
	Therefore, for a relatively large graph, the above methods can hardly be adopted.
	
\section{Improved Uninformed Search}\label{section:improved_u_s}
	
	For a LSCR query $Q$$=$$(s,t,L,S)$ on a KG $G$, UIS has to evaluate each  vertex of each search path $p$ with the function SCck, until $\exists v\negmedspace\in\negmedspace \mathbb{V}(p)\wedge v\negmedspace\in\negmedspace V(S,G)$ exists~($v$ satisfies the substructure constraint $S$). The time complexity of the above operation highly depends on the given substructure constraint $S$. Plus, UIS could not be accelerated by implementing the techniques for LCR queries, as mentioned in Section~\ref{section:onlineSearch:LCR}.
	
	As substructure constraints can be formatted as SPARQL queries~(Section~\ref{section:background}), we could obtain $V(S,G)$ by implementing SPARQL engines~\cite{SurveyOfSPARQLEngines} rather than checking the vertices in $G$ one by one. Then, a LSCR query $Q$ can be solved by recursively verifying the existence of $\exists v\in V(S,G),\ s\negthickspace\stackrel{L}{\leadsto}\negthickspace v\wedge v \negthickspace\stackrel{L}{\leadsto}\negthickspace t$. For example, in Figure~\ref{fig:subgraph_example}, $V(S_0,G_0)$$=$$\{v_1,v_2\}$. Then, for a LSCR query $Q_0\negthickspace=\negthickspace(v_3,v_4,L_0\negthickspace=\negthickspace\{likes,hates,friendOf\},S_0)$, the query process can be done throughout the sequence of verifying $v_3\negmedspace\stackrel{L}{\leadsto}\negmedspace v_1$ and $v_1\negmedspace\stackrel{L}{\leadsto}\negmedspace v_4$.
	
	Thus, in this section, we present the intuition idea of utilizing SPARQL engines on our problem, named UIS$^*$~(Algorithm~\ref{algorithm:SQLCS}). The algorithm description and analysis of UIS$^*$ are demonstrated in Section~\ref{section:sq:algorithm} and in Section~\ref{section:sq:correctness}, respectively.
	
	Note that, in this paper, we treat the elements in $V(S,G)$ as disordered, because existing (top-k) SPARQL engines~\cite{topKSPARQLSara,topKSPARQLShima} only can order the matched subgraphs with their own properties.
		
	\begin{algorithm}[t]
		\caption{UIS$^*$ Algorithm($G$, $Q$)}
		\label{algorithm:SQLCS}
		\begin{algorithmic}[1]
			\renewcommand{\algorithmicrequire}{\textbf{Input:}}
			\renewcommand{\algorithmicensure}{\textbf{Output:}}
			\renewcommand{\algorithmiccomment}[1]{\  #1}
			
			\REQUIRE $G=(V, E, \mathcal{L}, L_{S})$ represents a KG, \\$Q=(s,t,L,S)$ is a LSCR query,\\ $V(S,G)$ is obtained by a SPARQL engine.
			\ENSURE  The answer of Q.
			
			\STATE Let $\mathbb{S}$ be a global stack with an element $s$\label{line:sm:initial_stack} \label{line:sm:add_1}
			\STATE $close[s]\leftarrow F$\label{line:sm:initial_close} \COMMENT{// The initial values in $close$ are set to N}
			
			\STATE \textbf{for} each vertex $v$ in $V(S,G)$ \textbf{do} \label{line:sm:loop_start}
			\STATE \quad \textbf{if} $close[v]=N$ \textbf{then} \label{line:sm:no_v_start}
			\STATE \quad \quad \textbf{if} $v=s$ \textbf{or} $v=t$ \textbf{then}
			\STATE \quad \quad \quad \textbf{return} $Q=\,$LCS$(s,t,L,F)$ \label{line:sm:s_v_1}

			\STATE \quad \quad \textbf{else if} LCS$(s,v,L,F)$ \textbf{then} \label{line:sm:s_v_2}
			\STATE \quad \quad \quad \textbf{if} LCS$(v,t,L,T)$ \textbf{then} \label{line:sm:v_t_1}
			\STATE \quad \quad \quad \quad \textbf{return} $Q=T$ \label{line:sm:no_v_end}
			\STATE \quad \textbf{else if} $close[v]=F$ \textbf{then} \label{line:sm:have_v_start}
			\STATE \quad \quad \textbf{if} LCS$(v,t,L,T)$ \textbf{then}\label{line:sm:v_t_2}
			\STATE \quad \quad \quad \textbf{return} $Q=T$ \label{line:sm:loop_end} \label{line:sm:have_v_end}
			
			\STATE \textbf{return} $Q=F$ 
			
			\rule[2pt]{7.84cm}{0.02em}
			
			\STATE	\textbf{Function:} LCS$(s^*, t^*, L, B)$ \COMMENT{// B is a boolean}\label{line:sm:lcs_start}
			\STATE \quad \textbf{if} $B=T$ \textbf{then} \label{line:sm:lcs:B_T_add_stack}
			\STATE \quad \quad $close[s^*]\leftarrow T$, add $s^*$ into $\mathbb{S}$ \label{line:sm:add_2}
			\STATE \quad \textbf{while} $B=F\wedge \mathbb{S}\neq \phi$ \textbf{or} $B=close[\mathbb{S}.first]=T$ \textbf{do} \label{line:sm:lcs:loop_start}
			\STATE \quad \quad Take an element $u$ from $\mathbb{S}$ \label{line:sm:lcs:takeout}
			\STATE \quad \quad \textbf{for} each edge $e=(u,l,w)$, $l\in L$, incident to $u$ \textbf{do} \label{line:sm:lcs:for}
			\COMMENT{ \quad \quad// We explore $w$ in the following cases}
			\STATE \quad \quad \quad \textbf{if} $B=T\wedge close[w]\neq T$ (\underline{case 1}) \textbf{or}\\
			\quad \quad \quad $B=F\wedge close[w]=N$ (\underline{case 2}) \textbf{then}	\label{line:sm:lcs:cases_new}		
			\STATE \quad \quad \quad \quad  Add $w$ into $\mathbb{S}$, $close[w]\leftarrow B$ \label{line:sm:lcs:loop_end} \label{line:sm:add_4}
			\STATE \quad \quad \quad \textbf{if} $w=t^*$ \textbf{then}
			\STATE \quad \quad \quad \quad \textbf{return true}\label{line:sm:lcs:stproved}
			\STATE \quad Remove $x$ from $\mathbb{S}$, if $close[x]=T$, then \textbf{return false} \label{line:sm:lcs:function_end} 
		\end{algorithmic}
	\end{algorithm}	
	
	\subsection{Algorithm Description} \label{section:sq:algorithm}
	
	In Algorithm~\ref{algorithm:SQLCS}, a global stack $\mathbb{S}$ is initialized with an element $s$~(Line~\ref{line:sm:initial_stack}), and used throughout the algorithm. The surjection $close$ is introduced in Definition~\ref{definition:close}. To be notice, the definition of the UIS$^*$ search tree is same to Difinition~\ref{definition:search_tree}, the correctness of which is proved in Lemma~\ref{lemma:tree}.
	
	Loop~(Lines \ref{line:sm:loop_start}-\ref{line:sm:loop_end}) processes each vertex $v\in V(S,G)$, with a function LCS$(s^*,t^*,L,B)$~(Lines \ref{line:sm:lcs_start}-\ref{line:sm:lcs:function_end}) to evaluate the reachability from a vertex $s^*$ to a vertex $t^*$ under a label constraint $L$. The value of parameter $B$ is in the range of $\{T,F\}$, where, if $s^*\negthickspace\in\negmedspace V(S,G)$, $B$$=$$T$, otherwise, $B$$=$$F$. For clarity, we draw two sequences of the search trees with an assumption~($g,h\negmedspace\in\negmedspace V(S,G)\wedge g\negmedspace\stackrel{L}{\nrightarrow}\negmedspace t$), which are shown in Figure~\ref{fig:sqlcs_search_tree_normal} and Figure~\ref{fig:sqlcs_search_tree_special}, respectively.
	
	For general consideration, supposing $s\notin V(S,G)$ in the demonstration of the UIS$^*$ running process. Thus, in the first iteration of the loop, $\forall v\in V(S,G)$, $close[v]=N$. We process $v$ in Lines~\ref{line:sm:no_v_start}-\ref{line:sm:no_v_end}, to verify the existence of $s\stackrel{L}{\leadsto}v$~(Line~\ref{line:sm:s_v_1} and Line~\ref{line:sm:s_v_2}). Such that, the function LCS is invoked with the parameters $s^*\negthickspace=\negmedspace s$, $t^*\negthickspace=\negmedspace v$, $L\negmedspace=\negmedspace L$ and $B\negmedspace=\negmedspace F$~($s^*\negthickspace=\negmedspace s\notin\negmedspace V(S,G)$). Since $B\negmedspace=\negmedspace F$, function LCS only explores the vertices $w$ with $close[w]\negmedspace=\negmedspace N$~(\textit{case 2}, Line~\ref{line:sm:lcs:cases_new}) and change the state of the explored vertices in $close$ to $F$~(Line~\ref{line:sm:add_4}). In the following illustration, we discuss UIS$^*$ by the results of the first invocation of LCS, where ``true'' and ``false'' relate to $s\stackrel{L}{\leadsto}v$ and $s\stackrel{L}{\nrightarrow}v$, respectively.

	\underline{$s\stackrel{L}{\leadsto}v$.} Assuming $v\negmedspace=\negmedspace g$, Figure~\ref{fig:sqlcs_search_tree_normal}(a) depicts the search tree after the first invocation of LCS. As $s\negmedspace\stackrel{L}{\leadsto}\negmedspace g$ is proved, we start the function LCS, and set the parameters $s^*,t^*,L$ and $B$ to $g,t,L$ and $T$~($s^*\negthickspace=\negmedspace g\negmedspace\in\negmedspace V(S,G)$), respectively, in Line~\ref{line:sm:v_t_1}. Then, with $B$$=$$T$~($s\negmedspace\stackrel{L}{\leadsto}\negmedspace s^*$), we add an element $s^*$ into $\mathbb{S}$ and set $close[s^*]$ to $T$~(Line~\ref{line:sm:lcs:B_T_add_stack}). Then, for each edge $e=(u,l,w)$ that is incident to a taken out element $u$~(Line~\ref{line:sm:lcs:takeout}) with $l\in L$, we additionally explore $w$ in \textit{case~1}~(Line~\ref{line:sm:lcs:cases_new}) with changing $close[w]$ to $B$~(Line~\ref{line:sm:add_4}). 
	
	Since the implemented data structure $\mathbb{S}$ is a stack, in which the elements obey the LIFO~(last-in-first-out) principle, the new added elements in this invocation of LCS are processed first. Certainly, in the second invocation of LCS, if an element $x$ in $\mathbb{S}$ and $close[x]\negmedspace=\negmedspace T$, then $s^*\negthickspace\stackrel{L}{\leadsto}\negmedspace x$, while, if $close[x]\negmedspace=\negmedspace F$, the existence of  $s^*\negthickspace\stackrel{L}{\leadsto}\negmedspace x$ is unknown.
		
	Thus, the loop in LCS runs until $s^*\negmedspace\stackrel{L}{\leadsto}t^*$ is proved, or the state of first element in $\mathbb{S}$ is $F$~($close[\mathbb{S}.first]\neq T$), where: \textbf{(\romannumeral1)} In the former condition, as $s\negmedspace\stackrel{L}{\leadsto}\negmedspace s^*$, $s^*\negthickspace\in\negmedspace V(S,G)$, $s^*\negthickspace\stackrel{L}{\leadsto}\negmedspace t^*$ and $t^*\negmedspace=t$ all exist, we have $s\negmedspace\stackrel{L}{\leadsto}\negmedspace t$ in the main function. Thus, UIS$^*$ stops and returns $Q\negmedspace=\negmedspace T$~(Line~\ref{line:sm:no_v_end} and Line~\ref{line:sm:have_v_end}). \textbf{(\romannumeral2)} In the latter condition, Line~\ref{line:sm:lcs:function_end} removes the elements in $\mathbb{S}$ that have been passed in this invocation~(whose states in $close$ are equal to $T$). For instance, in Figure~\ref{fig:sqlcs_search_tree_normal}(b), the elements $x$ and $y$ are removed from $\mathbb{S}$, as the node $x^T$ and the node $y^T$ are already in the search tree. Furthermore, Figure~\ref{fig:sqlcs_search_tree_normal}(c) describes a possible later sequence of Figure~\ref{fig:sqlcs_search_tree_normal}(b), as $h\negmedspace\in\negmedspace V(S,G)$.	
	
	\begin{figure}[t]
		\centering
		\renewcommand{\thesubfigure}{}
		\subfigure[(a)]{
			\includegraphics[scale = 0.37]{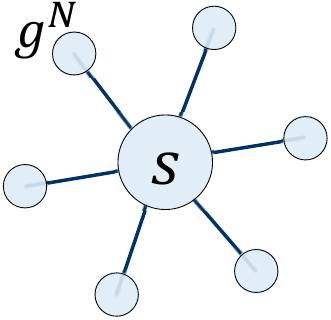}}
		\subfigure[(b)]{
			\includegraphics[scale = 0.37]{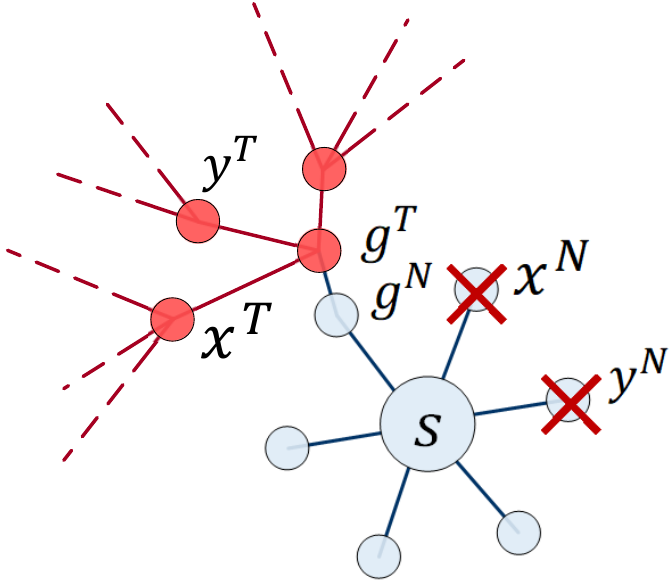}}
		\subfigure[(c)]{
			\includegraphics[scale = 0.37]{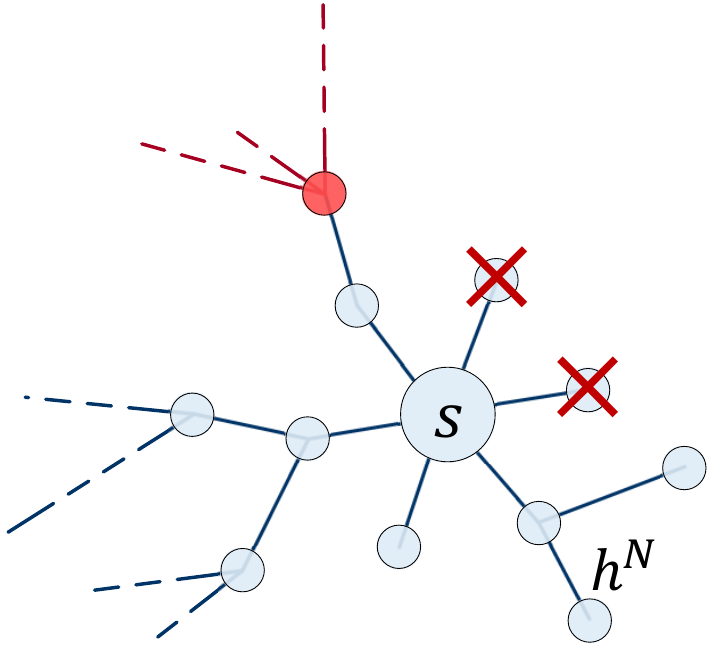}}
		\caption{A sequence of search trees generated by one execution of UIS$^*$.}
		\label{fig:sqlcs_search_tree_normal}
	\end{figure}
	
	\begin{figure}[t]
		\centering
		\renewcommand{\thesubfigure}{}
		\subfigure[(a)]{
			\includegraphics[scale = 0.37]{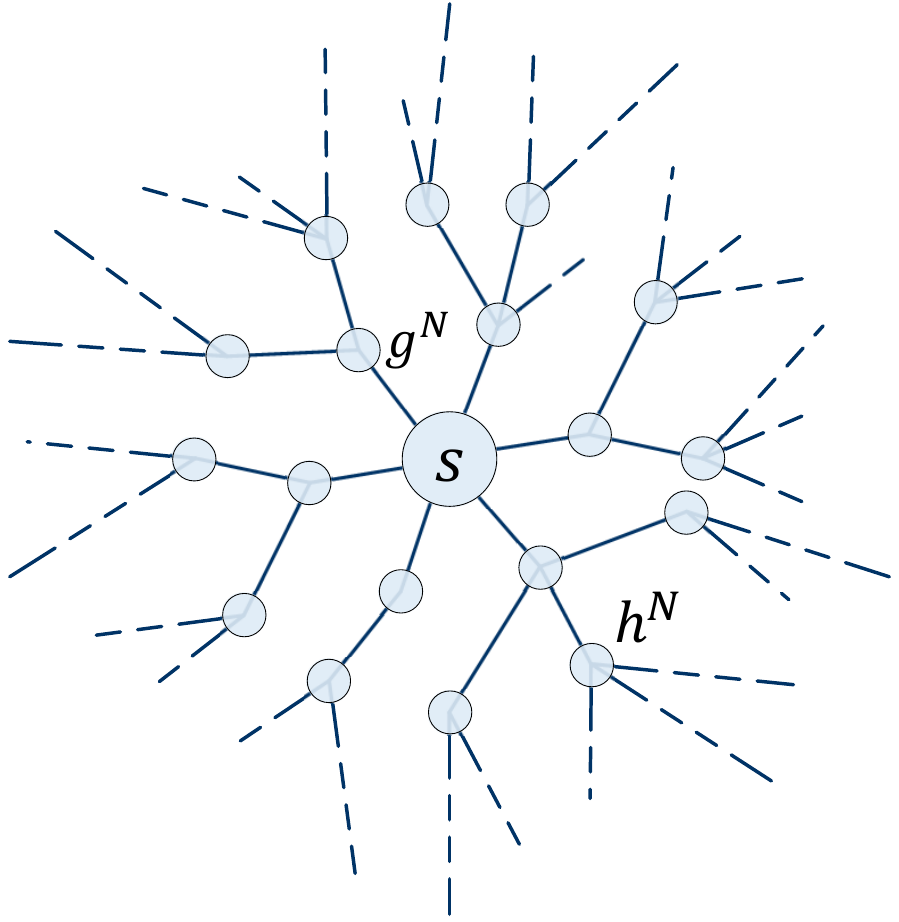}}
		\subfigure[(b)]{
			\includegraphics[scale = 0.37]{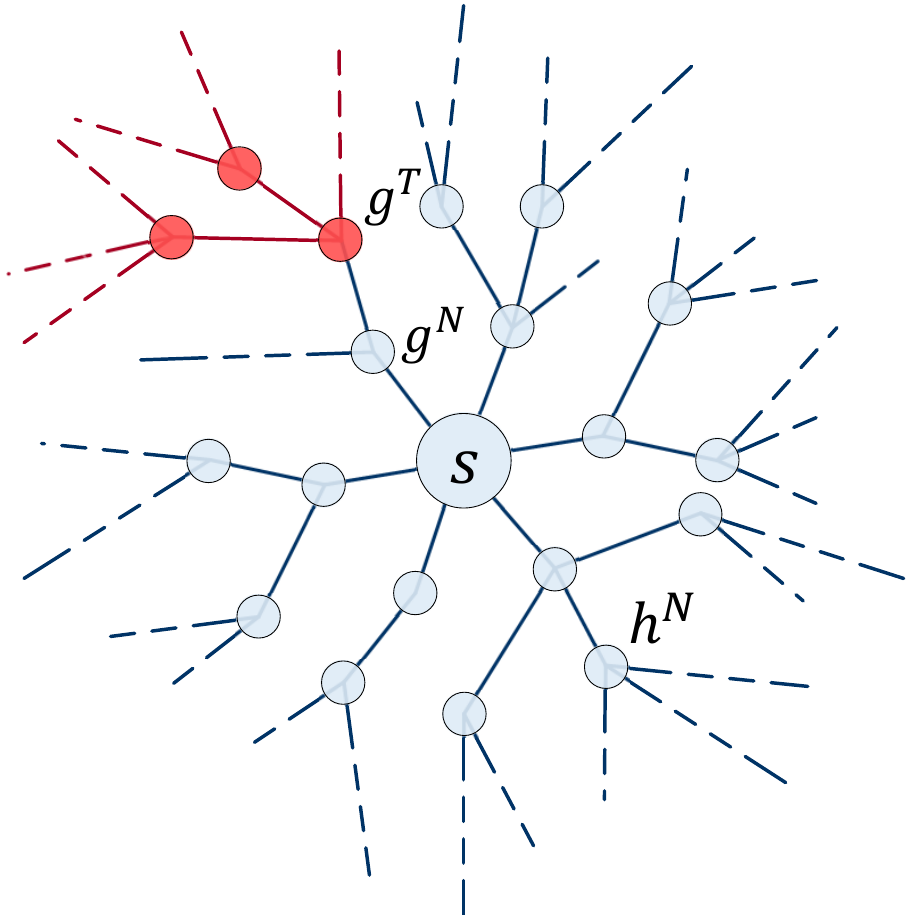}}
		\caption{The sequence is generated when the first invocation of LCS returns false.}
		\label{fig:sqlcs_search_tree_special}
	\end{figure}
	
	\underline{$s\stackrel{L}{\nrightarrow}v$.} Since $B\negmedspace=\negmedspace F$, after the first invocation of LCS, $\forall w\in V$, if $s\stackrel{L}{\leadsto}w$, $close[w]=F$, and, if $s\stackrel{L}{\nrightarrow}w$, $close[w]$$=$$N$, and the stack $\mathbb{S}$ is empty. In other words, UIS$^*$ has traversed all the vertices that $s$ reaches to. Figure~\ref{fig:sqlcs_search_tree_special}(a) demonstrates the search tree of UIS$^*$ at this time. In a later iteration of loop~(Lines \ref{line:sm:loop_start}-\ref{line:sm:loop_end}), only when $close[v]\negthickspace=\negthickspace F$ (Line~\ref{line:sm:have_v_start}), we evaluate the existence of  $v\stackrel{L}{\leadsto}t$~(Line~\ref{line:sm:v_t_2}). A schematic view of this sequence is depicted in Figure~\ref{fig:sqlcs_search_tree_special}(b). Based on the above analysis, we present a theorem to show that the order of processing the elements in $V(S,G)$ dominates the efficiency of UIS$^*$.
	
	\begin{theorem}
		\label{theorem:sqlcs_weakness}
		If, in the $j^{th}$ invocation of LCS with $B=F$, LCS returns false, then, after the $j^{th}$ invocation, $\forall w\in V\wedge s\stackrel{L}{\leadsto}w$, $close[w]\neq N$.
	\end{theorem}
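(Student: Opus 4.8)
The plan is to prove the statement directly, by isolating a structural invariant of the cumulative $B=F$ exploration and then arguing by contradiction. The guiding picture is that all the $B=F$ invocations jointly realise a single label-constrained traversal rooted at $s$ on the shared stack $\mathbb{S}$: such an invocation expands a vertex $w$ only through \textit{case 2} (Line~\ref{line:sm:lcs:cases_new}), i.e. exactly when $close[w]=N$, after which it sets $close[w]=F$ (Line~\ref{line:sm:add_4}). Consequently a vertex that has left the state $N$ never returns to it, and the non-$N$ vertices are precisely those already discovered by the traversal that began at $s$ with $close[s]=F$ (Line~\ref{line:sm:initial_close}).

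First I would pin down the termination condition: for $B=F$ the loop guard (Line~\ref{line:sm:lcs:loop_start}) reduces to ``$\mathbb{S}\neq\phi$'', and the only early exit is the success return upon locating $t^*$ (Line~\ref{line:sm:lcs:stproved}); hence an invocation that returns false must have emptied the stack, so $\mathbb{S}=\phi$ holds at the close of the $j^{th}$ invocation. Second I would record the expansion invariant: whenever a vertex $u$ is taken from $\mathbb{S}$ (Line~\ref{line:sm:lcs:takeout}) and its loop over the $L$-labelled out-edges (Line~\ref{line:sm:lcs:for}) runs to completion, every such neighbour $w$ of $u$ satisfies $close[w]\neq N$ afterwards, since any $w$ still carrying $N$ is caught by \textit{case 2}. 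I would then finish by contradiction: if some $w$ with $s\negmedspace\stackrel{L}{\leadsto}\negmedspace w$ still had $close[w]=N$, then along a witnessing $L$-path $s=u_0,\dots,u_m=w$ there is a first index $i$ with $close[u_i]\neq N$ and $close[u_{i+1}]=N$; being non-$N$, $u_i$ was pushed onto $\mathbb{S}$ at some moment, and since $\mathbb{S}$ is now empty it must have been popped and fully expanded, whence the invariant forces $close[u_{i+1}]\neq N$, a contradiction.

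The delicate part is justifying the expansion invariant in the presence of the \emph{shared} stack and the interleaved $B=T$ invocations. Two issues demand care. First, a $B=T$ invocation may relabel a vertex from $F$ to $T$ and prune entries through Line~\ref{line:sm:lcs:function_end}, so I must verify that it never resets any $close$ value back to $N$ and never discards from $\mathbb{S}$ a vertex whose still-$N$ neighbours remain reachable; concretely, the $T$-phase must leave the $F$-frontier intact for the $B=F$ traversal to resume, which is the stack-restoration behaviour illustrated after Figure~\ref{fig:sqlcs_search_tree_normal}. Second, a $B=F$ invocation that returns \emph{true} leaves its for-loop early at Line~\ref{line:sm:lcs:stproved}, so the vertex expanded at that moment need not have all of its out-edges scanned; I would observe that such truncations occur only on successful invocations, whereas the hypothesis concerns a false $j^{th}$ invocation, and then must show that a truncated vertex is either re-expanded later or its pending neighbours are reached via an alternative frontier path before $\mathbb{S}$ can empty. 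Establishing this last claim---that a false $B=F$ invocation cannot terminate with $\mathbb{S}=\phi$ while some $s$-reachable vertex is still undiscovered---is, I expect, the main obstacle, and it is precisely where the exact pop-and-restore semantics of $\mathbb{S}$ across invocations must be invoked rather than the idealised single-traversal picture.
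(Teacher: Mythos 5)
Your outline formalizes exactly the argument the paper leaves implicit: the paper's entire proof of this theorem is the sentence ``that can be proved by the above discussion,'' pointing back to the description in Section~\ref{section:sq:algorithm} of the cumulative $B=F$ traversal emptying $\mathbb{S}$. Your first two steps (for $B=F$ the loop guard degenerates to $\mathbb{S}\neq\phi$, so a false return means an empty stack; plus the expansion invariant and the first-undiscovered-vertex contradiction along a witnessing $L$-path) are the correct skeleton, and they are sound. However, the step you yourself flag as the ``main obstacle'' --- that a truncated vertex's pending neighbours are always recovered before $\mathbb{S}$ can empty --- is not merely delicate: it is false for the pseudocode as written, so neither your proof nor, in fact, the theorem itself can be completed without amending the algorithm.

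Concretely, take $V(S,G)=\{v_1,v_2\}$ processed in that order, edges $s\to a$, $a\to v_1$, $a\to c$, $v_1\to b$ (all labels in $L$), $t$ unreachable from $s$, and suppose the loop at Line~\ref{line:sm:lcs:for} scans $a\to v_1$ before $a\to c$. The first invocation LCS$(s,v_1,L,F)$ pops $s$, pushes $a$; pops $a$, pushes $v_1$, and returns true at Line~\ref{line:sm:lcs:stproved} \emph{before} scanning $a\to c$, so $a$ has left $\mathbb{S}$ with an unexplored out-edge and $c$ is never discovered. The invocation LCS$(v_1,t,L,T)$ then explores $v_1$ and $b$, empties $\mathbb{S}$, and returns false. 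The next iteration of the main loop calls LCS$(s,v_2,L,F)$, which finds $\mathbb{S}=\phi$, skips its while loop, and returns false at Line~\ref{line:sm:lcs:function_end}: this is a $B=F$ invocation returning false after which $close[c]=N$ although $s\stackrel{L}{\leadsto}c$, contradicting the theorem. The truncated vertex $a$ is never re-pushed, and no alternative path to $c$ exists, which is precisely the escape hatch your plan hoped for. The statement becomes provable (by exactly your invariant, ``every vertex that leaves $\mathbb{S}$ has been fully expanded'') only if the algorithm is repaired, e.g.\ by pushing $u$ back onto $\mathbb{S}$ before the early return at Line~\ref{line:sm:lcs:stproved}, or by deferring the $w=t^*$ test until $u$'s edge loop has completed. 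The paper's one-line proof overlooks this entirely; on this point your blind analysis is sharper than the paper's own.
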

	
	\begin{proof}
		That can be proved by the above discussion.
	\end{proof}

	%

	
	\subsection{Analysis} \label{section:sq:correctness}
	In this section, we analysis the correctness and time complexity of UIS$^*$($G$,$Q$), where $G=(V, E, \mathcal{L}, L_{S})$ is a KG and $Q=(s,t,L,S)$ is a LSCR query. Plus, we prove that the search tree of UIS$^*$ can be formalized as that of UIS in Lemma~\ref{lemma:tree}.

	\setcounter{lemma}{1}	
	\begin{lemma}
		\label{lemma:s_to_everynode}
		Assuming $s\negthickspace\stackrel{L}{\leadsto}\negthickspace s$, $\forall\negthinspace v\negthinspace\in\negthinspace V, s\negthickspace\stackrel{L}{\leadsto}\negthickspace v, iff\negthinspace, close[v]\negthinspace\neq\negthinspace N$.
	\end{lemma}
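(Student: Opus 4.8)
The plan is to split the biconditional into its two halves and treat them with different tools: the \emph{soundness} half, $close[v]\neq N\Rightarrow s\stackrel{L}{\leadsto}v$, as an invariant maintained throughout the run of Algorithm~\ref{algorithm:SQLCS}, and the \emph{completeness} half, $s\stackrel{L}{\leadsto}v\Rightarrow close[v]\neq N$, as a property that becomes true once the search emanating from $s$ has been exhausted. A preliminary observation, read straight off the pseudocode and Definition~\ref{definition:close}, does most of the bookkeeping: the label $close[v]$ is moved away from $N$ only at Line~\ref{line:sm:initial_close} (for $s$), at Line~\ref{line:sm:add_2} (when $B=T$), and at Line~\ref{line:sm:add_4} (inside the search loop), and at every one of these sites the affected vertex is \emph{simultaneously} pushed onto the global stack $\mathbb{S}$. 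Hence every vertex that ever carries a non-$N$ label was, at some moment, an element of $\mathbb{S}$; this link between ``labelled'' and ``on the stack'' is what makes the expansion argument possible.

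For soundness I would induct on the sequence of writes to $close$, carrying the joint invariant that every element ever placed on $\mathbb{S}$ is $L$-reachable from $s$. The base case is $close[s]\leftarrow F$ at Line~\ref{line:sm:initial_close}, justified by the hypothesis $s\stackrel{L}{\leadsto}s$. For the inductive step there are two write sites. At Line~\ref{line:sm:add_4} the vertex $w$ is reached from a stack element $u$ across an edge $(u,l,w)$ with $l\in L$; by the induction hypothesis $s\stackrel{L}{\leadsto}u$, and prolonging that path by this single $L$-edge yields $s\stackrel{L}{\leadsto}w$. At Line~\ref{line:sm:add_2} the vertex $s^{*}$ is the first argument of an LCS call issued with $B=T$; inspecting its call sites (Lines~\ref{line:sm:v_t_1} and~\ref{line:sm:v_t_2}) shows such a call is made only after $s\stackrel{L}{\leadsto}s^{*}$ has already been certified, either because the preceding $B=F$ call LCS$(s,s^{*},L,F)$ returned true, or because $close[s^{*}]=F$ already held (which by the invariant itself already means $s\stackrel{L}{\leadsto}s^{*}$). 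Every assignment therefore preserves the invariant, and the soundness direction follows.

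The completeness direction is the harder half and, I expect, the main obstacle. I would argue by contradiction along a fixed $L$-path. Suppose $s\stackrel{L}{\leadsto}v$ while $close[v]=N$; take an $L$-path $s=w_{0},w_{1},\dots,w_{k}=v$ and let $i$ be the least index with $close[w_{i}]\neq N$ and $close[w_{i+1}]=N$, which exists since $close[w_{0}]=close[s]\neq N$. By the preliminary observation $w_{i}$ was pushed onto $\mathbb{S}$, and when $w_{i}$ is eventually taken out at Line~\ref{line:sm:lcs:takeout} its outgoing $L$-edge $(w_{i},l,w_{i+1})$ is examined at Line~\ref{line:sm:lcs:for}; because $close[w_{i+1}]=N$, the \emph{case~2} test at Line~\ref{line:sm:lcs:cases_new} fires and Line~\ref{line:sm:add_4} sets $close[w_{i+1}]\neq N$, contradicting the choice of $i$. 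The delicate point—and where the proof must be careful—is guaranteeing that $w_{i}$ is actually popped and expanded rather than stranded on the stack: a $B=F$ invocation of LCS drains the whole stack (its guard at Line~\ref{line:sm:lcs:loop_start} is merely $\mathbb{S}\neq\phi$) unless it returns true early, whereas a $B=T$ invocation consumes only the $T$-labelled prefix of $\mathbb{S}$ and discards it at Line~\ref{line:sm:lcs:function_end}, so an $F$-labelled frontier vertex such as $w_{i}$ can be expanded only during a $B=F$ search.

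Accordingly, the forward implication coincides exactly with the conclusion of Theorem~\ref{theorem:sqlcs_weakness}: it holds as soon as a $B=F$ invocation of LCS from $s$ has run to completion and returned false, since at that point the argument above leaves no $L$-reachable vertex with label $N$. This is precisely the state in which the statement is to be applied in the correctness analysis of UIS$^{*}$ (in particular when the main loop falls through to its final $Q=F$ return). Combining this with the soundness invariant established above gives the biconditional, and I would finish by remarking that soundness needs only the structure of the write sites while completeness is where the stack discipline and the interleaving of $B=F$ and $B=T$ searches must be handled with care.
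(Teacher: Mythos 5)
Your proposal is correct, and it takes a genuinely different — and more careful — route than the paper. The paper's own proof is an induction on the number of LCS invocations which asserts that the loop at Lines~\ref{line:sm:lcs:loop_start}--\ref{line:sm:lcs:loop_end} ``maintains a loop invariant that is the same as the statement,'' i.e.\ it treats the full biconditional as a pointwise invariant; it never separates the two directions, and it leans on the informal algorithm description of Section~\ref{section:sq:algorithm} for both. Your decomposition is sharper: the soundness half ($close[v]\neq N\Rightarrow s\stackrel{L}{\leadsto}v$) is proved as a genuine invariant by induction over the three write sites (Lines~\ref{line:sm:initial_close}, \ref{line:sm:add_2}, \ref{line:sm:add_4}), each of which pushes the affected vertex onto $\mathbb{S}$ at the same moment, while the completeness half ($s\stackrel{L}{\leadsto}v\Rightarrow close[v]\neq N$) is proved by a least-index argument along an $L$-path and is claimed only at quiescent points, namely once a $B=F$ invocation has drained $\mathbb{S}$ and returned false. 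This restriction is not a weakness of your proof but a correction of the paper's: the biconditional cannot hold as a pointwise loop invariant (after the first iteration of the first invocation, vertices two hops from $s$ are still labelled $N$ yet $L$-reachable, and the same failure occurs when a $B=F$ call returns true early and strands unexpanded $F$-entries on the stack), so the paper's claimed invariant is too strong as literally stated. Your version makes explicit exactly what each downstream use needs: Lemma~\ref{lemma:tree} (``at any time of one execution'') only requires soundness, and the final $Q=F$ return of UIS$^*$ only requires completeness at a point where the last $B=F$ call has returned false, which is precisely Theorem~\ref{theorem:sqlcs_weakness}. Two small refinements would tighten your sketch: at the moment $w_i$ is expanded, the invocation may have $B=T$, in which case it is \emph{case 1} rather than \emph{case 2} at Line~\ref{line:sm:lcs:cases_new} that fires, but either case leaves $close[w_{i+1}]\neq N$, and non-$N$ labels are absorbing; and for the ``stranded vertex'' concern one should note that Line~\ref{line:sm:lcs:function_end} discards only $T$-labelled entries, whose vertices were necessarily already expanded inside the $B=T$ invocation that relabelled and re-pushed them, so no vertex ever loses its chance to be expanded.
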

	\begin{proof}
		$\forall v\in V$, the value of $close[v]$ is initialized to $N$, which can only be changed in Line~\ref{line:sm:add_1} and function LCS~(Lines \ref{line:sm:lcs_start}-\ref{line:sm:lcs:function_end}). 
		Thus, with the assumption, we prove the statement by induction on the LCS invocation times:
		
		\textbf{First invocation.} According to the algorithm description~(Section~\ref{section:sq:algorithm}), loop~(Lines~\ref{line:sm:lcs:loop_start}-\ref{line:sm:lcs:loop_end}) maintains a loop invariant in this invocation that is same to the statement.  
		
		\textbf{k$^{th}$ invocation.} Assuming the statement is correct in the $(k-1)^{th}$ invocation, the above loop still holds the statement as a loop invariant in the k$^{th}$ invocation: Since $s\negmedspace\stackrel{L}{\leadsto}\negmedspace s^*$ must have been proved, the invariant exists before the first iteration of this loop. Plus, as we only explore the vertices in the mentioned two cases, the invariant is correct during the later iterations.
	\end{proof}

	Then, Lemma~\ref{lemma:tree} states that the search tree of UIS$^*$($G$,$Q$) is as defined in Definition~\ref{definition:search_tree}.
	
	\begin{lemma}
		\label{lemma:tree}
		All the search paths of  UIS$^*$ constitute \textbf{one search tree} $\mathbb{T}$, at any time of one execution, as we defined in Definition~\ref{definition:search_tree}.	
	\end{lemma}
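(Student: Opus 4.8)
The plan is to prove the statement by induction on the events that create nodes of $\mathbb{T}$, carrying the invariant that after every such event the created nodes form a single tree rooted at $s$ in which each vertex $v\in V$ owns at most one $F$-node $v^F$ and at most one $T$-node $v^T$. First I would observe that a node can only be born at three places: the initialization of $\mathbb{S}$ and $close$ (Lines~\ref{line:sm:add_1}--\ref{line:sm:initial_close}), the $B=T$ entry of LCS (Line~\ref{line:sm:add_2}), and the exploration step inside LCS (Line~\ref{line:sm:add_4}). The base case is immediate: after initialization the only node is $s^F$, which is taken as the root, so the invariant holds.

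For an exploration step (Line~\ref{line:sm:add_4}) I would argue that the new node is attached to an already-present node of the same state, so both connectivity and acyclicity are preserved. The step is reached only after popping some $u$ from $\mathbb{S}$ at Line~\ref{line:sm:lcs:takeout}; by the loop guard of Line~\ref{line:sm:lcs:loop_start} for the case $B=T$, and by the frontier cleanup of Line~\ref{line:sm:lcs:function_end} that restores $\mathbb{S}$ to its $F$-frontier after every $T$-exploration for the case $B=F$, the popped element $u$ always carries exactly the current state $B$, so the node $u^B$ exists and becomes the parent of the freshly created $w^B$. Moreover the two cases of Line~\ref{line:sm:lcs:cases_new} guarantee single creation per state: case~2 fires only when $close[w]=N$, so $w^F$ is produced once, and case~1 fires only when $close[w]\neq T$, so $w^T$ is produced once. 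Hence each vertex accrues at most the two nodes $v^F,v^T$ demanded by Definition~\ref{definition:search_tree}.

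The delicate point, which I expect to be the main obstacle, is the $B=T$ entry at Line~\ref{line:sm:add_2}, since it begins a fresh traversal on the shared global stack and must not spawn a component disconnected from $s$. Here I would invoke Lemma~\ref{lemma:s_to_everynode}: LCS is entered with $B=T$ and $s^*=v$ only after $s\stackrel{L}{\leadsto}v$ has been confirmed at Line~\ref{line:sm:s_v_2} or Line~\ref{line:sm:have_v_start}, whence $close[v]=F$ and the node $v^F$ already lies in the tree. The new node $v^T$ can therefore be hung as a child of $v^F$, and the whole $T$-subtree grown beneath $v^T$ reaches the root through $v^F\to\dots\to s$. Because case~1 declines to recreate any $w^T$ that an earlier $T$-exploration already produced, $T$-nodes are never duplicated and each retains a unique parent, so the union of the $F$-tree with all the $T$-subtrees remains one tree.

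Finally I would note that no step of UIS$^*$ ever deletes a node: Line~\ref{line:sm:lcs:function_end} only pops exhausted frontier elements off $\mathbb{S}$ and never touches $\mathbb{T}$. Thus the invariant, once re-established at each node birth, persists between births as well, and $\mathbb{T}$ is a single search tree in the sense of Definition~\ref{definition:search_tree} at any time of the execution, which is precisely the claim.
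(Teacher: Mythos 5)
Your proof is correct and rests on exactly the two ingredients the paper's one-line proof cites---Lemma~\ref{lemma:s_to_everynode} and the discipline of the $close$ surjection (Definition~\ref{definition:close})---so it is essentially the paper's argument, fully worked out. The paper merely asserts that the lemma ``is valid based on'' these; your induction on node-creation events (initialization, the $B=T$ entry of LCS, and the exploration step), together with the observation that the loop guard and the stack cleanup at the end of a failed $T$-invocation force every popped element to carry the current state $B$, supplies precisely the details that assertion leaves implicit.
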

	\begin{proof}
		This lemma is valid based on Lemma~\ref{lemma:s_to_everynode} and Definition~\ref{definition:close}.
	\end{proof}

	
	Finally, the correctness and time complexity of UIS$^*$ are illustrated in Theorem~\ref{theorem:SQLCS} and Theorem~\ref{theorem:SQLCS_complexity}, respectively. 
	\setcounter{theorem}{3}
	\begin{theorem}
		\label{theorem:SQLCS}
		$Q$ is a true query, iff, UIS$^*$ returns true.
	\end{theorem}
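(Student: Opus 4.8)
The plan is to reduce the correctness of UIS$^*$ to the reachability characterization of Theorem~\ref{theorem:LSCR} and then verify that the interleaved invocations of LCS over $V(S,G)$ faithfully implement that characterization. By Theorem~\ref{theorem:LSCR}, $Q$ is true iff there is a path $p\in P(s,t)$ with $\mathbb{L}(p)\subseteq L$ passing through some $u$ that satisfies $S$; splitting $p$ at such a $u$ shows this is equivalent to the existence of a \emph{witness} $v\in V(S,G)$ with $s\stackrel{L}{\leadsto}v$ and $v\stackrel{L}{\leadsto}t$ (the converse direction re-concatenates the two subpaths, which is legitimate because the reachability of Section~\ref{section:background} is defined over walks that may repeat vertices). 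So it suffices to prove that UIS$^*$ returns true iff such a witness exists.

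The crucial intermediate fact is that LCS acts as a correct label-constrained reachability oracle on the shared state. Using Lemma~\ref{lemma:s_to_everynode} and Lemma~\ref{lemma:tree}, I would establish two claims. First, a call LCS$(s,t^*,L,F)$ returns true iff $s\stackrel{L}{\leadsto}t^*$, because the $B=F$ branch performs an ordinary DFS from $s$ that only expands $N$-vertices and marks them $F$, and by Lemma~\ref{lemma:s_to_everynode} the set $\{v:close[v]\neq N\}$ coincides with the $L$-reachable set of $s$. Second, a call LCS$(v,t,L,T)$ made when $s\stackrel{L}{\leadsto}v$ returns true iff $v\stackrel{L}{\leadsto}t$: the $B=T$ branch pushes $v$ with $close[v]=T$ and, under the stack-top guard $close[\mathbb{S}.first]=T$ of Line~\ref{line:sm:lcs:loop_start}, runs a DFS that expands every frontier vertex $w$ with $close[w]\neq T$, thereby \emph{upgrading} previously $F$- or $N$-marked vertices to $T$; since $\mathbb{S}$ is LIFO, this DFS exhausts the entire $L$-successor set of $v$ before the guard fails, so it reaches $t$ at Line~\ref{line:sm:lcs:stproved} exactly when $v\stackrel{L}{\leadsto}t$. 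A companion invariant I would record here is that after any $B=T$ call returning false the set $\{v:close[v]=T\}$ is closed under $L$-successors, and that $close[t]=T$ can never occur without an immediate \textbf{return true}.

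With these oracles the two directions become a case analysis on $close[v]$ at the iteration processing each $v\in V(S,G)$. For soundness ($\Leftarrow$), every \textbf{return} $T$ is immediately preceded either by LCS$(s,t,L,F)=\mathrm{true}$ with $v\in\{s,t\}\cap V(S,G)$, or by $s\stackrel{L}{\leadsto}v$ (established by LCS$(s,v,L,F)$ at Line~\ref{line:sm:s_v_2}, or by $close[v]=F$ via Lemma~\ref{lemma:s_to_everynode}) together with LCS$(v,t,L,T)=\mathrm{true}$; in either case a witness exists, so $Q$ is true. For completeness ($\Rightarrow$), fix a witness $v^\ast$ and consider the iteration reaching it: if $close[v^\ast]=N$ the algorithm runs LCS$(s,v^\ast,L,F)=\mathrm{true}$ and then LCS$(v^\ast,t,L,T)=\mathrm{true}$ and returns $T$ (the endpoint sub-case $v^\ast\in\{s,t\}$ returns LCS$(s,t,L,F)=\mathrm{true}$); if $close[v^\ast]=F$, Line~\ref{line:sm:v_t_2} runs LCS$(v^\ast,t,L,T)=\mathrm{true}$ and returns $T$; and the case $close[v^\ast]=T$ cannot arise without a prior \textbf{return true}, since by the closure invariant $v^\ast\stackrel{L}{\leadsto}t$ would already have forced $close[t]=T$.

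The main obstacle is establishing the second claim and its closure invariant, i.e.\ proving that reusing the \emph{single global} stack $\mathbb{S}$ and the persistent $close$ array across many LCS calls never corrupts reachability information. Concretely, I must show that the guard $close[\mathbb{S}.first]=T$ stops a $T$-search exactly when its source's $L$-reachable set is exhausted rather than prematurely, and that early termination of a $B=F$ search upon meeting its target (Theorem~\ref{theorem:sqlcs_weakness}) still leaves $\{close\neq N\}$ consistent with $s$-reachability for later iterations. These stack-discipline arguments, anchored on Lemma~\ref{lemma:s_to_everynode} and the search-tree structure of Lemma~\ref{lemma:tree}, are where the real work lies; the reduction and the final case analysis are then routine.
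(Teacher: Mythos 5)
Your overall route is the same as the paper's: reduce the theorem via Theorem~\ref{theorem:LSCR} to the existence of a witness $v\in V(S,G)$ with $s\stackrel{L}{\leadsto}v$ and $v\stackrel{L}{\leadsto}t$, lean on Lemma~\ref{lemma:s_to_everynode} and Lemma~\ref{lemma:tree} for the meaning of the shared $close$ state, and finish with a case analysis on how the main loop treats each witness. The difference is one of explicitness: the paper compresses all of this into the bare assertion that nodes $v^T$ and $t^T$ appear in the search tree $\mathbb{T}$, whereas you decompose it into two ``LCS is a reachability oracle'' claims plus a closure invariant for failed $B=T$ calls. That decomposition is the right way to make the paper's argument precise, and your handling of the $close[v^\ast]=T$ case (a failed $B=T$ call leaves a $T$-set closed under $L$-successors and excluding $t$) addresses a point the paper never mentions.

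However, your proposal has a genuine gap, and it is exactly the part you yourself flag as ``where the real work lies'': the claim that LCS$(s,\cdot,L,F)$ remains a \emph{complete} reachability oracle across invocations. For the pseudocode as written this claim cannot be proved, because it is false. The \textbf{return true} at Line~\ref{line:sm:lcs:stproved} fires after $u$ has already been taken out of $\mathbb{S}$ and while $u$'s edge list is only partially scanned; since case~2 of Line~\ref{line:sm:lcs:cases_new} only expands vertices with $close=N$, no later $B=F$ invocation can recover $u$'s unscanned edges (a $B=T$ call could, but only if its search happens to reach $u$). Concretely, take edges $s\to u$, $u\to a$, $u\to b$, $b\to t$ over a single label $l$, with $L=\{l\}$ and $V(S,G)=\{a,b\}$ processed in that order, and suppose edge $(u,l,a)$ is scanned before $(u,l,b)$: LCS$(s,a,L,F)$ returns true in the middle of expanding $u$; LCS$(a,t,L,T)$ fails and leaves $\mathbb{S}$ empty; then LCS$(s,b,L,F)$ sees an empty stack and returns false, so UIS$^*$ outputs $F$ although $Q$ is true via $s\to u\to b\to t$. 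The same trace falsifies Theorem~\ref{theorem:sqlcs_weakness}, on which you also lean. So your Claim~1 and its companion invariant can only be established after repairing the algorithm (e.g., re-pushing $u$ before the early return, or testing $w=t^*$ only once $u$'s edge list is exhausted), and any complete proof must say so. To be fair, the paper's own one-line proof glosses over precisely this issue, so your plan is no less rigorous than the published argument --- but as submitted, neither yours nor the paper's constitutes a proof of the statement.
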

	\begin{proof}
		($\Rightarrow$) If $Q$ is a true query, $s \stackrel{L,S}{\leadsto} t$. Then, $\exists p\in P(s,t)$, $\mathbb{L}(p)\subseteq L$ and $\exists u\in \mathbb{V}(p)\wedge u\in V(S,G)$~(Theorem~\ref{theorem:LSCR}). Supposing $p$$=$$<s, e_0, \dots, v, \dots, e_k, t>$, $\exists v^T$ in $\mathbb{T}$ and $t^T$ in $\mathbb{T}$ both exist. Thus, UIS$^*$ returns true. ($\Leftarrow$) If UIS$^*$($G$,$Q$) returns true, a path exists in $\mathbb{T}$ that satisfies both the label constraint $L$ and the substructure constraint $S$. Thus, $Q$ is a true query.
	\end{proof}

	
	\begin{theorem}
		\label{theorem:SQLCS_complexity}
		The time complexity of UIS$^*$ is $O(|V|+|E|)$.
	\end{theorem}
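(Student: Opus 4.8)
The plan is to show that, although LCS is invoked up to $|V(S,G)|$ times, the combined cost of all invocations amounts to at most two traversals of $G$, so the whole procedure costs $O(|V|+|E|)$ rather than the naive product $O\big(|V(S,G)|\times(|V|+|E|)\big)$. I would first stress that $V(S,G)$ is supplied by an external SPARQL engine (an explicit input of Algorithm~\ref{algorithm:SQLCS}), so its computation lies outside the cost we account for; the outer loop (Lines~\ref{line:sm:loop_start}--\ref{line:sm:loop_end}) then contributes only $O(|V(S,G)|)=O(|V|)$ for iteration control and the constant-work condition checks.

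The heart of the argument is an amortized bound on the shared data structures $\mathbb{S}$ and $close$, which persist across all LCS calls. By Definition~\ref{definition:close} the map $close$ only advances a vertex along $N\to F$, $N\to T$, or $F\to T$, and never decreases; hence each vertex changes its $close$-value at most twice during one execution of UIS$^*$. A vertex $w$ is pushed onto $\mathbb{S}$ (Line~\ref{line:sm:add_2} or Line~\ref{line:sm:add_4}) exactly when its state is advanced---to $F$ in case~2 or to $T$ in case~1---so each vertex is pushed at most twice, and the total number of pushes, pops, and the Line~\ref{line:sm:lcs:function_end} removals is $O(|V|)$. This is precisely the statement, via Lemma~\ref{lemma:tree} and Definition~\ref{definition:search_tree}, that every execution of UIS$^*$ produces a single search tree $\mathbb{T}$ carrying at most the two nodes $v^F,v^T$ per vertex, i.e. $|N^{\mathbb{T}}|\le 2|V|$.

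I would then charge the edge scans. Each time a vertex $u$ is popped (Line~\ref{line:sm:lcs:takeout}) its incident edges are examined once (Line~\ref{line:sm:lcs:for}), at cost $O(\deg(u))$; since $u$ is popped at most as often as it is pushed, i.e. at most twice, the total edge work is $\sum_{u\in V}2\deg(u)=O(|E|)$. Crucially there is no per-vertex call to $SCck$ here (unlike UIS), because satisfaction of $S$ has been resolved once and for all by the precomputed set $V(S,G)$; this is why the $O\big(|V|\times(|V_S|+|E_S|+|E_?|)\big)$ term of Theorem~\ref{theorem:UIS_complexity} disappears. Combining the $O(|V|)$ loop overhead, the $O(|V|)$ stack operations, and the $O(|E|)$ edge scans yields the claimed $O(|V|+|E|)$.

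The main obstacle is exactly the temptation to read each LCS invocation as an independent traversal, which would give the loose bound $O\big(|V(S,G)|\times(|V|+|E|)\big)$; the real content is the amortized observation that the global, monotone $close$ array together with the single shared stack guarantees each vertex (and hence each edge) is processed only a bounded number of times across \emph{all} invocations. I would make the amortization airtight by confirming that once $close[w]=T$ a vertex is never re-pushed (neither case fires again) and that every Line~\ref{line:sm:lcs:function_end} removal is paid for by an earlier push, so the cleanup creates no reusable work.
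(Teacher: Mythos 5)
Your proof is correct and follows essentially the same route as the paper: the paper's (one-sentence) argument rests on Lemma~\ref{lemma:tree}, i.e. that the single search tree carries at most two nodes $v^F,v^T$ per vertex so UIS$^*$ processes each vertex at most twice, and your amortized accounting of the monotone $close$ transitions, the bounded pushes/pops of the shared stack, and the charged edge scans is exactly that argument made explicit. The only difference is level of detail, not substance.
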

	\begin{proof}
		Based on Lemma~\ref{lemma:tree}, UIS$^*$ processes each vertex in $G$ at most two times.
	\end{proof}
	
	\begin{figure}[t]
		\centering
		\renewcommand{\thesubfigure}{}
		\subfigure[(a) $u_0\rightarrow u_5$]{
			\includegraphics[scale = 0.37]{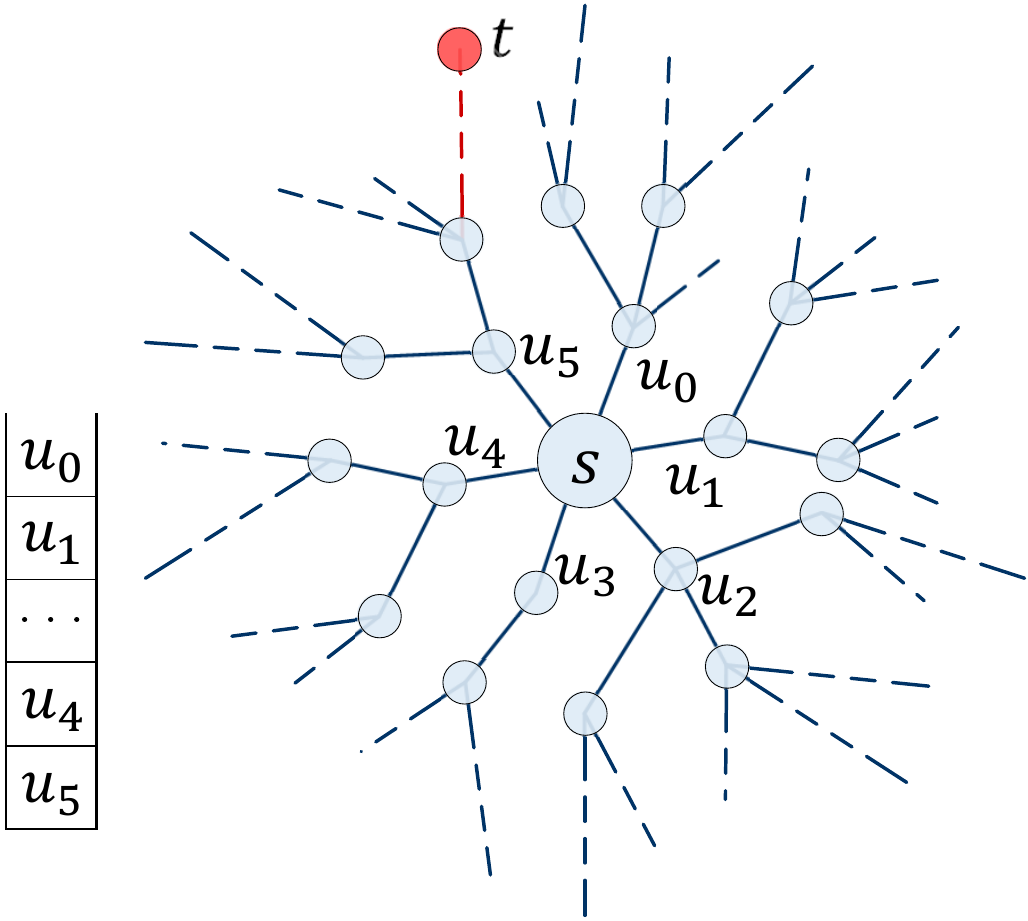}}
		\subfigure[(b) $u_5 \rightarrow u_0$]{
			\includegraphics[scale = 0.37]{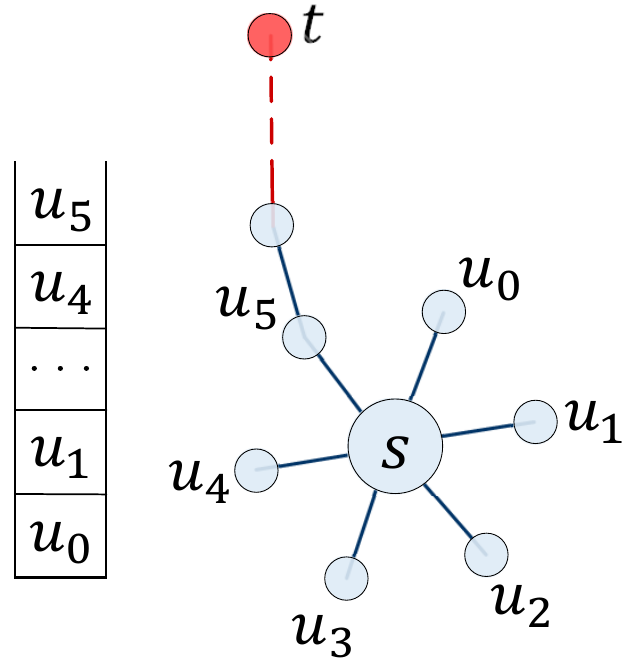}}
		\caption{Two search trees from $s$ to $t$ with different search directions.}
		\label{fig:search_direction}
	\end{figure}

\section{Informed Search}\label{section:informed_search}

	In order to address LSCR queries on KGs, two uninformed search strategies are introduced, UIS~(Agorithm~\ref{algorithm:UIS}) and UIS$^*$~(Algorithm~\ref{algorithm:SQLCS}). The former is a baseline method for LSCR queries; the latter obtains $V(S,G)$ by implementing SPARQL engines and has a lower time complexity.
	
	However, the uninformed search algorithms have their own limitations. Their query efficiencies are not only dependent on the sizes of the input KGs~(Theorem~\ref{theorem:UIS_complexity} and Theorem~\ref{theorem:SQLCS_complexity}), but also dominated by the search direction~(Theorem~\ref{theorem:sqlcs_weakness}). We present an illustration of the latter limitation with Figure~\ref{fig:search_direction}, as the above proposed strategies both obey the LIFO principle~(Section~\ref{section:uninformed_search} and Section~\ref{section:improved_u_s}). Assuming a vertex $s$ relates to six vertices~($u_0,\dots,v_5$, Figure~\ref{fig:search_direction}), in which only $u_5$ and a vertex $t$ are reachable. We draw two possible search trees of an LIFO algorithm from $s$ to  $t$. One search tree in Figure~\ref{fig:search_direction}(a) follows the order of $u_0\rightarrow u_5$, at the beginning of the query processing, while another in Figure~\ref{fig:search_direction}(b) starts by the order of $u_5\rightarrow u_0$. Obviously, the query efficiencies of the above two orders are beyond comparison.

	The traditional informed search algorithms~\cite{Russell1995AIM193191}, e.g. best-first search and A* search, explore the input graphs with an evaluation~(heuristic) function. With such function, an informed search method could break the fixed search orders of both LIFO and FIFO principles, and may find a relatively short search path to improve the query efficiency. Unfortunately, the above informed search strategies are no-recall, which is similar to DFS/BFS~(Section~\ref{section:uninformed_search}). Plus, the evaluation functions of the above techniques are too simple to be extended for the intricate LSCR queries on KGs.
	
	Inspired by the traditional informed search techniques, in this section, we propose an \underline{in}formed \underline{s}earch algorithm for LSCR queries on KGs, named INS. This algorithm is similar to UIS$^*$, except the following two points. Firstly, it utilizes a lightweight index~(\textit{local index}) in Section~\ref{section:libaq:indexing_strategy} to reduce the online computational consumption, within a bounded indexing time complexity. Secondly, INS applies two data structures~(Section~\ref{section:libaq:search}), a priority heap $\mathbb{H}$ and a priority queue $\mathbb{Q}$, to implement the evaluation function as that in the traditional informed strategies.
	
	\begin{figure}[t]
		\centering
		\renewcommand{\thesubfigure}{}
		\subfigure[(a) Traditional]{
			\includegraphics[scale = 0.35]{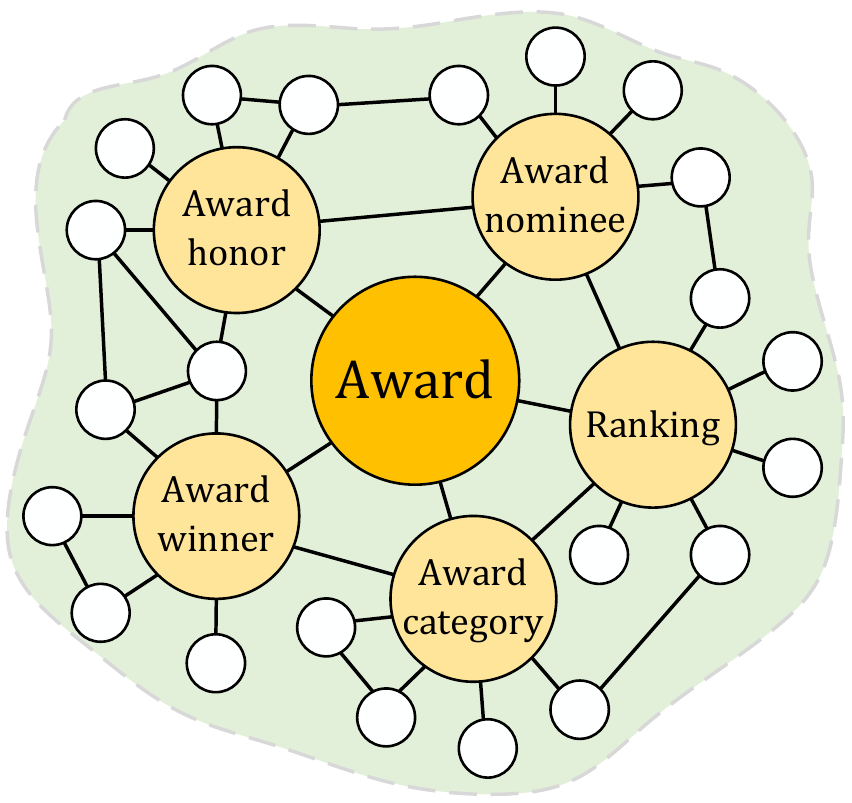}}
		\subfigure[(b) Local index]{
			\includegraphics[scale = 0.35]{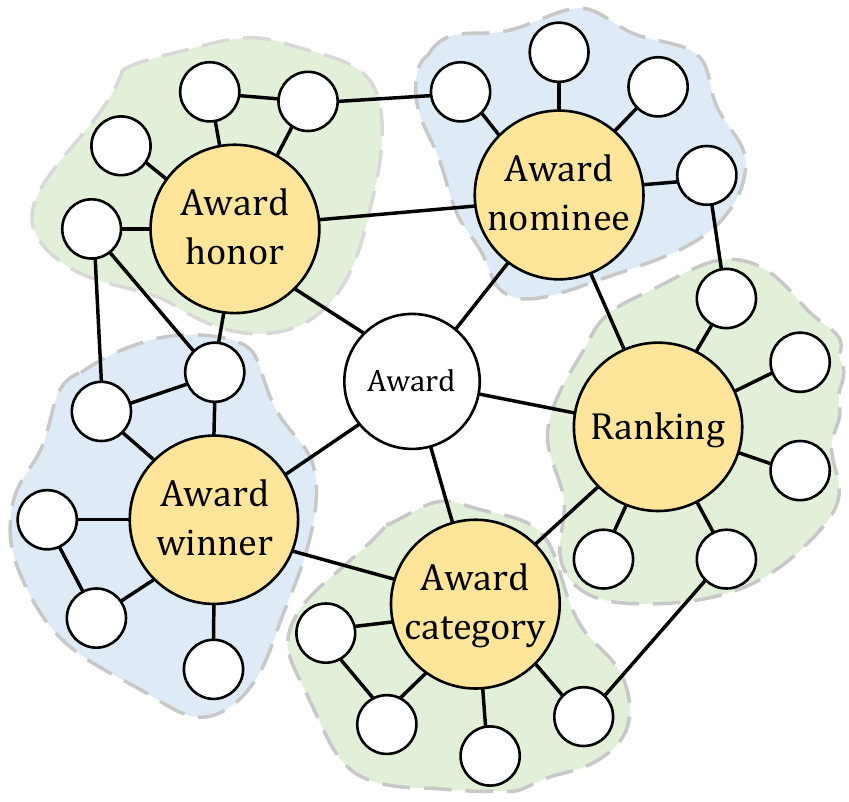}}
		\caption{The comparison of two indexing strategies on Freebase, where we highlight the chosen landmarks.}
		\label{fig:indexing_comparison}
	\end{figure}
		
	\subsection{Indexing Strategy}\label{section:libaq:indexing_strategy}
	
	The intuition idea of our \textit{local index} on a KG $G$$=$$(V, E, \mathcal{L}, L_{S})$ is similar to landmark indexing~\cite{Valstar17Landmark}~(Section~\ref{section:onlineSearch:LCR}). The difference is that we could bound the indexing time complexity, by only precomputing each chosen landmark in a specific subgraph of $G$, instead of in the whole input KG. 
	
	In the following part, we overview our indexing technique in Section~\ref{section:libaq:indexing:overview}, in which the \textit{local index} is formally defined. Then, the algorithm description of the indexing strategy are depicted in Section~\ref{section:libaq:indexing:description}, followed by the correctness and complexity analysis.
		
	\subsubsection{Overview}\label{section:libaq:indexing:overview}
	This overview starts from the illustration of the difference between \cite{Valstar17Landmark} and local index, with an example~(Figure~\ref{fig:indexing_comparison}), then presents a formal definition of the local index.
	
	According to the description of \cite{Valstar17Landmark} in Section~\ref{section:onlineSearch:LCR}, we formalize the traditional landmark indexing as a surjection $f\negmedspace:\negmedspace\mathcal{I}$$\rightarrow$$\{G\}$, where $\mathcal{I}$ denotes the set of the chosen landmarks~(highest degrees), and $\{G\}$ represents the index ranges of the landmarks in $\mathcal{I}$. For example, in Figure~\ref{fig:indexing_comparison}(a), $\mathcal{I}$$=$$\{$\textit{the highlighted vertices of Figure~\ref{fig:indexing_comparison}(a)}$\}$, and, for each vertex $v$ in $\mathcal{I}$, \cite{Valstar17Landmark} precomputes the CMSs from $v$ to any other vertices in Figure~\ref{fig:indexing_comparison}(a). Obviously, when the size of $G$ grows, the indexing time of such method grows exponentially, as we discussed in Section~\ref{section:onlineSearch:LCR}.
	
	To conquer the unbounded indexing time complexity, we aim to narrow the range of the precompution for each chosen landmark from the whole $G$ to a subgraph, as shown in Figure~\ref{fig:indexing_comparison}(b).  For formal description, a bijection $\mathcal{F}\negmedspace:\negmedspace\mathcal{I}$$\rightarrow$$\mathcal{G}$ exists in our indexing strategy~(Algorithm~\ref{algorithm:Indexing}), where $\mathcal{I}$ is the set of the chosen landmarks, and, for a landmark $u\in\mathcal{I}$, $\mathcal{F}(u)\in\mathcal{G}$ is the subgraph that the landmark $u$ belongs to. Importantly, for each non-landmark vertex $v$ in $\mathcal{F}(u)$, we stipulate that the vertex $u$ reaches the vertex $v$, and, for a subgraph $\mathcal{F}(u)\in\mathcal{G}$, $u$ is the only one chosen landmark. Plus, the result of combining all the subgraphs in $\mathcal{G}$ does not necessarily contain all elements of $G$.

	\begin{algorithm}[t]
		\caption{Indexing Algorithm($G$)}
		\label{algorithm:Indexing}
		\begin{algorithmic}[1]
			\renewcommand{\algorithmicrequire}{\textbf{Input:}}
			\renewcommand{\algorithmicensure}{\textbf{Output:}}
			\renewcommand{\algorithmiccomment}[1]{\  #1}
			
			\REQUIRE $G=(V, E, \mathcal{L}, L_{S})$ represents a KG,
			\ENSURE  Local index: $I\negthinspace I[u]\cup E\negthinspace I^T[u]\cup D[u]$ entries~($u\in\mathcal{I}$).
			
			\STATE $\mathcal{I}\leftarrow\,$LandmarkSelect($L_S,k$)\label{line:bijection:initial_I} \COMMENT{//$\,$Select k landmarks}
			\STATE BFSTraverse($\mathcal{I}$) \label{line:indexing:bijection}  \COMMENT{//$\,w.A_\mathcal{F}\negthickspace=\negthickspace u\Leftrightarrow \mathcal{F}(u)$ contains $w$}
			\STATE \textbf{for} vertex $u$ in $\mathcal{I}$ \textbf{do} \label{line:indexing:for_s} 
			\STATE \quad LocalFullIndex($u$) \label{line:indexing:for_e}
			
			\rule[2pt]{7.84cm}{0.02em}
			
			\STATE	\textbf{Function:} LocalFullIndex$(u)$ \label{line:indexing:function_starts}
			\STATE \quad $I\negthinspace I[u]$, $E\negthinspace I[u]$, $E\negthinspace I^T[u]\leftarrow\,$empty (key, value) pair sets\label{line:indexing:empty_sets}
			\STATE \quad Let $\mathbb{Q}$ be a queue with an element $(u,\{\})$ \label{line:indexing:queue}
			\STATE \quad \textbf{while} $\mathbb{Q}$ is not empty \textbf{do} \label{line:indexing:loop_starts}
			\STATE \quad \quad Take a pair ($v$,$L$) from $\mathbb{Q}$ \label{line:indexing:take_out_a_pair}
			\STATE \quad \quad \textbf{if} Insert($v,L,I\negthinspace I[u]$) = true \textbf{then} \label{line:indexing:case1_s}
			\STATE \quad \quad \quad \textbf{for} each edge $(v,l,w)$ incident to $v$ \textbf{do}\label{line:indexing:explore_s}
			\STATE \quad \quad \quad \quad \textbf{if} $w.A_\mathcal{F}=u$ \textbf{then}\label{line:indexing:sameLPFunction}
			\STATE \quad \quad \quad \quad \quad Add pair ($w,L\cup\{l\}$) into $\mathbb{Q}$\label{line:indexing:add_Q}
			\STATE \quad \quad \quad \quad \textbf{else} $\,$ Insert($w,L\cup\{l\},E\negthinspace I[u]$)\label{line:indexing:add_exteral} 
			\STATE \quad $E\negthinspace I^T[u], D[u]\leftarrow\,$Compute$(E\negthinspace I[u])$\label{line:indexing:function_new_ends}
			
			\rule[2pt]{7.84cm}{0.02em}
			
			\STATE \textbf{Function:} Insert$(v, L,index[u])$ \label{line:indexing:function:function_starts}
			\STATE \quad \textbf{if} $v=u\wedge L=\phi$ \textbf{then}\label{line:indexing:f:v=u} $\,$ \textbf{return true}			
			\STATE \quad \textbf{if} $\nexists (v,\mathfrak{L})\in index[u]$ \textbf{then}\label{line:indexing:f:if}
			\STATE \quad \quad Add pair $(v, \{L\})$ into $index[u]$, then \textbf{return true}\label{line:indexing:f:if_return}
			
			\STATE \quad Let $(v,\mathfrak{L})$ represent a pair in  $index[u]$ \label{line:indexing:f:get_pair}
			\STATE \quad Remove each label set $L_i$ in $\mathfrak{L}$, if $L\subset L_i$
			\STATE \quad \textbf{if} $\nexists L_i\in\mathfrak{L} \wedge L_i\subseteq L$\textbf{then}
			\STATE \quad \quad Add $L$ into $\mathfrak{L}$, then \textbf{return true}
			\STATE \quad \textbf{return false} \label{line:indexing:function:function_ends}
			
			\rule[2pt]{7.84cm}{0.02em}
			
			\STATE \textbf{Function:} BFSTraverse($\mathcal{I}$)
			\STATE \quad $\mathbb{Q}\leftarrow\,$Initialize($\mathcal{I}$) \COMMENT{// Each element in $\mathbb{Q}$ is a queue} \label{line:bijection:initial_I_Q}
			\STATE \quad \textbf{while} $\mathbb{Q}$ is not empty \textbf{do}\label{line:bijection:while}
			\STATE \quad \quad Take an element $\mathbb{Q}_u$ from $\mathbb{Q}$ \label{line:bijection:take_queue}
			\STATE \quad \quad Take an element $v$ from $\mathbb{Q}_u$ \label{line:bijection:take_pair}
			\STATE \quad \quad \textbf{for} each edge $e=(v,l,w)$ incident to $v$ \textbf{do} \label{line:bijection:for_each_edge}
			\STATE \quad \quad \quad \textbf{if} $explored[w]$ = false \textbf{then} \label{line:bijection:explored_if}
			\STATE \quad \quad \quad \quad $w.A_\mathcal{F}\negthickspace\leftarrow\negthickspace u$, add $w$ into $\mathbb{Q}_u$, $explored[w]\negthickspace\leftarrow\negthickspace true$ \label{line:bijection:add_2}  \label{line:bijection:end_explored_if}
			
			\STATE \quad \quad \textbf{if} $\mathbb{Q}_u$ is not empty \textbf{then} \label{line:bijection:final_if}
			\STATE \quad \quad \quad Add $\mathbb{Q}_u$ into $\mathbb{Q}$ \label{line:bijection:end_while}
		\end{algorithmic}
	\end{algorithm}
	
%
%
	
	\textbf{Definitions.} An entry in the local index is related to one landmark $u$ in $\mathcal{I}$, and is formalized by $I\negthinspace I[u]\cup E\negthinspace I^T[u]\cup D[u]$. Before introducing the details about such entry, we present some local-index-related definitions. Assuming $G_u$$=$$\mathcal{F}(u)$ is a subgraph of $G$, the vertex set and edge set are denoted by $V_u$ and $E_u$, respectively. Firstly, a path set $P(s,t|G_u)$ is a subset of $P(s,t)$, where, $\forall p\in P(s,t|G_u)$, for each edge $e$ in $p$, $e\in E_u$. After that, we define of $M(s,t|G_u)$ as follows.
	
	\begin{definition}
		\label{definition:CMS_|G}
		A collection of label sets  $M(s,t|G_u)$ is a CMS from $s$ to $t$ in a subgraph $G_u$ of $G$, iff $M(s,t|G_u)=\{\mathbb{L}(p_i)|p_i\in P(s,t|G_u)\wedge\nexists p_j\in P(s,t|G_u)$, $i\neq j$, $such$ $that$ $\mathbb{L}(p_j)\subset \mathbb{L}(p_i)\}$.
	\end{definition}	
	
	Then, supposing $E_o=\{(v,l,w)|v\in V_u\wedge w\notin V_u\}$ is an edge set of $G$, where each edge in $E_o$ is incident to a vertex in $\mathcal{F}(u)$, and points to a vertex that is not in $\mathcal{F}(u)$, we define two sets of (vertex, collection of label sets) key-value pairs, $I\negthinspace I[u]$ and $E\negthinspace I[u]$. The former $I\negthinspace I[u]=\{(v,M(u,v|\mathcal{F}(u)))|v\in V_u\}$ and the latter $E\negthinspace I[u]=\big\{(w,\mathfrak{L})|\exists(v,l,w)\in E_o,\mathfrak{L}=\{L\cup l|L\in M(u,v|\mathcal{F}(u))\}\big\}$. 
	
	The meaning of the pairs in $I\negthinspace I[u]$ is obvious. For a pair $(w,\mathfrak{L})$ in $E\negthinspace I[u]$, the following statement exists: Given a label constraint $L$, if $\exists \mathcal{L}_w\negmedspace\in\negmedspace \mathfrak{L}$ and $\mathcal{L}_w\negmedspace\subseteq\negmedspace L$, then $u\negmedspace\stackrel{L}{\leadsto}\negmedspace w$. That can be easily proved by the definition of $E\negthinspace I[u]$. For the requirement of query efficiency, in practice, we reverse the pairs in $E\negthinspace I[u]$, and restore them as (label set, vertex set) key-value pairs in $E\negthinspace I^T[u]$, where $E\negthinspace I^T[u]$$=$$\big\{\forall(L,\mathcal{V})|\exists(w,\mathfrak{L}_w)\in E\negthinspace I[u]\wedge L\negmedspace\in\negmedspace\mathfrak{L}_w$, $\mathcal{V}\negthickspace=\negthickspace\{v|\forall(v,\mathfrak{L}_v)\negmedspace\in\negmedspace E\negthinspace I[u]\wedge L\negmedspace\in\negmedspace\mathfrak{L}_v\}\big\}$. Here, we further discuss $E\negthinspace I^T[u]$ in Theorem~\ref{theorem:EIT}.
	
	\begin{theorem}
		\label{theorem:EIT}
		If $L_u\subseteq L$, then $\forall v\in \mathcal{V}_u$, $u\stackrel{L}{\leadsto}v$, where $(L_u,\mathcal{V}_u)$ is a pair in $E\negthinspace I^T[u]$, and $L$ is a label constraint.
	\end{theorem}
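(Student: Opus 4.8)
The plan is to reduce the claim to the per-vertex reachability property already established for $E\negthinspace I[u]$, namely that a pair $(w,\mathfrak{L})\in E\negthinspace I[u]$ together with some $\mathcal{L}_w\in\mathfrak{L}$ satisfying $\mathcal{L}_w\subseteq L$ witnesses $u\stackrel{L}{\leadsto}w$. Since $E\negthinspace I^T[u]$ is merely a transposed (re-indexed) view of $E\negthinspace I[u]$, the whole argument should amount to unwinding the definition of the transpose so that this property can be invoked uniformly for every $v\in\mathcal{V}_u$.

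Concretely, I would first fix an arbitrary $v\in\mathcal{V}_u$. By the definition of $E\negthinspace I^T[u]$, membership of $v$ in the vertex set of the pair $(L_u,\mathcal{V}_u)$ means exactly that there is a pair $(v,\mathfrak{L}_v)\in E\negthinspace I[u]$ with $L_u\in\mathfrak{L}_v$. Next I would unwind the definition of $E\negthinspace I[u]$: the membership $L_u\in\mathfrak{L}_v$ guarantees an edge $(v',l,v)\in E_o$ (so $v'\in V_u$ and $v\notin V_u$) and a label set $L'\in M(u,v'|\mathcal{F}(u))$ with $L_u=L'\cup\{l\}$. By Definition~\ref{definition:CMS_|G}, $L'$ being a minimal sufficient path label set from $u$ to $v'$ inside $\mathcal{F}(u)$ yields a path $p$ from $u$ to $v'$ with $\mathbb{L}(p)\subseteq L'$, hence $u\stackrel{L'}{\leadsto}v'$ in $G$. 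Finally, from the hypothesis $L_u\subseteq L$ and $L_u=L'\cup\{l\}$ I obtain $\mathbb{L}(p)\cup\{l\}\subseteq L'\cup\{l\}=L_u\subseteq L$, so appending the edge $(v',l,v)$ to $p$ produces a path from $u$ to $v$ whose label set lies in $L$; therefore $u\stackrel{L}{\leadsto}v$. As $v$ was arbitrary, the conclusion holds for all $v\in\mathcal{V}_u$.

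I do not expect a genuine obstacle, since the argument is essentially bookkeeping over the layered definitions. The one point requiring care is the quantifier structure of the transpose: one must check that every $v\in\mathcal{V}_u$ carries the \emph{same} witnessing label set $L_u$ in its $\mathfrak{L}_v$, rather than merely some subset of $L$. This is precisely what the definition of $E\negthinspace I^T[u]$ grants, because $\mathcal{V}_u$ is collected as the set of all vertices $v$ with $L_u\in\mathfrak{L}_v$. A secondary routine detail is the monotonicity of label-constraint reachability under enlargement of the constraint, which I would state explicitly but which follows immediately from the LCR definition, as any path $p$ with $\mathbb{L}(p)\subseteq L'$ also satisfies $\mathbb{L}(p)\subseteq L$ whenever $L'\subseteq L$.
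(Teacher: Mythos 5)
Your proof is correct and follows essentially the same route as the paper's: unwind the definition of $E\negthinspace I^T[u]$ back to a pair $(v,\mathfrak{L}_v)\in E\negthinspace I[u]$ with $L_u\in\mathfrak{L}_v$, then conclude $u\stackrel{L}{\leadsto}v$ from the reachability property of $E\negthinspace I[u]$ entries. The only difference is one of completeness: the paper invokes that property as already established (it is asserted just before the theorem with ``that can be easily proved by the definition of $E\negthinspace I[u]$''), whereas you re-derive it explicitly from Definition~\ref{definition:CMS_|G} by taking a witnessing path in $\mathcal{F}(u)$ with label set $L'\in M(u,v'|\mathcal{F}(u))$ and concatenating the boundary edge $(v',l,v)\in E_o$, which makes your argument self-contained but not materially different.
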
	
	\begin{proof}
		$\forall (L_u,\mathcal{V}_u)$ in $E\negthinspace I^T[u]$, for each vertex $w$ in $\mathcal{V}_u$,  there must be an entry $(w,\mathfrak{L})$ in $E\negthinspace I[u]$ that $L_u\in \mathfrak{L}$, based on the above definitions. If $L_u\negmedspace\subseteq\negmedspace L\wedge L_u\negmedspace\in\negmedspace \mathfrak{L}$, then $u\stackrel{L}{\leadsto}w$.
	\end{proof}

	Finally, to present a preliminary estimate of the correlation degree between two subgraphs $\mathcal{F}(u)$ and $\mathcal{F}(v)$ in $\mathcal{G}$, we develop $D[u]$$=$$\{(v,D(u,v)|v\in\mathcal{I})\}$, where $D(u,v)$ is the number of the pairs $(w,\mathfrak{L})$ in $E\negthinspace I[u]$ whose first element $w$ is a vertex in $\mathcal{F}(v)$.
	
	\subsubsection{Algorithm Description} \label{section:libaq:indexing:description}
	
	In this part, we introduce Algorithm~\ref{algorithm:Indexing} for the construction process of both bijection $\mathcal{F}$ and the local index entry $I\negthinspace I[u]\cup E\negthinspace I^T[u]\cup D[u]$.
	
	The landmark selection process in INS is more complicated than that in \cite{Valstar17Landmark}, where \cite{Valstar17Landmark} directly chooses the vertices with the highest degrees. According to the description of KGs in Section~\ref{section:background}, the entities with many real-world instances often relate to the vertices with relatively high degrees in a KG, as shown in Figure~\ref{fig:RDFS}. Plus, the edges that are incident to or point to such vertices often labeled by a set of RDF vocabularies, e.g. ``rdf:type'', ``rdfs:subclassOf''. Selecting the vertices with the highest degrees will cause the labels of the associated edges between the landmarks and non-landmarks to be simple. In other words, when an input label constraint do not contain any of the RDF vocabularies, the local index based on such landmarks is useless.

	In order to select a sound set of landmarks~($\mathcal{I}$), the landmark selection process~(Line~\ref{line:bijection:initial_I}) of INS is based on the RDF schema $L_S$ of the input KG $G$.  INS first randomly select a set of classes in $L_S$, then it evenly marks $k$ instances of the selected classes as landmarks~($\mathcal{I}$). Since, the smaller $|\mathcal{I}|$ is, the smaller the chances of encountering the landmarks in a query processing are, in INS, we set $|\mathcal{I}|=k=\log{|V|}\times\negmedspace\sqrt{|V|}$.

	After that, to construct $\mathcal{F}$~(Line~\ref{line:indexing:bijection}), we start a  BFS~(Function BFSTraverse, Lines~\ref{line:bijection:initial_I_Q}-\ref{line:bijection:end_while}) from all vertices in $\mathcal{I}$, simultaneously, and traverse their surrounding vertices, with following rule: For an edge $e=(v,l,w)\in E$, if both $v$ and $w$ belong to the subgraph $\mathcal{F}(u)$, $e$ is an edge of $\mathcal{F}(u)$. With the consideration of the practical requirement for efficiently searching the subgraph that a vertex $w$ belongs to, we introduce an additional attribute $A_\mathcal{F}$ for each $w$ vertex in $G$, where $w.A_\mathcal{F}\negmedspace=\negmedspace u$ represents the vertex $w$ belongs to a subgraph $\mathcal{F}(u)$~(Line~\ref{line:bijection:add_2} and Line~\ref{line:indexing:sameLPFunction}).


	Besides, function Insert$(v, L,index[u])$~(Lines~\ref{line:indexing:function:function_starts}-\ref{line:indexing:function:function_ends}) is proposed to add a pair $(v, L)$ into $index[u]$ that represents $I\negthinspace I[u]$ or $E\negthinspace I[u]$, from Line~\ref{line:indexing:function:function_starts} to Line~\ref{line:indexing:function:function_ends}. If $\nexists (v,\mathfrak{L})\in index[u]$, we add a new pair $(v,\{L\})$ into $index[u]$~(Lines~\ref{line:indexing:f:if}-\ref{line:indexing:f:if_return});  otherwise, we use $(v, L)$ to update the value of pair $(v,\mathfrak{L})$, which is demonstrated in Lines~\ref{line:indexing:f:get_pair}-\ref{line:indexing:function:function_ends}.
	
	Then, for each vertex $u$ in $\mathcal{I}$, we index $u$ in function LocalFullIndex($u$,$\mathcal{F}$)~(Lines~\ref{line:indexing:for_s}-\ref{line:indexing:function_new_ends}), in which $I\negthinspace I[u]$, $E\negthinspace I[u]$ and $E\negthinspace I^T[u]$ are empty (key, value) sets~(Line~\ref{line:indexing:empty_sets}). This function utilizes a queue in which each element is a (vertex, label set) pair. Loop~(Lines~\ref{line:indexing:loop_starts}-\ref{line:indexing:add_exteral}) starts, after adding the first pair $(u,\{\})$ into $\mathbb{Q}$, and runs until $\mathbb{Q}$ is empty. Each iteration of this loop takes a pair $(v,L)$ from $\mathbb{Q}$~(Line~\ref{line:indexing:take_out_a_pair}). We explore the edges that are incident to $v$~(Lines~\ref{line:indexing:explore_s}-\ref{line:indexing:add_exteral}), except in the following case: $(v,L)$ cannot be added into $I\negthinspace I[u]$ (Lines~\ref{line:indexing:case1_s}) by function Insert. Then, in Lines~\ref{line:indexing:explore_s}-\ref{line:indexing:add_exteral}, for each edge $(v,l,w)$, we add pair ($w$,$L\cup\{l\}$) into either $\mathbb{Q}$ or $E\negthinspace I[u]$, which is dependent on whether vertex $w$ is in $\mathcal{F}(u)$.
	
	Finally, we re-compute the elements in $E\negthinspace I[u]$ to obtain $E\negthinspace I^T[u]$ and $D[u]$ in Line~\ref{line:indexing:function_new_ends}, according to their definitions.
	
		
	\textbf{Analysis.} We study the correctness~(Theorem~\ref{theorem:index_correctness}) and complexity~(Theorem~\ref{theorem:indexing_time_complexity} and Theorem~\ref{theorem:indexing_space_complexity}) of Algorithm~\ref{algorithm:Indexing} on a KG $G=(V, E, \mathcal{L},  L_{S})$.

		\begin{theorem}
			\label{theorem:index_correctness}
			The consistency exists between the definition of an entry $I\negthinspace I[u]\cup E\negthinspace I^T[u]\cup D[u]$ and the indexing result of LocalFullIndex($u,\mathcal{F}$).
		\end{theorem}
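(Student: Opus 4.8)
The plan is to prove the claimed consistency by verifying, component by component, that the output of LocalFullIndex$(u,\mathcal{F})$ agrees with the definitions of $I\negthinspace I[u]$, $E\negthinspace I[u]$, and the derived sets $E\negthinspace I^T[u]$ and $D[u]$. Since $E\negthinspace I^T[u]$ and $D[u]$ are produced by the Compute step directly from $E\negthinspace I[u]$ following their definitions verbatim, the whole theorem reduces to showing that after termination $I\negthinspace I[u]=\{(v,M(u,v|\mathcal{F}(u)))\mid v\in V_u\}$ (with $M(u,v|\mathcal{F}(u))$ as in Definition~\ref{definition:CMS_|G}) and that $E\negthinspace I[u]$ matches its definition. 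The first ingredient is a \emph{soundness invariant}: every pair $(v,L)$ that is dequeued from $\mathbb{Q}$ satisfies $L=\mathbb{L}(p)$ for some path $p\in P(u,v|\mathcal{F}(u))$. This follows by an easy induction on the queue operations, since $\mathbb{Q}$ is seeded with $(u,\emptyset)$ (the empty path at $u$, which the special case of Insert admits so the traversal can start), and each enqueued pair $(w,L\cup\{l\})$ extends a processed path to $v$ by an internal edge $(v,l,w)$ with $w.A_\mathcal{F}=u$, i.e. $w\in V_u$. Consequently every label set that Insert ever stores in $I\negthinspace I[u]$ is a genuine path label set, and the superset-removal inside Insert guarantees that each stored collection $\mathfrak{L}_v$ is always an antichain under $\subseteq$.

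The core of the argument, and the step I expect to be the main obstacle, is \emph{completeness together with minimality}: that no element of the true CMS is lost by the pruning which continues the traversal only when Insert returns true. I would isolate this as a \emph{domination lemma}. If processing $(v,L)$ causes Insert to return false, then $\mathfrak{L}_v$ already contains some $L_i\subseteq L$, and since $L_i$ is realised by a path $p'$ from $u$ to $v$ with $\mathbb{L}(p')=L_i$, for every continuation $\pi$ from $v$ the path $p'\!\cdot\!\pi$ has label set $L_i\cup\mathbb{L}(\pi)\subseteq L\cup\mathbb{L}(\pi)$. Hence any label set obtainable by extending the pruned path is dominated (as a superset) by one obtainable through the retained path $p'$, so no minimal path label set can be forfeited. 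Formalising this requires an induction (on path length, equivalently on the position of the first prune) showing that for every $L^\ast\in M(u,v|\mathcal{F}(u))$ the set $L^\ast$ is eventually stored in $\mathfrak{L}_v$ --- reached either along a realising path or, when a prefix is pruned, along a dominating path of no-larger label set; combined with soundness and the antichain property, if $L^\ast$ failed to be stored it would be because some stored $L_i\subsetneq L^\ast$ is itself a path label set, contradicting the minimality of $L^\ast$. Together with the soundness invariant this pins down $I\negthinspace I[u]$ exactly.

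Termination is then justified by finiteness: $\mathcal{L}$ is finite, so the family of antichains over its subsets is finite; Insert returns true only when it strictly enlarges the maintained antichains, and exploration proceeds only on such genuine updates, so $\mathbb{Q}$ empties after finitely many steps. For $E\negthinspace I[u]$ I would reuse the same machinery restricted to boundary edges: whenever a processed pair $(v,L)$ with $L\in M(u,v|\mathcal{F}(u))$ meets an edge $(v,l,w)$ with $w.A_\mathcal{F}\neq u$ (so $(v,l,w)\in E_o$), the algorithm inserts $(w,L\cup\{l\})$ into $E\negthinspace I[u]$, and the same antichain maintenance self-corrects contributions that descend from a merely temporarily-minimal prefix, yielding exactly $\{L\cup l\mid L\in M(u,v|\mathcal{F}(u))\}$ aggregated and minimised over all boundary edges into $w$ --- the definition of $E\negthinspace I[u]$. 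Finally, $E\negthinspace I^T[u]$ is the transpose of these (vertex, label-set) pairs into (label-set, vertex-set) pairs, and $D[u]$ counts, via the attribute $A_\mathcal{F}$, how many first components of $E\negthinspace I[u]$ lie in each $\mathcal{F}(v)$; both are precisely what Compute$(E\negthinspace I[u])$ returns, and Theorem~\ref{theorem:EIT} confirms the intended reachability semantics. This establishes consistency for the complete entry $I\negthinspace I[u]\cup E\negthinspace I^T[u]\cup D[u]$.
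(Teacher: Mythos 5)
Your proposal is correct and follows essentially the same route as the paper's proof: soundness of the stored label sets via the Insert function, completeness via the argument that pruning a dominated pair $(v,L)$ cannot forfeit any minimal path label set, and then treating $E\negthinspace I^T[u]$ and $D[u]$ as immediate consequences of the consistency of $E\negthinspace I[u]$ under the Compute step. The only difference is one of rigor: your explicit domination lemma, antichain invariant, and termination argument make precise the step that the paper disposes of with the single (and, as literally stated, imprecise) claim that vertices are skipped only when ``all the possible path label sets from $u$ to that vertices have been searched.''
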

		\begin{proof}
			$I\negthinspace I[u]$: For an element $\big(v,\mathfrak{L}\negthickspace=\negthickspace M(u,v|\mathcal{F}(u))\big)$ in $I\negthinspace I[u]$, given substructure constraint $L$, the statement is correct, iff, $\exists L'\negthickspace\in\negthickspace\mathfrak{L}\wedge L'\negthickspace\subseteq\negmedspace L\Leftrightarrow$ $u\negthickspace\stackrel{L}{\leadsto}\negthickspace v$, in $\mathcal{F}(u)$. ($\Rightarrow$) Based on function Insert, this is correct certainly. ($\Leftarrow$) If, in $\mathcal{F}(u)$, $u\negmedspace\stackrel{L}{\leadsto}\negmedspace v$, then a path $p$ exists in $\mathcal{F}(u)$ that $\mathbb{L}(p)\negmedspace\subseteq\negmedspace L$. If $\nexists L'\negthickspace\in\negmedspace\mathfrak{L}$ and $L'\negthickspace\subseteq\negmedspace L$, then at least one passed edge is not processed by function LocalFullIndex. However, in this function, we only skip the vertices in the case that all the possible path label sets from $u$ to that vertices have been searched. Thus, the assumption does not hold.
			
			$E\negthinspace I^T[u]\cup D[u]$: Based on function Insert, the consistency also exists for $E\negthinspace I[u]$. Thus, the statements about $E\negthinspace I^T[u]$ and $D[u]$ are correct~(Line~\ref{line:indexing:function_new_ends}). 
		\end{proof}
		
		
		\begin{theorem}
			\label{theorem:indexing_time_complexity}
			The indexing time complexity is $O\big(2^{|\mathcal{L}|}\times(|E|+|V|\log{2^{|\mathcal{L}|}})\big)$.
		\end{theorem}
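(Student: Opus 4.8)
The plan is to charge the running time to the three phases of Algorithm~\ref{algorithm:Indexing} separately and to argue that the $k$ invocations of \textbf{LocalFullIndex} dominate. First I would dispose of the cheap phases: \textbf{LandmarkSelect} scans $L_S$ once to pick the $k=\log|V|\negmedspace\times\negmedspace\sqrt{|V|}$ landmarks, and \textbf{BFSTraverse} is a single multi-source BFS that assigns every reachable vertex $w$ its attribute $w.A_\mathcal{F}$; both run in $O(|V|+|E|)$, which is absorbed by the claimed bound. The essential structural fact, established by \textbf{BFSTraverse}, is that each vertex carries exactly one $A_\mathcal{F}$ value, so the subgraphs $\{\mathcal{F}(u)\}_{u\in\mathcal{I}}$ partition the vertices (and hence the internal edges) of $G$. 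This is what prevents a spurious factor of $k=|\mathcal{I}|$: summing any per-subgraph cost of the form $\sum_{v\in V_u}g(v)$ over all landmarks collapses to $\sum_{v\in V}g(v)$.

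Next I would prove the combinatorial lemma that controls how often a vertex is re-expanded inside \textbf{LocalFullIndex}$(u)$. Every label set stored for a vertex $v$ in $I\negthinspace I[u]$ is a subset of $\mathcal{L}$, and \textbf{Insert} keeps only minimal ones (it deletes every stored $L_i\supset L$ and rejects $L$ whenever some stored $L_i\subseteq L$); hence the value $M(u,v|\mathcal{F}(u))$ is an antichain of subsets of $\mathcal{L}$ (Definition~\ref{definition:CMS_|G}), so $|M(u,v|\mathcal{F}(u))|\le 2^{|\mathcal{L}|}$. Consequently \textbf{Insert}$(v,L,I\negthinspace I[u])$ returns \textbf{true} at most $2^{|\mathcal{L}|}$ times for a fixed $v$, and only a successful insertion triggers the exploration of the edges incident to $v$ (Lines~\ref{line:indexing:case1_s}--\ref{line:indexing:add_exteral}). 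Therefore $v$ has its adjacency list scanned at most $2^{|\mathcal{L}|}$ times during the whole run of \textbf{LocalFullIndex}$(u=v.A_\mathcal{F})$.

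From this lemma the two terms fall out. For the edge work, each successful expansion of $v$ costs $O(\deg(v))$ (one queue push or one $E\negthinspace I[u]$ insertion per incident edge), so summing first over the at most $2^{|\mathcal{L}|}$ expansions of $v$ and then over all vertices, and invoking the partition property, the total edge cost is $\sum_{v\in V}2^{|\mathcal{L}|}\,O(\deg(v))=O\big(2^{|\mathcal{L}|}|E|\big)$; the reconstruction of $E\negthinspace I^T[u]$ and $D[u]$ in Line~\ref{line:indexing:function_new_ends} only re-reads $E\negthinspace I[u]$ and is of the same order. For the bookkeeping work, the distinct (vertex, label set) states that ever get stored number at most $\sum_{v\in V}2^{|\mathcal{L}|}=2^{|\mathcal{L}|}|V|$, and maintaining each $I\negthinspace I[u]$ value as an ordered structure over subsets of $\mathcal{L}$ makes one genuine \textbf{Insert} cost $O(\log 2^{|\mathcal{L}|})=O(|\mathcal{L}|)$; this yields the $O\big(2^{|\mathcal{L}|}|V|\log 2^{|\mathcal{L}|}\big)$ term. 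Adding the two gives $O\big(2^{|\mathcal{L}|}(|E|+|V|\log 2^{|\mathcal{L}|})\big)$.

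The hard part will be making the accounting of \textbf{Insert} airtight. The subtlety is that a pair may be pushed onto $\mathbb{Q}$ many times (once per relaxing edge), so the number of dequeues, and thus of \textbf{Insert} calls, is a priori $O(2^{|\mathcal{L}|}|E|)$ rather than $O(2^{|\mathcal{L}|}|V|)$; to land on the stated bound I must charge the $O(1)$ queue push/pop of each such duplicate to the $2^{|\mathcal{L}|}|E|$ edge term and charge the $O(\log 2^{|\mathcal{L}|})$ ordered-structure operation only to the $O(2^{|\mathcal{L}|}|V|)$ genuinely distinct minimal states, relying on \textbf{Insert}'s minimality pruning (and the consistency established in Theorem~\ref{theorem:index_correctness}) to guarantee that no vertex ever records more than $2^{|\mathcal{L}|}$ of them. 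I would also confirm that the landmark-selection and bijection phases are genuinely dominated, so that no hidden $k$ or $|\mathcal{L}|$ factor survives outside the two displayed terms.
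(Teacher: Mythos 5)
Your proposal is correct and takes essentially the same route as the paper's proof: bound the number of successful \textbf{Insert} calls per vertex by $2^{|\mathcal{L}|}$, charge $O(1)$ to each queue operation and $O(\log 2^{|\mathcal{L}|})$ to each genuine \textbf{Insert}, and collapse the per-landmark sums into $O\big(2^{|\mathcal{L}|}\times(|E|+|V|\log 2^{|\mathcal{L}|})\big)$ using the fact that the subgraphs $\mathcal{F}(u)$ partition the vertices. If anything you are more rigorous than the paper, whose proof simply asserts that at most $n2^{|\mathcal{L}|}$ entries are ever pushed per landmark and never confronts the duplicate-push subtlety you flag; your amortized charging of duplicate dequeues to the $2^{|\mathcal{L}|}|E|$ term is precisely what is needed to make that assertion honest.
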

		\begin{proof}
			Assuming, for a landmark $u$, $\mathcal{F}(u)$ contains $n$ vertices and $m$ edges. In function LocalFullIndex($u$), we push at most $n2^{|\mathcal{L}|}$ entries into the queue in this function. Each push costs $O(1)$ time and each call of function Insert takes $O(\log{2^{|\mathcal{L}|}})$ time. Thus, the indexing time complexity is $\sum\limits_{u\in\mathcal{I}}O\big((n+m)2^{|\mathcal{L}|}+n2^{|\mathcal{L}|}\log{2^{|\mathcal{L}|}}\big) 
			\leq O\big(2^{|\mathcal{L}|}\times(|E| + |V|\log{2^{|\mathcal{L}|}})\big)$. 
		\end{proof}
		
		\begin{theorem}
			\label{theorem:indexing_space_complexity}
			The indexing space complexity is $O\big(|V|2^{|\mathcal{L}|}\times(\log{|V|}+|\mathcal{L}|)\big)$.
		\end{theorem}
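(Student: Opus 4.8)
The plan is to write the total index size as $\sum_{u\in\mathcal{I}}\big(|I\negthinspace I[u]|+|E\negthinspace I^T[u]|+|D[u]|\big)$ and to bound each summand by charging every stored datum to a single \emph{unit}, namely an association of one vertex with one label set. I would first argue that a unit costs $O(\log|V|+|\mathcal{L}|)$ space, since a vertex identifier needs $O(\log|V|)$ bits while a label set $L\subseteq\mathcal{L}$ is storable in $O(|\mathcal{L}|)$ space (e.g. as an $|\mathcal{L}|$-bit map). The theorem then reduces to showing that the total number of units is $O(|V|2^{|\mathcal{L}|})$, after which multiplying by the per-unit cost gives the claimed $O\big(|V|2^{|\mathcal{L}|}(\log|V|+|\mathcal{L}|)\big)$.

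Next I would count the units contributed by the inner indices $I\negthinspace I[u]$. Each pair $\big(v,M(u,v|\mathcal{F}(u))\big)$ contributes $|M(u,v|\mathcal{F}(u))|$ units, and a CMS is a collection of pairwise incomparable subsets of $\mathcal{L}$ (an antichain in the subset lattice), so its cardinality is at most the number of subsets of $\mathcal{L}$, namely $2^{|\mathcal{L}|}$ (indeed at most $\binom{|\mathcal{L}|}{\lfloor|\mathcal{L}|/2\rfloor}$ by Sperner's theorem); hence each indexed vertex contributes at most $2^{|\mathcal{L}|}$ units. The decisive structural fact is that the attribute $A_\mathcal{F}$ assigned by BFSTraverse sends every vertex to at most one landmark's subgraph, so the vertex sets $\{V_u\}_{u\in\mathcal{I}}$ are pairwise disjoint and $\sum_{u\in\mathcal{I}}|V_u|\le|V|$. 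Summing, the inner indices contribute $O(|V|2^{|\mathcal{L}|})$ units, i.e. $O\big(|V|2^{|\mathcal{L}|}(\log|V|+|\mathcal{L}|)\big)$ space, which already matches the bound.

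It then remains to show that $E\negthinspace I^T[u]$ and $D[u]$ do not dominate this contribution, and this is where the main obstacle lies. For $E\negthinspace I^T[u]$ the number of (label set, vertex) units equals $\sum_{(w,\mathfrak{L})\in E\negthinspace I[u]}|\mathfrak{L}|$, and each $\mathfrak{L}$ is again an antichain of size at most $2^{|\mathcal{L}|}$ by the definition of $E\negthinspace I[u]$ together with the minimality enforced by Insert; so the whole difficulty is to bound the number of frontier entries summed over all landmarks. I would charge each such entry to a boundary edge leaving $\mathcal{F}(u)$, and use the disjointness of $\{V_u\}$ to argue that every edge of $G$ is a boundary edge of at most one subgraph and that each distinct frontier vertex is itself assigned to exactly one subgraph, so these entries can be charged to the $O(|V|)$ indexed vertices without exceeding the inner-index budget; the delicate point is avoiding the naive $O(k\,|V|)$ or $O(|E|)$ overcount and keeping the vertex count at $O(|V|)$. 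Finally $D[u]$ stores only one count per pair of subgraphs sharing a frontier edge, each costing $O(\log|V|)$, so its total is subsumed by the frontier accounting. Collecting the three bounds yields $O\big(|V|2^{|\mathcal{L}|}(\log|V|+|\mathcal{L}|)\big)$, completing the proof.
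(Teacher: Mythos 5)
Your treatment of $I\negthinspace I[u]$ --- the antichain bound of at most $2^{|\mathcal{L}|}$ label sets per CMS, a per-unit cost of $O(\log|V|+|\mathcal{L}|)$, and disjointness of the subgraphs $\{V_u\}$ via $A_\mathcal{F}$ --- is sound, and it is in substance the whole of the paper's proof: the paper simply asserts that each landmark's entry $I\negthinspace I[u]\cup E\negthinspace I^T[u]\cup D[u]$ contains $O(n2^{|\mathcal{L}|})$ elements of size $O(\log{|V|}+|\mathcal{L}|)$ each and sums over landmarks using $\sum_u n_u\le|V|$. You are right that this per-landmark count is only self-evident for $I\negthinspace I[u]$, whose keys lie in $V_u$, and that the frontier index $E\negthinspace I^T[u]$ needs its own accounting; the paper never supplies one.

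The gap is that your fix for $E\negthinspace I^T[u]$ does not close. The bijection $\mathcal{F}$ guarantees that each vertex \emph{belongs} to at most one subgraph, but a vertex $w$ is a key of $E\negthinspace I[u]$ for \emph{every} landmark $u$ whose subgraph has an edge into $w$, so ``each distinct frontier vertex is itself assigned to exactly one subgraph'' does not let you charge its entries to $w$ with constant multiplicity. Your two charging schemes give exactly the bounds you said you wanted to avoid: charging to boundary edges gives $\sum_u|E\negthinspace I[u]|\le|E|$ (each edge leaves at most one subgraph), and charging to (landmark, frontier vertex) pairs gives $O(k|V|)$; neither is $O(|V|)$. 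Concretely, if each of the $k=\log{|V|}\sqrt{|V|}$ landmarks has an edge into each of $|V|/2$ receiver vertices, every receiver is a frontier vertex of $k-1$ subgraphs, so $\sum_u|E\negthinspace I[u]|=\Theta(k|V|)=\omega(|V|)$ entries, each storing at least one unit; when $2^{|\mathcal{L}|}=o(k)$ (e.g. constant $|\mathcal{L}|$) this already exceeds the budget of $O(|V|2^{|\mathcal{L}|})$ units, so no charging argument with $O(1)$ multiplicity per vertex can exist. What your argument honestly establishes is $O\bigl(\min(|E|,k|V|)\cdot2^{|\mathcal{L}|}\cdot(\log{|V|}+|\mathcal{L}|)\bigr)$, which collapses to the theorem's bound only when $|E|=O(|V|)$. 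In short, you correctly isolated the crux that the paper glosses over by fiat, but the step that was supposed to resolve it is false as stated, and that is a genuine gap.
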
	
		\begin{proof}
			Each entry $I\negthinspace I[u]\cup E\negthinspace I^T[u]\cup D[u]$ of the local index contains  $O(n2^{|\mathcal{L}|})$ elements, assuming $\mathcal{F}(u)$ contains $n$ vertices. The size of each element is $O(\log{|V|}+|\mathcal{L}|)$. Hence, the total indexing space complexity is $\sum\limits_{u\in\mathcal{I}}O(n2^{|\mathcal{L}|}(\log{|V|}+|\mathcal{L}|)=O\big(|V|2^{|\mathcal{L}|}\times(\log{|V|}+|\mathcal{L}|)\big)$.
			
		\end{proof}
		
		Comparing to the reported indexing time and space complexities of the traditional landmark indexing strategy~\cite{Valstar17Landmark} in Section~\ref{section:onlineSearch:LCR}, our local index could bound the indexing consumption, which is independent of the number of the chosen landmarks~($|\mathcal{I}|$).
		
	\subsection{Search Algorithm}\label{section:libaq:search}
	In order to provide INS~(Algorithm~\ref{algorithm:IML}) the ability of perceiving a better search direction, and breaking the fixed search directions of traditional uninformed search methods, we implement two data structures: a priority heap $\mathbb{H}$~(Line~\ref{line:libqa:initial_H}) and a priority queue $\mathbb{Q}$~(Line~\ref{line:libqa:initla_Q}) to be an ``evaluation function'' in our informed search strategy. Before introducing the priorities of $\mathbb{H}$ and $\mathbb{Q}$, we let $\rho(s,t)$ denote an estimate distance from a vertex $s$ to a vertex $t$, where $\rho(s,t)=D(s.A_\mathcal{F},t.A_\mathcal{F})$~(Section~\ref{section:libaq:indexing_strategy}).  

	Firstly, the main function of INS is similar to that of UIS$^*$ from Line~\ref{line:libqa:return_q_t_while_if} to Line~\ref{line:libqa:return_q_t_while_elsif}, while, in each iteration of loop~(Line~\ref{line:libqa:while_h}-\ref{line:libqa:return_q_t_while_elsif}), INS takes the top element $v$ out of $\mathbb{H}$. Note that, for two vertices $u,v\in V(S,G)$, $u$ is on the top of $v$ in $\mathbb{H}$, when: (\romannumeral1) $close[u]=F$ and $close[v]=N$; 
	(\romannumeral2) if $close[u]=close[v]=F$, $\rho(u,t)\leq\rho(v,t)$ or $u\in\mathcal{I}\wedge v\notin\mathcal{I}$; (\romannumeral3) if $close[u]=close[v]=N$, $\rho(s,u)\leq\rho(s,v)$ or $u\in\mathcal{I}\wedge v\notin\mathcal{I}$;


	Similar to UIS$^*$, INS also initialize a global variable throughout the algorithm. Instead of implementing a stack, we use a queue $\mathbb{Q}$ to increase the probability of encountering the vertices in $\mathcal{I}$ of the bijection $\mathcal{F}$, as the local index covers a portion of the full TC~(Section~\ref{section:onlineSearch:LCR}) of the input KG $G$. Even though $\mathbb{Q}$ is initialized in the main function with an element $s$~(Line~\ref{line:libqa:initla_Q}), the other operations of adding an element into $\mathbb{Q}$ or taking an element out of $\mathbb{Q}$ are in function LCS($s^*,t^*,L,B$). Besides, for two vertices $u$ and $v$ in $\mathbb{Q}$, it stores $u$ in front of $v$, when: 
		(\romannumeral1) $close[u]\negthickspace=\negthickspace T\wedge close[v]\negthickspace=\negthickspace F$;
		(\romannumeral2) $u.A_\mathcal{F}\negmedspace=\negmedspace t^*\negthickspace.A_\mathcal{F}\negmedspace\neq\negmedspace v.A_\mathcal{F}$; 
		(\romannumeral3) $u\negmedspace\in\negmedspace\mathcal{I}\wedge v\negmedspace\notin\negmedspace\mathcal{I}$; 
		(\romannumeral4) $\rho(u,t^*)\negmedspace\leq\negmedspace\rho(v,t^*)$; 
		(\romannumeral5) $u,v\negmedspace\notin\negmedspace\mathcal{I}$, $close[l_u]\negthickspace=\negthickspace N\wedge close[l_v]\negthickspace\neq\negthickspace N$;
		(\romannumeral6) otherwise, $u$ is added into $\mathbb{Q}$ before $v$. 
	
	Note that, for two elements $x$ and $y$ in $\mathbb{Q}$, if $x$ and $y$ represent a same vertex in $G$, $\mathbb{Q}$ deletes the first added element. Plus, according to the above $\mathbb{Q}$ priority rules,  if $close[x]\negthickspace=\negthickspace T\wedge close[y]\negthickspace=\negthickspace F$, $x$ is in the front of $y$ in $\mathbb{Q}$. Then, in INS, the elements, which are added into $\mathbb{Q}$ in the k$^{th}$ function LCS invocation with parameter $B$$=$$T$, are processed first in this invocation.  
		
	In function LCS($s^*,t^*,L,B$), for each element $u$ taken out from $\mathbb{Q}$~(Line~\ref{line:libqa:lcs:while_take}), we traverse the edges~($e$$=$$(u,l,w)\wedge l\negthickspace\in\negthickspace L$) that are incident to $u$ by a loop~(Lines~\ref{line:libqa:lcs:while_for}-\ref{line:libqa:lcs:for_if3_return}). Initially, if $t^*\negthickspace.A_\mathcal{F}\negmedspace=\negmedspace w$~(subgraph $\mathcal{F}(u)$ contains vertex $t^*$), in Line~\ref{line:libqa:lcs:for_if}, we implement a function Check($I\negthinspace I[w]$,$t^*$) to evaluate the existence of $w\stackrel{L}{\leadsto}t^*$, where if $\exists (t^*,\mathfrak{L})\negthickspace\in\negthickspace I\negthinspace I[w]\wedge \exists L_i\negthickspace\in\negthickspace \mathfrak{L}$ and $L_i\negthickspace\subseteq\negthickspace L$, function Check returns true, otherwise, false.
	
	\begin{algorithm}[t]
		\caption{INS Algorithm($G$, $Q$)}
		\label{algorithm:IML}
		\begin{algorithmic}[1]
			\renewcommand{\algorithmicrequire}{\textbf{Input:}}
			\renewcommand{\algorithmicensure}{\textbf{Output:}}
			\renewcommand{\algorithmiccomment}[1]{// #1}
			
			\REQUIRE $G=(V, E, \mathcal{L}, L_{S})$ represents a KG, \\$Q=(s,t,L,S)$ is a LSCR query,\\ $V(S,G)$ is obtained by implementing a SPARQL engine.
			\ENSURE  The answer of Q.
			
			\STATE Let $\mathbb{H}$ be a priority heap initialized by $V(S,G)$\label{line:libqa:initial_H}
			\STATE Let $\mathbb{Q}$ be a global priority queue with an element $s$\label{line:libqa:initla_Q}
			
			\COMMENT{The priorities of the elements in both $\mathbb{H}$ and $\mathbb{Q}$ are depnedent on $s$, $t$, $L_S$ and $\mathcal{F}$ }
			
			\STATE $close[s]\leftarrow F$  \COMMENT{The initial values in $close$ are set to N}\label{line:libqa:initial_s_F}
			
			\STATE\textbf{while} $\mathbb{H}$ is not empty \textbf{do}\label{line:libqa:while_h}
			\STATE \quad Take the top element $v$ from $\mathbb{H}$ \label{line:libqa:while_take}
			\STATE \quad \textbf{if} $close[v]=N$ \textbf{then} \label{line:libqa:while_if}
			\STATE \quad \quad \textbf{if} $v=s$ \textbf{or} $v=t$ \textbf{then} \label{line:libqa:while_if_if}
			\STATE \quad \quad \quad \textbf{return} $Q=\,$LCS$(s,t,L,F)$ \label{line:libqa:return_t_v}
			\STATE \quad \quad \textbf{else if} LCS$(s,v,L,F)$ \textbf{then} \label{line:libqa:while_if_elsif}
			\STATE \quad \quad \quad \textbf{if} LCS$(v,t,L,T)$ \textbf{then}  \label{line:libqa:while_if_elsif_if}
			\STATE \quad \quad \quad \quad \textbf{return} $Q=T$ \label{line:libqa:return_q_t_while_if}
			\STATE \quad \textbf{else if} $close[v]=F$ \textbf{then} \label{line:libqa:while_elsif}
			\STATE \quad \quad \textbf{if} LCS$(v,t,L,T)$ \textbf{then} \label{line:libqa:while_elsif_if}
			\STATE \quad \quad \quad \textbf{return} $Q=T$ \label{line:libqa:return_q_t_while_elsif}
			
			\STATE \textbf{return} $Q=F$  \label{line:libqa:final_q_f}
			
			\rule[2pt]{7.85cm}{0.02em}
			
			\STATE	\textbf{Function:} LCS$(s^*, t^*, L, B)$  \label{line:libqa:lcs:function_start} \COMMENT{Verify whether $s^*\negthickspace\stackrel{L}{\leadsto}\negthickspace t^*$}
			\STATE \quad \textbf{if} $B=T$ \textbf{then}  \label{line:libqa:lcs:initial_if}
			\STATE \quad \quad $close[s^*]\leftarrow T$, add $s^*$ into $\mathbb{Q}$  \label{line:libqa:lcs:initial_if_add_t}
			\STATE \quad \textbf{while} $B=F\wedge \mathbb{Q}\neq \phi$ \textbf{or} $B=close[\mathbb{Q}.first]=T$ \textbf{do}  \label{line:libqa:lcs:while}
			\STATE \quad \quad Take an element $u$ from $\mathbb{Q}$  \label{line:libqa:lcs:while_take}
			\STATE \quad \quad \textbf{for} each edge $e=(u,l,w)$, $l\in L$, incident to $u$ \textbf{do}  \label{line:libqa:lcs:while_for}
			
			\STATE \quad \quad \quad \textbf{if} $t^*\negthickspace.A_\mathcal{F}\negthickspace=\negthickspace w$ \textbf{and} Check($I\negthinspace I[w], t^*$) \textbf{then}  \label{line:libqa:lcs:for_if}
			\STATE \quad \quad \quad \quad \textbf{return true}  
			
			\STATE \quad \quad \quad \textbf{if} $w\in\mathcal{I}$ \textbf{then}  \label{line:libqa:lcs:for_if2} \COMMENT{Pruning the search space}
			\STATE \quad \quad \quad \quad Cut($I\negthinspace I[w]$) \textbf{and} Push($E\negthinspace I^T[w]$) 
			
			\STATE \quad \quad \quad \textbf{else if} $close[w]$$=$$N\:$\textbf{or}$\:close[w]\negthickspace=\negthickspace F$$\wedge$$B$$=$$T$ \textbf{then}  \label{line:libqa:lcs:for_elsif}	
			\STATE \quad \quad \quad \quad Add $w$ into $\mathbb{Q}$, $close[w]\leftarrow B$ \label{line:libqa:lcs:for_elsif_add}
			
			\STATE \quad \quad \quad \textbf{if} $w=t^*$ \textbf{then}  \label{line:libqa:lcs:for_if3}
			\STATE \quad \quad \quad \quad \textbf{return true}  \label{line:libqa:lcs:for_if3_return}
			
			\STATE \quad \textbf{return false}  \label{line:libqa:lcs:final_return}
		\end{algorithmic}
	\end{algorithm}
	
	Then, in Line~\ref{line:libqa:lcs:for_if2}, if $w\in \mathcal{I}$, two functions in LCS, Cut($I\negthinspace I[w]$) and Push($E\negthinspace I^T[w]$), are introduced with the precomputed local index $I\negthinspace I[w]$ and $E\negthinspace I^T[w]$, to prune the space and improve the query efficiency. For a pair $(x,\mathfrak{L})$ in $I\negthinspace I[w]$, the former set $close[x]$ to $B$, if $close[x]$$\neq$$T$ and $\exists L_i\negthickspace\in\negthickspace \mathfrak{L}\wedge L_i\negthickspace\subseteq\negthickspace L$; for a pair $(L_x,\mathcal{V})$ in $E\negthinspace I^T[w]$, the latter adds each vertex $u\in\mathcal{V}$ into $\mathbb{Q}$, and changes $close[u]$ to $B$, if $L_x\negthickspace\subseteq\negthickspace L\wedge B\negthickspace=\negthickspace T\wedge close[u]\negthickspace\neq\negthickspace T$ or $L_x\negthickspace\subseteq\negthickspace L\wedge B\negthickspace=\negthickspace F\wedge close[u]\negthickspace =\negthickspace N$. After that, in Line~\ref{line:libqa:lcs:for_elsif}, if $w\notin \mathcal{I}$, we add $w$ into $Q$ and set $close$ to $B$, in the cases that $close[w]$$=$$N$ or $close[w]\negthickspace=\negthickspace F \wedge B$$=$$T$, which is same to UIS$^*$. 

	\textbf{Analysis.} We conduct the analysis on a KG $G$$=$$(V, E, \mathcal{L}$, $L_{S})$, where the query time complexity and the correctness are proved in Theorem~\ref{theorem:ins_query_time_complexity} and Theorem~\ref{theorem:ins_query_correctness}, respectively.
	
	\begin{theorem}
		\label{theorem:ins_query_time_complexity}
		The time complexity of the search strategy is $O\big(|E|+|V|\times(\log{|V|}+2^{|\mathcal{L}|})\big)$.
	\end{theorem}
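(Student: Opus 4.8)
The plan is to split the running cost of INS into three independent parts that track its three mechanisms --- the vertex-and-edge exploration, the two priority structures $\mathbb{H}$ and $\mathbb{Q}$, and the local-index lookups triggered at landmarks --- bound each by one summand of the claimed expression, and add them. For the first part I would reuse the structure already established for UIS$^*$. Since INS keeps the main-function skeleton of UIS$^*$ and the same $close$ bookkeeping of Definition~\ref{definition:close}, its search paths again form a single tree of the shape in Definition~\ref{definition:search_tree}; by the argument of Lemma~\ref{lemma:tree} every vertex is assigned state $F$ at most once and state $T$ at most once, so each vertex is explored at most twice. Hence the number of vertex expansions is $O(|V|)$ and, summing the out-edges scanned in Line~\ref{line:libqa:lcs:while_for} over all expansions, the total edge work is $O(|E|)$, giving the $|E|$ summand.

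Second, I would charge the priority structures. Both $\mathbb{H}$, initialised from $V(S,G)\subseteq V$, and the global queue $\mathbb{Q}$ hold at most $|V|$ elements, and every priority comparison is an $O(1)$ table lookup (each $\rho(\cdot,\cdot)$ reads a precomputed $D[\cdot]$ entry), so every insertion and extraction costs $O(\log|V|)$. Because the two-state bound above lets each vertex enter and leave these structures only a constant number of times, the number of priority operations is $O(|V|)$, contributing $O(|V|\log|V|)$.

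Third --- where the $2^{|\mathcal{L}|}$ factor and the main difficulty lie --- I would bound the index operations Check, Cut and Push performed whenever the search reaches a landmark $w\in\mathcal{I}$. The key structural fact is that BFSTraverse assigns each vertex to exactly one subgraph through the attribute $A_\mathcal{F}$, so the subgraphs $\{\mathcal{F}(u)\}_{u\in\mathcal{I}}$ partition a subset of $V$ and $\sum_{u\in\mathcal{I}}|V_u|\le|V|$. Each pair stored in $I\negthinspace I[u]$ carries a collection $M(u,v|\mathcal{F}(u))$ of at most $2^{|\mathcal{L}|}$ minimal label sets; encoding label sets as bitmasks makes each subset test $L_i\subseteq L$ cost $O(1)$, so scanning one $I\negthinspace I[u]$ during Cut, and dually during Check, costs $O(|V_u|\,2^{|\mathcal{L}|})$. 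Summing over all landmarks touched gives $O(|V|\,2^{|\mathcal{L}|})$.

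The obstacle is the Push($E\negthinspace I^T[w]$) step, which injects external vertices lying in \emph{other} subgraphs and could a priori be charged many times. I would resolve this by noting that Push activates a vertex $u$ only when $close[u]\neq T$ (respectively $close[u]=N$ when $B=F$), so the $close$ guard caps the number of genuine insertions per vertex by the same two-state bound, and the label-set scan over $E\negthinspace I^T[w]$ is likewise bounded by the $2^{|\mathcal{L}|}$ budget attached to the vertices it reaches; hence the Push contribution also fits inside $O(|V|\,2^{|\mathcal{L}|})$. Adding the three contributions $O(|E|)$, $O(|V|\log|V|)$ and $O(|V|\,2^{|\mathcal{L}|})$ yields the stated $O\big(|E|+|V|\times(\log|V|+2^{|\mathcal{L}|})\big)$.
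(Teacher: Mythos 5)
Your proof is correct and follows essentially the same decomposition as the paper's: graph traversal bounded by the at-most-twice exploration of each vertex giving $O(|V|+|E|)$, priority-structure operations giving $O(|V|\log|V|)$, and local-index scans summed over landmarks giving $O(|V|\,2^{|\mathcal{L}|})$. Your version is somewhat more detailed (the explicit partition argument via $A_\mathcal{F}$ and the treatment of Push guarded by $close$), but the approach and the three summands are the same as in the paper.
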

	\begin{proof}
		Each entry $I\negthinspace I[u]\negthinspace\cup\negthinspace E\negthinspace I^T[u]\negthinspace\cup\negthinspace D[u]$ in the local index contains $O(n2^{\mathcal{L}})$ elements, supposing $\mathcal{F}(u)$ contains $n$ vertices. In the worst case, we visit all the landmarks of $\mathcal{I}$ in one query processing so that the total local index visiting time complexity is  $O(|V|2^{\mathcal{L}})$. We at most traverse the input KG $G$ twice. Thus, the graph traversing time complexity is $O(|V|+|E|)$. Besides, we perform a complete ordering of the $|V|$ elements in $\mathbb{Q}$, but the above operation can occur once only, so its time complexity is $O(|V|\log{|V|})$. Overall, the total complexity is $O(|E|+|V|\times(\log{|V|}+2^{|\mathcal{L}|}))$.
	\end{proof}
	
	\begin{theorem}
		\label{theorem:ins_query_correctness}
		$Q$ is a true query, iff, INS returns true.
	\end{theorem}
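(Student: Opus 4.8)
The plan is to reduce the correctness of INS to that of UIS$^*$ (Theorem~\ref{theorem:SQLCS}), exploiting the fact that the two algorithms share the same control structure in their main loops and in function LCS. The only differences are that INS orders the frontier through the priority heap $\mathbb{H}$ and the priority queue $\mathbb{Q}$ instead of a stack, and that it short-circuits portions of the search through the local index via the functions Check, Cut and Push whenever a landmark $w\in\mathcal{I}$ is encountered. Since the priorities only permute the order in which vertices leave $\mathbb{H}$ and $\mathbb{Q}$, and never discard a vertex that UIS$^*$ would eventually process, they cannot alter the set of explored vertices; hence the first task is merely to confirm that reordering preserves termination and the explored set, which follows from the same argument used for Lemma~\ref{lemma:tree}.

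The substantive work is to show that the index-based shortcuts are both sound and complete with respect to label-constrained reachability, so that the invariant of Lemma~\ref{lemma:s_to_everynode}---namely $s\stackrel{L}{\leadsto}v\Leftrightarrow close[v]\neq N$ under $s\stackrel{L}{\leadsto}s$---continues to hold for INS. I would argue soundness as follows. When Check($I\negthinspace I[w],t^*$) returns true, Theorem~\ref{theorem:index_correctness} guarantees $w\stackrel{L}{\leadsto}t^*$ inside $\mathcal{F}(w)$, hence in $G$, so every \emph{true} returned in Line~\ref{line:libqa:lcs:for_if} is justified. When Push($E\negthinspace I^T[w]$) inserts a vertex $u$ and sets $close[u]$, the pair $(L_x,\mathcal{V})\in E\negthinspace I^T[w]$ with $L_x\subseteq L$ and $u\in\mathcal{V}$ certifies $w\stackrel{L}{\leadsto}u$ by Theorem~\ref{theorem:EIT}; combined with the fact that $w$ itself was reached under $L$, we obtain $s\stackrel{L}{\leadsto}u$, so the state assigned to $close[u]$ is correct. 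The same reasoning applied to $I\negthinspace I[w]$ via Theorem~\ref{theorem:index_correctness} validates Cut($I\negthinspace I[w]$).

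For completeness I would verify that no reachable vertex is skipped. The search relies on the index only at landmark vertices; for every non-landmark the ordinary edge relaxation of Lines~\ref{line:libqa:lcs:for_elsif}--\ref{line:libqa:lcs:for_elsif_add} is retained exactly as in UIS$^*$. Because $I\negthinspace I[w]$ and $E\negthinspace I^T[w]$ together encode, by Theorem~\ref{theorem:index_correctness}, all label-constrained reachability from $w$ both inside $\mathcal{F}(w)$ and across its out-boundary $E_o$, and because Push re-inserts the far endpoints of the usable boundary edges back into $\mathbb{Q}$ for continued traversal, processing a landmark $w$ through Cut and Push reaches at least every vertex that a plain traversal from $w$ under $L$ would reach. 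Hence the union of index shortcuts at landmarks and explicit relaxation elsewhere marks precisely $\{v\mid s\stackrel{L}{\leadsto}v\}$, re-establishing Lemma~\ref{lemma:s_to_everynode} for INS, after which the search-tree characterization of Lemma~\ref{lemma:tree} carries over verbatim.

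With the invariant in hand, the theorem follows as in Theorem~\ref{theorem:SQLCS}. For the forward direction, a true query yields, by Theorem~\ref{theorem:LSCR}, a path $p=\langle s,\dots,v,\dots,t\rangle$ with $\mathbb{L}(p)\subseteq L$ and some $v\in\mathbb{V}(p)\cap V(S,G)$; the invariant forces $close[v]=F$ after the $B\negmedspace=\negmedspace F$ phase from $s$ and $close[t]=T$ after the subsequent $B\negmedspace=\negmedspace T$ phase from $v$, so INS returns true. Conversely, a \emph{true} return produces a node $t^T$ in $\mathbb{T}$ reached from some $v\in V(S,G)$ that is itself reachable from $s$, which by Definition~\ref{definition:close} and Theorem~\ref{theorem:LSCR} exhibits a witness for $s\stackrel{L,S}{\leadsto}t$. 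I expect the main obstacle to be the completeness half of the index argument: one must rule out that replacing a genuine traversal of a landmark's subgraph by Cut/Push lookups ever hides an internal or boundary vertex lying on the only admissible path, and this hinges delicately on the claim that $E\negthinspace I^T[w]$ captures every out-edge of $\mathcal{F}(w)$ that is usable under $L$.
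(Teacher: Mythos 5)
Your proposal is correct and takes essentially the same route as the paper: the paper's own proof also reduces INS to UIS$^*$ (Theorem~\ref{theorem:SQLCS}) by enumerating the differences---the local index, the ordering of $V(S,G)$, and the priority queue replacing the stack (including the absence of UIS$^*$'s cleanup step)---and arguing via Theorem~\ref{theorem:index_correctness} and the queue's properties that none affects correctness. Your write-up is in fact more explicit than the paper's one-paragraph argument, particularly in justifying soundness of Check/Cut/Push through Theorems~\ref{theorem:index_correctness} and~\ref{theorem:EIT} and in confronting the completeness question of whether the Cut/Push shortcuts can hide vertices, which the paper leaves implicit.
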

	\begin{proof}
		INS is different to UIS$^*$ in the following aspects: (\romannumeral1) INS utilize a local index constrcuted by Algorithm~\ref{algorithm:Indexing}; (\romannumeral2) INS processes the elements in $V(S,G)$ with orders; (\romannumeral3) INS applies the global variable with a priority queue instead of a stack, and INS does not contain the operation that is in Line~\ref{line:sm:lcs:function_end} of UIS$^*$. According to Theorem~\ref{theorem:index_correctness}, both (\romannumeral1) and (\romannumeral2) do not affect the correctness of INS. Then, the global variable could automatically remove duplicate elements, and keep the elements with the values in the surjection $close$ are $T$ in the front of the queue. Hence, INS still holds the correctness.
	\end{proof}

\section{Experimental Results} \label{section:result}
	In this section, we evaluate the performances of the proposed algorithms: UIS, UIS$^*$ and INS, on both synthetic and real KGs. Specially, since the effect of the label constraints on the query performance has been widely studied~\cite{Jin2010CLR,Valstar17Landmark}, in this section, we are interested in: the indexing time and space consumption of our local index; the influence of the KG scales on the query performance; the impact of the substructure constraints on the query efficiency, including the selectivity of the substructure constraint and the number of the vertices that could satisfy the substructure constraint.  
	
	After introducing the settings and measures in experimental evaluation as follows, we detailedly illustrate the experiments on synthetic and real datasets in Section~\ref{section:r:lubm} and Section~\ref{section:r:yago}, respectively. In addition, based on all the experimental results, we present a discussion in Section~\ref{section:r:discussion}.
		
	\textbf{Settings.} Our experiments run on a machine with the intel i7-9700 CPU, 64 GB RAM, and 1TB disk space. We implement all the proposed algorithms in this paper with Java. Plus, to avoid the effect of the implementation language, the traditional landmark indexing method~\cite{Valstar17Landmark} is also implemented with Java, and the indexing parameters $k$ and $b$ of \cite{Valstar17Landmark} are set to $1250\negmedspace+\negmedspace\sqrt{|V|}$ and $20$, respectively, as that in the experiments of \cite{Valstar17Landmark}. Note that, the indexes that build by local index and \cite{Valstar17Landmark} are stored by the same data structure and on disk. 
		
	Then, we utilize the SPARQL engine that is present in \cite{LKAQ} for both UIS$^*$ and INS, which is also rewritten with Java. The adoption of the SPARQL engine is due to its high efficiency in the experimental setting of this paper. As that provides the query results with the controlling of parameter $UNI_{Max}$, $Max$ and $E\delta$, in order to obtain the full set of $V(S,G)$, according to \cite{LKAQ}, we set the parameters $UNI_{Max}$ and $Max$ to $+\infty$, and set $E\delta$ to $1$. 
		
	Besides, the datasets used in this experiments are the Lehigh University Benchmark~(LUBM)~\cite{LUBM} and YAGO~\cite{YAGO}. The former is a synthetic benchmark based on an ontology of the university domain, and provides a scalable synthetic data generator; the latter is a huge semantic KG, which is derived from Wikipedia, WordNet, etc., and contains about 10 million entities, and 120 million facts. 
		
	\textbf{Query performance measures.} For a group of the evaluation queries in the following experiments, the measures of the query performance include: (\romannumeral1) the average running time of the queries; (\romannumeral2) the average number of the vertices whose states in $close$ are not $N$. The reason for the former measure is obvious, while, we utilize the latter, instead of the number of the nodes in the search tree $\mathbb{T}$~($|\mathbb{T}|$), is because $|\mathbb{T}|$ can be directly reflected by the former measure.   
		
	\begin{table}[t]
		\caption{Synthetic datasets.}
		\label{tab:vertex_edge_number}
		\begin{center}
			\setlength{\tabcolsep}{1.5mm}{
				\begin{tabular}{c|cc|cc|cc}
					\hline \multirow{2}*{Dataset}& \multirow{2}*{Vertex} &\multirow{2}*{Edge} & \multicolumn{2}{c|}{Local index}& \multicolumn{2}{c}{Traditional}\\
					
					&	& & IT(s) & IS(MB)& IT(s) & IS(MB)\\
					\hline
					D0	&   $0.06$M		&	$0.23$M		& 	23		&	4	& 27,171&	11,700\\
					D1  & 	$3.7$M 		&	$13.3$M 	&	1,540	&	136	&-	&-\\
					D2  & 	$7.5$M 		&	$26.6$M 	&	3,119	&	273	&-	&-\\
					D3  &	$11.3$M 	& 	$40.0$M 	&	4,538	&	411	&-	&-\\
					D4  & 	$15.1$M 	& 	$53.4$M 	&	6,203	&	546	&-	&-\\
					D5  &	$18.9$M 	& 	$66.7$M 	&	7,699	&	684	&-	&-\\
					\hline				
				\end{tabular}
			}
		\end{center}
	\end{table}	
	
	\begin{table}[t]
		\caption{Five typical subgraph constraints on the synthetic datasets.}
		\label{tab:SPARQL_query}
		\begin{center}
			\renewcommand\arraystretch{1.4}
			\scalebox{0.93}[0.93]{
				\begin{tabular}{cp{7.9cm}}
					\hline
					\textbf{Name}&\textbf{Description in SPARQL} \\
					\hline
					S1& SELECT ?x WHERE \{ ?x $\langle$ub:researchInterest$\rangle$ `Research12'.\}\\
					
					S2& SELECT ?x WHERE \{ ?x $\langle$ub:researchInterest$\rangle$ `Research12'. ?x $\langle$rdf:type$\rangle$ $\langle$ub:AssociateProfessor$\rangle$. \}\\    			
					
					S3& SELECT ?x WHERE \{?x $\langle$rdf:type$\rangle$ $\langle$ub:UndergraduateStudent$\rangle$. ?x $\langle$ub:takesCourse$\rangle$ ?y. ?y $\langle$rdf:type$\rangle$ $\langle$ub:Course$\rangle$. \}\\
					
					S4& SELECT ?x WHERE \{?x $\langle$ub:name$\rangle$ `GraduateStudent4'. ?x $\langle$ub:takesCourse$\rangle$ ?y$_1$. ?x $\langle$ub:advisor$\rangle$ ?y$_2$. ?x $\langle$ub:memberOf$\rangle$ ?y$_3$. ?z$_1$ $\langle$ub:takesCourse$\rangle$ ?y$_1$. ?y$_2$ $\langle$ub:teacherOf$\rangle$ ?z$_2$. ?y$_2$ $\langle$ub:worksFor$\rangle$ ?z$_3$. ?y$_3$ $\langle$ub:subOrganizationOf$\rangle$ ?z$_4$. \}\\
					
					S5& SELECT ?x WHERE \{?x $\langle$ub:emailAddress$\rangle$ `FullProfessor0@Department0.University0.edu'. ?x $\langle$ub:undergraduate-DegreeFrom$\rangle$ ?y1. ?x $\langle$ub:mastersDegreeFrom$\rangle$ ?y2. ?x $\langle$ub:doctoralDegreeFrom$\rangle$ ?y3.\}\\
					\hline
				\end{tabular}
			}
		\end{center}
	\end{table}
				
	\subsection{Experiments on LUBM}\label{section:r:lubm}
	
	We randomly generate datasets D0-D5~(Table~\ref{tab:vertex_edge_number}), by applying LUBM. Even though D0 contains $40$k vertices and $230$k edges, and is only utilized to compare the indexing consumption of our local index and the traditional landmark indexing~\cite{Valstar17Landmark}, its scale still greatly exceeds that of the most datasets used in~\cite{Jin2010CLR,Valstar17Landmark}. Based on D0-D5, we evaluate the indexing time and space consumption of both local index and the traditional landmark indexing method~\cite{Valstar17Landmark}. The results are demonstrated in Table~\ref{tab:vertex_edge_number}, which confirm our theoretical analysis that, compared to \cite{Valstar17Landmark}, the indexing consumption of our local index is affordable, and increases linearly with the input KG scale. Note that, the indexing processes are limited within eight hours.
			
	After that, we conduct groups of experiments, and each group is related to a certain substructure constraint in S1-S5~(expressed in SPARQL, Table~\ref{tab:SPARQL_query}), and runs on D1-D5. The reason, why we do not consider the queries provided in LUBM~\cite{LUBM}, is that they are designed for evaluating the performance of the RDF engine query optimizers and the \textit{not-ontology-reasoning}~\cite{LinkDataBook}, and are useless for our evaluation. 
	
	We demonstrate the query generation method and the experimental result analysis for the above groups in Section~\ref{section:r:lubm:query_generation} and Section~\ref{section:r:lubm:analysis}, respectively, after introducing the characteristics of S1-S5. Assuming $D$ is a dataset generated by LUBM, whose vertex set is $V$. S1 is a simple and matches only the vertices whose research interest is `Research12'. We treat S1 as our substructure constraint baseline among S1-S5, where  $\frac{|V(S1,D)|}{|V|}\negthickspace\approx\negthickspace 1\tcperthousand$. Comparing to S1, S2 contains a normal selectivity, which matches the vertices which also should be associate professors, and $\frac{|V(S2,D)|}{|V(S1,D)|}\negthickspace\approx\negthickspace 50\%$. Then, we devise S3, where $\frac{|V(S3,D)|}{|V(S1,D)|}\negthickspace\approx\negthickspace 120$, which means there is a large set of vertices in $D$ that could satisfy S3. Then, S4 is related to high selectivity and $\frac{|V(S4,D)|}{|V(S1,D)|}\negthickspace\approx\negthickspace 1$, where the vertices in $|V($S4$,D)|$ are named `GraduateStudent4', take a course $?y_1$, etc. Specially, $|V($S5$,D)|\negthickspace=\negthickspace1$.

	\subsubsection{Evaluation query generation} \label{section:r:lubm:query_generation}
	Generating an evaluation query $Q\negthickspace=\negthickspace(s,t,L,S)$ in our experiments is intricate, because, if $s$ reaches $t$ only with a few steps, $Q$ cannot test the limits of the search algorithms. Plus, due to the existence of the irrelevant factors that affect query performance, we must ensure our experiments are immune to such irrelevant factors.
	

	\textbf{Method of controlling irrelevant variable.} Apparently, label constraint is one of the main factors that affect the query efficiency, but we do not focus on that in the experiments, because the impact of the label constraints on the query efficiency is widely studied in the LCR works. Supposing $\mathfrak{t}$ is the size of an input KG label set, for each group of queries, we guarantee the sizes of the label constraints are in the range of $[0.2\mathfrak{t}, 0.8\mathfrak{t}]$, and ensure a uniform distribution of the such sizes among the ranges of $[0.2\mathfrak{t}, 0.4\mathfrak{t})$, $[0.4\mathfrak{t}, 0.6\mathfrak{t})$ and $[0.6\mathfrak{t}, 0.8\mathfrak{t}]$. Besides, for a group of false-LSCR queries, the performances of the search methods may be affected by the types of false queries. For example, a false-LSCR query $Q$~($s\negthickspace\stackrel{L,S}{\nrightarrow}\negthickspace t$) has three possibilities: $s\negthickspace\stackrel{L}{\nrightarrow}\negthickspace t\negmedspace\wedge\negmedspace s\negthickspace\stackrel{S}{\leadsto}\negthickspace t$, $s\negthickspace\stackrel{L}{\leadsto}\negthickspace t\negmedspace\wedge\negmedspace s\negthickspace\stackrel{S}{\nrightarrow}\negthickspace t$ and $s\negthickspace\stackrel{L}{\nrightarrow}\negthickspace t\negmedspace\wedge\negmedspace s\negthickspace\stackrel{S}{\nrightarrow}\negthickspace t$. Thus, for a group of the generated false queries, we also guarantee a uniform proportion of the above possibilities.

	\textbf{Query generation method.}  For a substructure constraint Si of Table~\ref{tab:SPARQL_query} and a dataset Dj of Table~\ref{tab:vertex_edge_number}, we generate two groups of LSCR queries: $1,000$ true-queries~($\mathcal{Q}_t$) and $1,000$ false-queries~($\mathcal{Q}_f$). Assuming $Q\negthickspace=\negthickspace(s,t,L,$Si$)$, the process starts with randomly selecting $s$ and $L$ on Dj. Due to the restriction for controlling the irrelevant variables, if $Q$ cannot be added into $\mathcal{Q}_t$ or $\mathcal{Q}_f$, we re-generate $L$. Then, to select a sound target vertex $t$ of $Q$, we start a BFS from $s$, and stop it after $\log|V|$ iterations, after which $t$ is a BFS-unexplored vertex. This is for filtering out the vertices that $s$ reaches only with a few steps. In addition, we apply UIS to find out whether $Q$ is a true query~($Q\negthickspace=\negthickspace T$) or not~($Q\negthickspace=\negthickspace F$), and record the number of the nodes in the search tree, denoted by $|\mathbb{T}|$. With a random number $min$ in $[10\log{|V|}, \frac{|V|}{10\log{|V|}}]$, if $|\mathbb{T}|<min$, we discard $Q$. If $Q\negthickspace=\negthickspace T$ and $|\mathcal{Q}_t|\negthickspace<\negthickspace1,000$, we add $Q$ into $\mathcal{Q}_t$; if $Q\negthickspace=\negthickspace F\wedge|\mathcal{Q}_f|\negthickspace<\negthickspace 1,000$ and $\mathcal{Q}_f\negmedspace\cup\negmedspace \mathcal{Q}$ satisfies the above restriction, $Q$ belongs to $\mathcal{Q}_f$.

	\begin{figure}[t]
		\centering
		\renewcommand{\thesubfigure}{}
		\subfigure[(a) True]{
			\includegraphics[scale = 0.6]{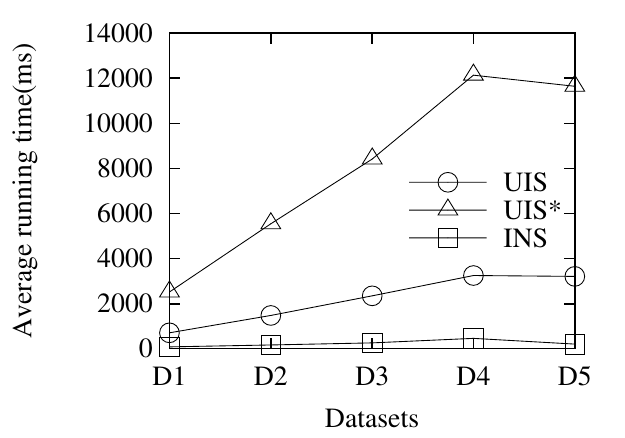}}
		\subfigure[(b) False]{
			\includegraphics[scale = 0.6]{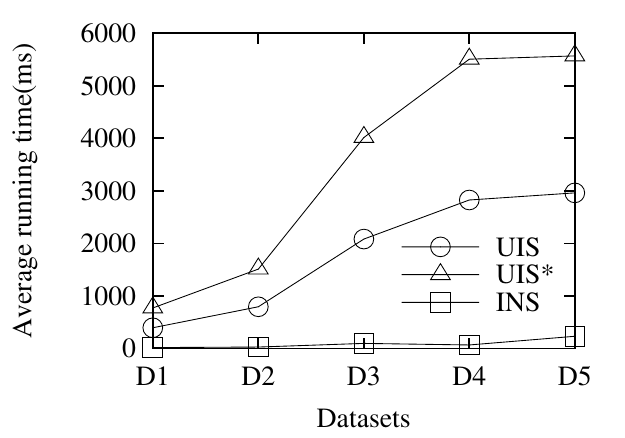}}
		\subfigure[(c) True]{
			\includegraphics[scale = 0.6]{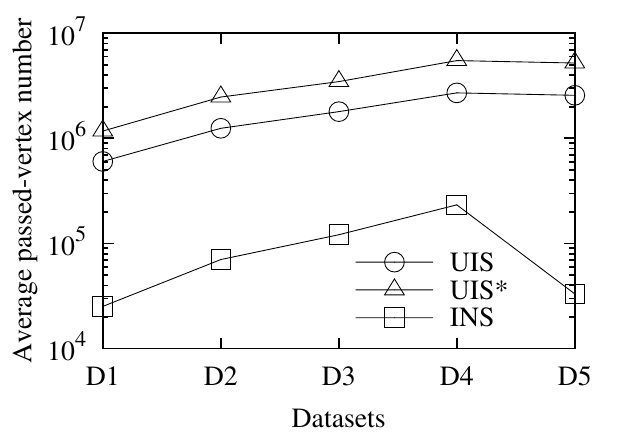}}
		\subfigure[(d) False]{
			\includegraphics[scale = 0.6]{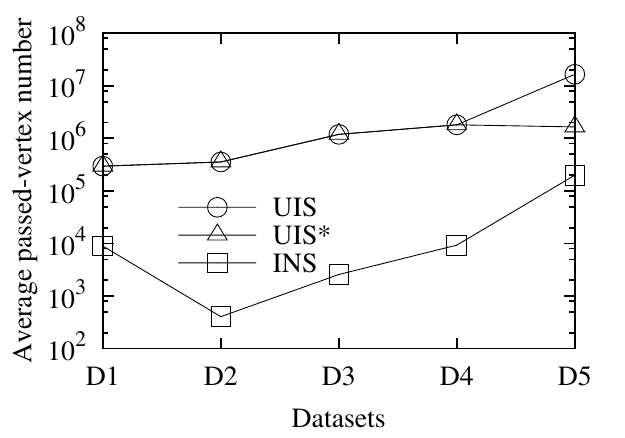}}
		\caption{Results related to the substructure constraint S1.}
		\label{fig:lubm_s1}
	\end{figure}

	\subsubsection{Experimental result analysis}\label{section:r:lubm:analysis}
	
	\begin{figure}[t]
		\centering
		\renewcommand{\thesubfigure}{}
		\subfigure[(a) True]{
			\includegraphics[scale = 0.6]{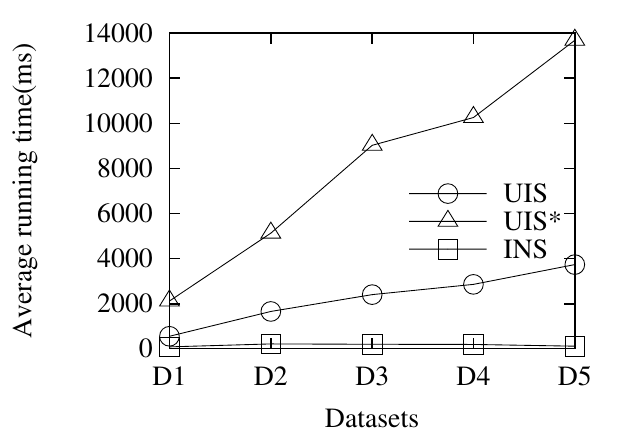}}
		\subfigure[(b) False]{
			\includegraphics[scale = 0.6]{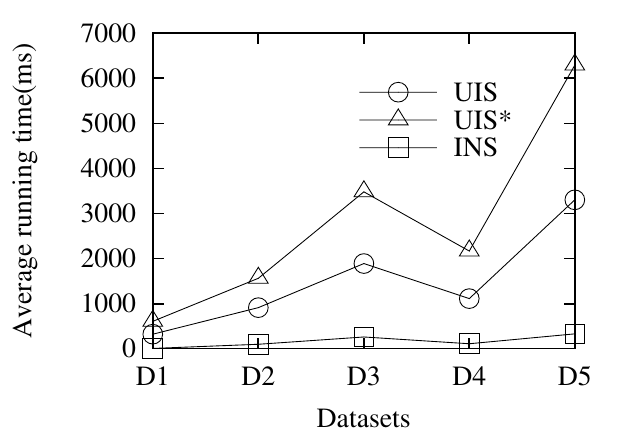}}
		\subfigure[(c) True]{
			\includegraphics[scale = 0.6]{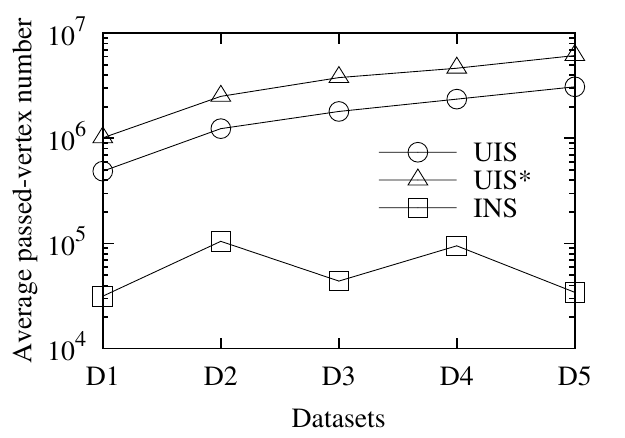}}
		\subfigure[(d) False]{
			\includegraphics[scale = 0.6]{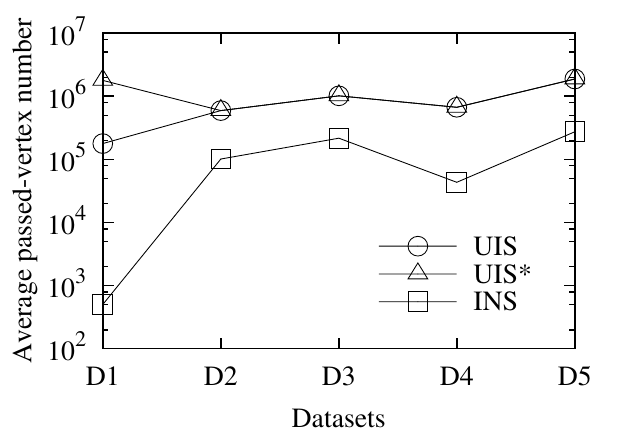}}
		\caption{Results related to the substructure constraint S2.}
		\label{fig:lubm_s2}
	\end{figure}
	
	For the results about UIS on LSCR queries related to S1, the average running times, of both true~(Figure~\ref{fig:lubm_s1}(a)) and false~(Figure~\ref{fig:lubm_s1}(b)) queries, all show a linear upward trend with the increase of the input KG scale. The average passed-vertex number of UIS on such queries is about $10^6$~(Figure~\ref{fig:lubm_s1}(c) and Figure~\ref{fig:lubm_s1}(d)). Comparing to S1, the query performances of UIS, corresponding to S2~(normal selectivity) and S4~(high selectivity), are basically unchanged, according to Figure~\ref{fig:lubm_s2} and Figure~\ref{fig:lubm_s4}. However, for the true-LSCR queries under the substructure constraint S3~(Figure~\ref{fig:lubm_s3}), UIS consumes five times running time, compared to S1. Plus, the ranges of the average running times, of both true and false queries corresponding to S5~(Figure~\ref{fig:lubm_s5}), are $[5$s$,35$s$]$ and $[1$s$,9$s$]$, respectively, while the average passed-vertex number is still around $10^6$ of either true or false queries. That is because, the bigger $|V($Si$,$Dj$)|$ is, the earlier UIS meets a vertex $v$~($v\in V(S,G)$) in a search path. Such results reflect that $|V($Si$,$Dj$)|$ effects the performance of UIS to some extent.
	
	\begin{figure}[t]
		\centering
		\renewcommand{\thesubfigure}{}
		\subfigure[(a) True]{
			\includegraphics[scale = 0.6]{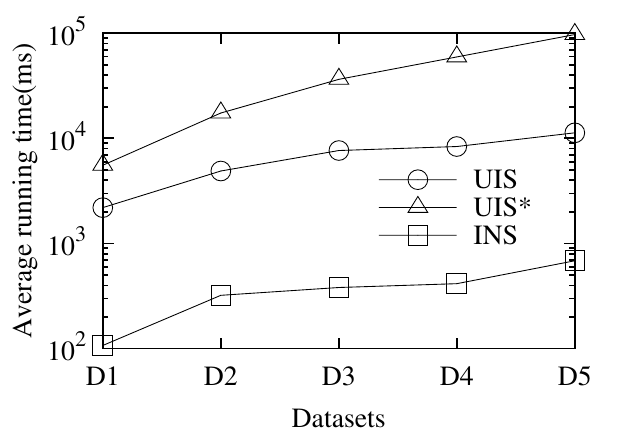}}
		\subfigure[(b) False]{
			\includegraphics[scale = 0.6]{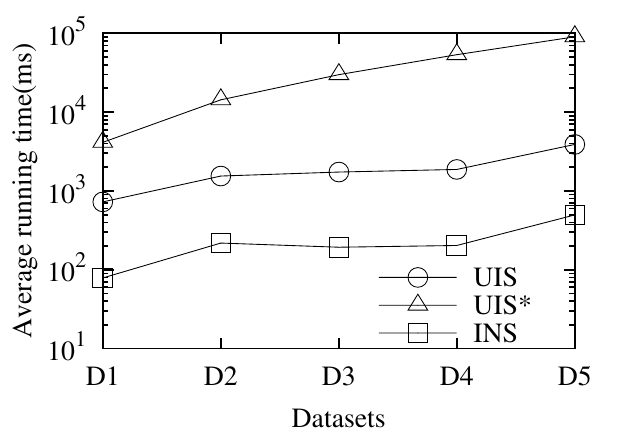}}
		\subfigure[(c) True]{
			\includegraphics[scale = 0.6]{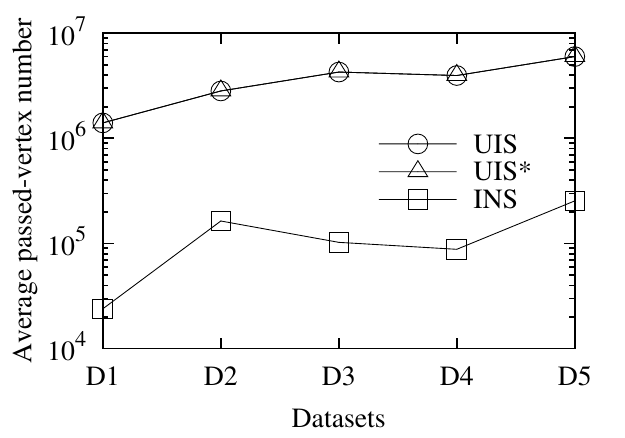}}
		\subfigure[(d) False]{
			\includegraphics[scale = 0.6]{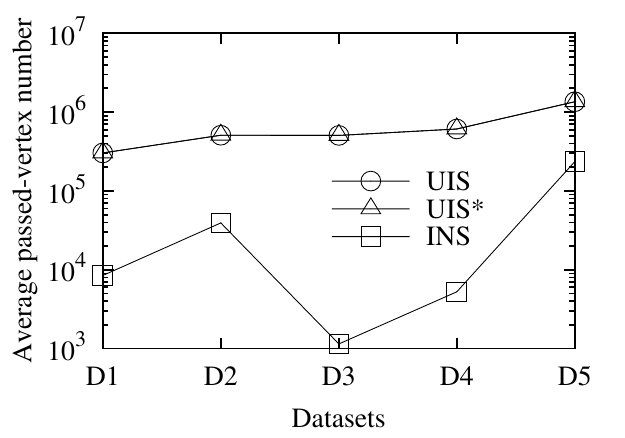}}
		\caption{Results related to the substructure constraint S3.}
		\label{fig:lubm_s3}
	\end{figure}
	
		\begin{figure}[t]
		\centering
		\renewcommand{\thesubfigure}{}
		\subfigure[(a) True]{
			\includegraphics[scale = 0.6]{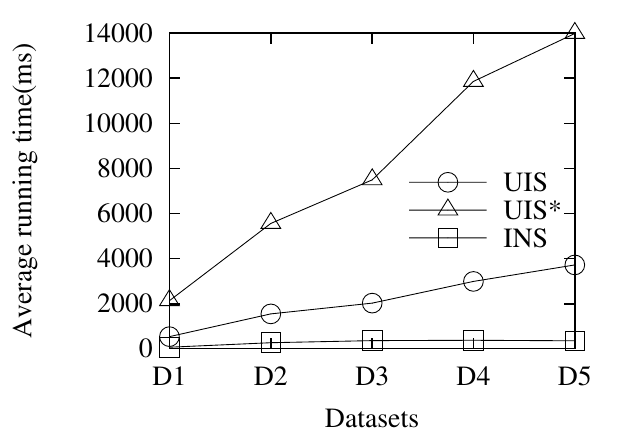}}
		\subfigure[(b) False]{
			\includegraphics[scale = 0.6]{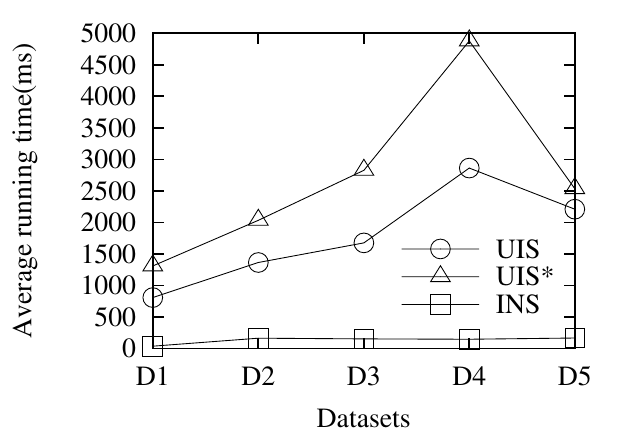}}
		\subfigure[(c) True]{
			\includegraphics[scale = 0.6]{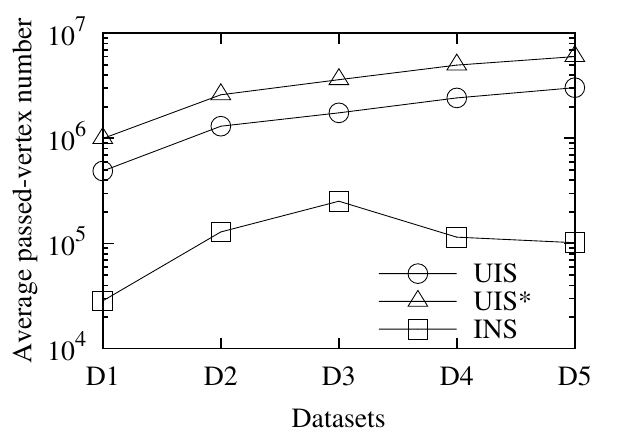}}
		\subfigure[(d) False]{
			\includegraphics[scale = 0.6]{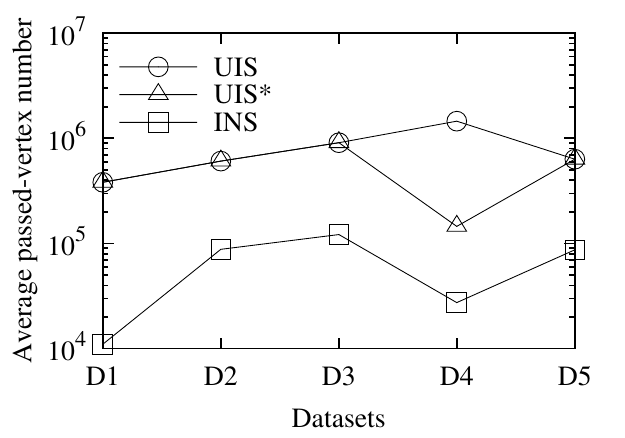}}
		\caption{Results related to the substructure constraint S4.}
		\label{fig:lubm_s4}
	\end{figure}
	
	We observe that the average running times of both UIS$^*$ and INS also grow linearly with the increase of the KG scale. The substructure constraints with normal~(S2, Figure~\ref{fig:lubm_s2}) and high~(S4, Figure~\ref{fig:lubm_s4}) selectivity do not show a significant effect on the query efficiency of either UIS$^*$ or INS, comparing to S1. Furthermore, based on the results shown in Figure~\ref{fig:lubm_s3} and Figure~\ref{fig:lubm_s5}, the number of the vertices in an input KG that satisfy the given substructure constraint also effects the query performance of both UIS$^*$ and INS. Then, even though UIS$^*$ is an improved algorithm of UIS, the performance of UIS$^*$ on actual queries is not satisfactory, and the linear increase slope of UIS$^*$ is greater than that of UIS, except in the case of S5~(Figure~\ref{fig:lubm_s5}). That is because the order of the vertices in a $V(S,G)$ is random, causing that UIS$^*$ always falls into a bad direction. The results shown in Figure~\ref{fig:lubm_s3} and Figure~\ref{fig:lubm_s5} demonstrate that the impact of the bad direction~(Theorem~\ref{theorem:sqlcs_weakness}) is much greater than that of the high selectivity. However, with a local index pruning the search space and selecting a better search direction, INS overcomes the above disadvantage of UIS$^*$, and achieves a much higher efficiency than UIS and UIS$^*$, based on experimental results reported in Figures~\ref{fig:lubm_s1}-\ref{fig:lubm_s5}.
	
	\subsection{Experiments on YAGO}\label{section:r:yago}
	
	The utilized dataset YAGO\footnote{We download the dataset from ``https://www.mpi-inf.mpg.de /departments/databases-and-information-systems/research/yago-naga/yago/archive/''.} includes about $4$M vertices and $13$M edges. The indexing time and space consumption of local index on YAGO are $4,993$s and $86$MB, respectively.
	
	\textbf{Evaluation query generation.} We develop sets of experiments on the real KG YAGO~(denoted by $G$). Each group of the experiments also corresponds to two groups of LSCR queries: $1,000$ true-queries~($\mathcal{Q}_t$) and $1,000$ false-queries~($\mathcal{Q}_f$). Different from that on LUBM, we randomly generate the substructure constraints of both $\mathcal{Q}_t$ and $\mathcal{Q}_f$, where the number of the vertices satisfying such constraints on $G$ is in the same order of magnitude $m$~($m\in\{10^1,10^2,\dots\}$). 
	
		\begin{figure}[t]
		\centering
		\renewcommand{\thesubfigure}{}
		\subfigure[(a) True]{
			\includegraphics[scale = 0.6]{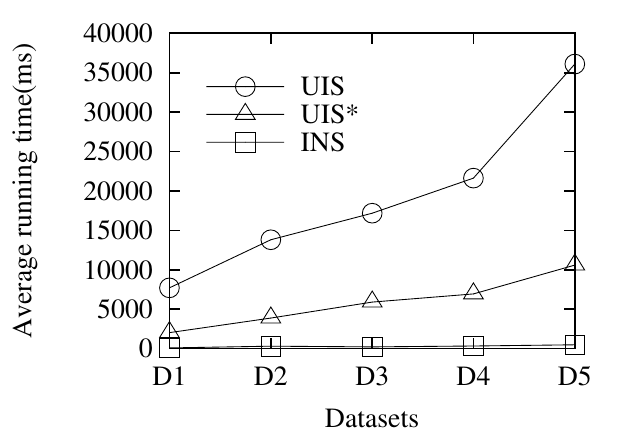}}
		\subfigure[(b) False]{
			\includegraphics[scale = 0.6]{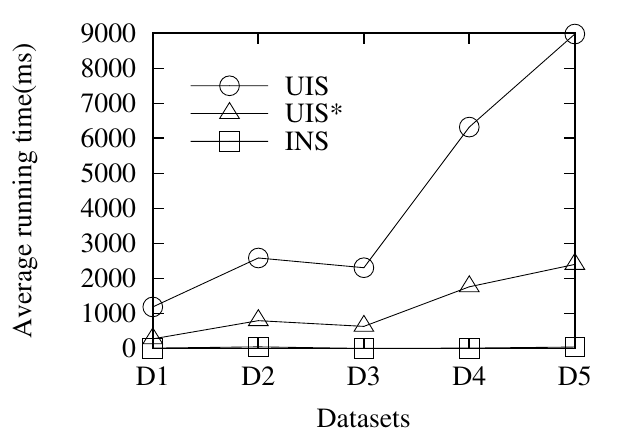}}
		\subfigure[(c) True]{
			\includegraphics[scale = 0.6]{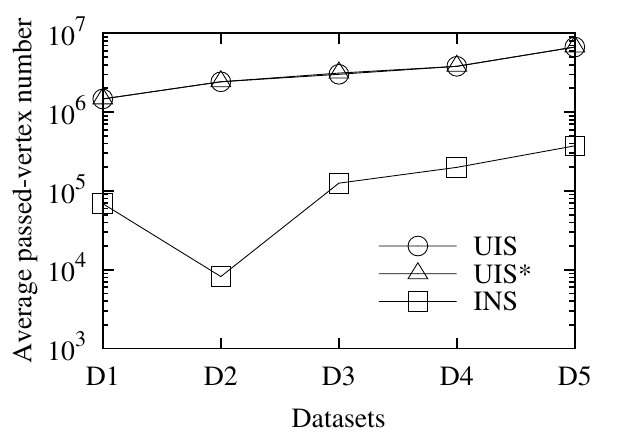}}
		\subfigure[(d) False]{
			\includegraphics[scale = 0.6]{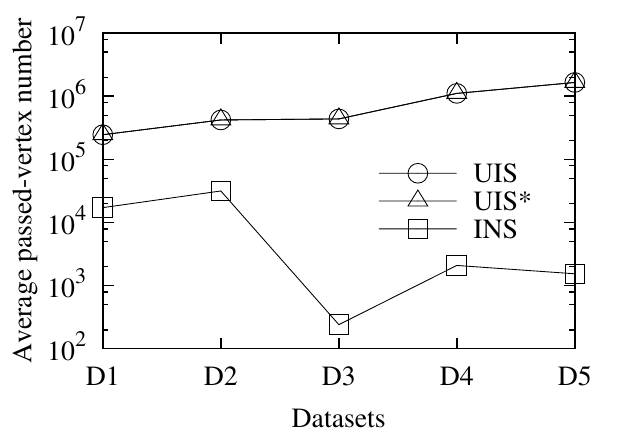}}
		\caption{Results related to the substructure constraint S5.}
		\label{fig:lubm_s5}
	\end{figure}
	
	\begin{figure}[t]
		\centering
		\renewcommand{\thesubfigure}{}
		\subfigure[(a) True]{
			\includegraphics[scale = 0.6]{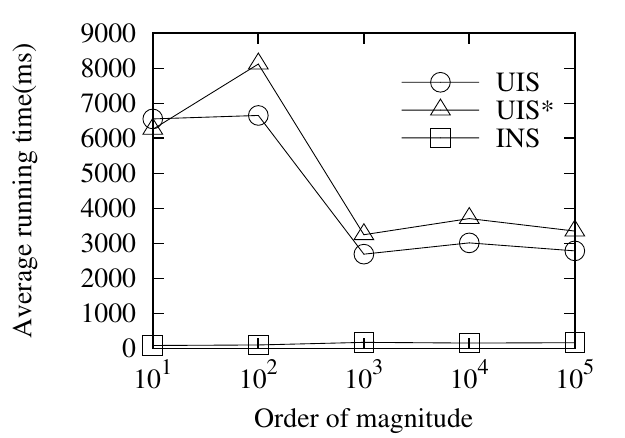}}
		\subfigure[(b) False]{
			\includegraphics[scale = 0.6]{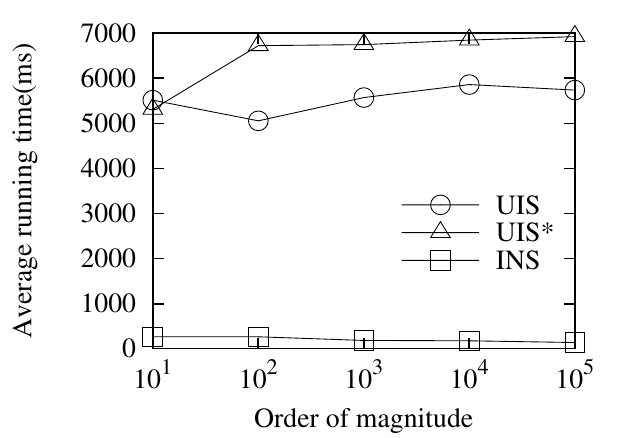}}
		\subfigure[(c) True]{
			\includegraphics[scale = 0.6]{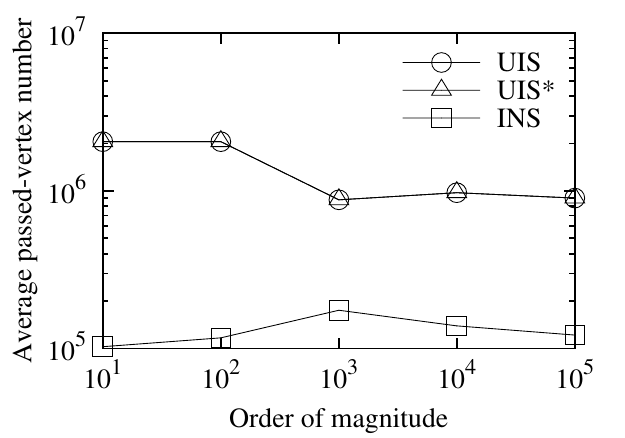}}
		\subfigure[(d) False]{
			\includegraphics[scale = 0.6]{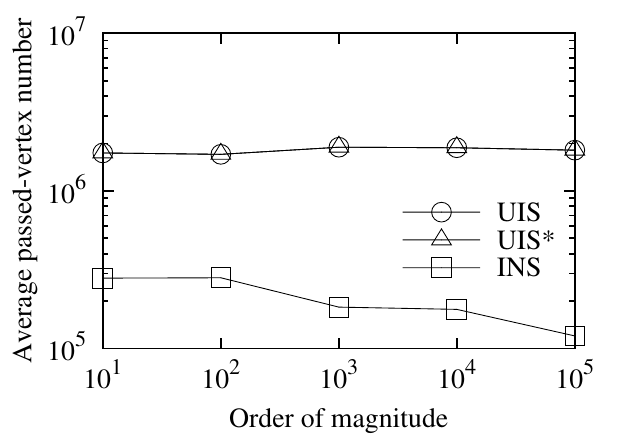}}
		\caption{Results related to random substructure constraints.}
		\label{fig:yago}
	\end{figure}

	The test query generation process of $\mathcal{Q}_t$ and $\mathcal{Q}_f$ is presented as follows. Firstly, assuming $Q\negthickspace=\negthickspace(s,t,L,S)$, both the way of obtaining $s$, $t$ and $L$, and the method of controlling the irrelevant variables are as illustrated in Section~\ref{section:r:lubm:query_generation}. Then, for the substructure constraint $S$ of $G$, we let $|V(S,G)|$ in the range of $[0.8m,1.2m]$, and randomly select an instance vertex $v$ in $G$~(Section~\ref{section:background}). According to the RDF schema of $G$ and the edges that are incident to $v$, we initialize $S\negthickspace=\negthickspace(?x, V_S, E_S, E_?)$ with a small selectivity, where $v\in V(S,G)$. After that, based on the gap between $|V(S,G)|$ and $[0.8m,1.2m]$, we gradually and randomly adjust $V_S$, $E_S$ and $E_?$. Finally, whether $Q$ belongs to $\mathcal{Q}_t$, or $\mathcal{Q}_f$, or we discard $Q$, is also as demonstrated in Section~\ref{section:r:lubm:query_generation}.
	
	\textbf{Experimental result analysis.} The results of the above experiments are depicted in Figure~\ref{fig:yago}. Assuming $\mathcal{Q}(r,m)$ represents a group of evaluation queries, where: (\romannumeral1) if $r=T$, each query in $\mathcal{Q}(r,m)$ is a query true, otherwise false; (\romannumeral2) $m$ denotes the order of magnitude as mentioned above. The average running time of UIS shows a downward trend from the experiments on $\mathcal{Q}(T,10^1)$~(about $7$s) to that on $\mathcal{Q}(T,10^5)$~(about $3$s), as drawn in Figure~\ref{fig:yago}(a). For the false queries~(Figure~\ref{fig:yago}(b)), the average running time of UIS does not change significantly from $\mathcal{Q}(F,10^1)$ to $\mathcal{Q}(F,10^5)$, which is approximately stable at $5.5$s. Additionally, its change trend of the average passed-vertex number is similar to that of the average running time, according to Figures~\ref{fig:yago}(c)-\ref{fig:yago}(d). 
	
	The query performance of UIS$^*$ is still worse than that of UIS, especially in the experiments related to the false queries~(Figure~\ref{fig:yago}(c)), while, the passed-number of UIS$^*$ is almost equal to that of UIS. Such results further prove that UIS$^*$ has a higher percentage of traversing the vertices in the input KG repeatedly. With the local index pruning the search space and accelerating query processing, the performance of INS is the best throughout the experiments. The average online search time consumption of INS is about one thousandth of either UIS or UIS$^*$ on both true-LSCR and false-LSCR queries. 
	
	However, the above results seems different from that in Section~\ref{section:r:lubm}, where we observe $|V(S,G)|$ highly dominates the query efficiency of a LSCR query $Q\negthickspace=\negthickspace(s,t,L,S)$. That is because the potential search spaces of the randomly generated evaluation queries are affected by $m$: (\romannumeral1) for a group of true queries $\mathcal{Q}(T,m)$, a larger $m$ means more vetices in the input KG that satisfy the randomly generated substructure constraint, which makes searching from a vertex to another relatively easier, so the efficiency shows a slight increase; (\romannumeral2) for a group of false queries $\mathcal{Q}(F,m)$, as the search algorithms are runs in the space that a vertex $s$ reaches to under the given substrucuture constraint, in which each vertex can be traversed at most twice, the efficiency can be hardly affected by $m$.
	
	\subsection{Discussion}\label{section:r:discussion}
	For the experimental results demonstrated in Section~\ref{section:r:lubm} and Section~\ref{section:r:yago}, we observe that: (\romannumeral1) comparing to \cite{Valstar17Landmark}, the indexing time and space consumption are affordable; (\romannumeral2) comparing to UIS and UIS$^*$, INS achieves a great performance on the different scales of the input KGs, even though, on a large KG, different degrees of the query performance degradation exist in the three algorithms; (\romannumeral3) the selectivity of a substructure constraint $S$ hardly affects the query performances of the proposed three algorithms, on a LSCR query $Q$$=$$(s,t,L,S)$, while, $|V(S,G)|$ is a main factor that dominates the query efficiency.

\section{Conclusion} \label{section:conclusion}
	This paper introduces a new variant of the reachability problem on KGs, i.e. LSCR queries, which  contains both label and substructure constraints. The LSCR queries on KGs is much more complicated than the existing reachability queries, which only consider the label constraints. On the one hand, the query processing time complexity of applying traditional online search strategies~(DFS/BFS) cannot be afforded. On the other hand, indexing a relatively large KG with the techniques of LCR queries is unrealistic, as the indexing time grows exponentially with the input KG scale. This work first presents a baseline method~(UIS) for LSCR queries not only on KGs but also on the general graphs. Then after introducing an intuition idea of implementing a SPARQL engine to address our problem, we devise an informed search algorithm INS with a \textit{local index}. Our experimental evaluation on both synthetic and real datasets confirms that the indexing time and space consumption of the local index is affordable, and, based on local index, INS solves the LSCR queries on KGs efficiently.

\ifCLASSOPTIONcompsoc
\section*{Acknowledgments}
This paper was partially supported by NSFC grant U1509216, U1866602, 61602129 and Microsoft Research Asia. 
\else
  \section*{Acknowledgment}
\fi

\ifCLASSOPTIONcaptionsoff
  \newpage
\fi

\bibliographystyle{plain}

%
%
\begin{IEEEbiography}[{\includegraphics[width=1in, height=1.2in,clip,
keepaspectratio]{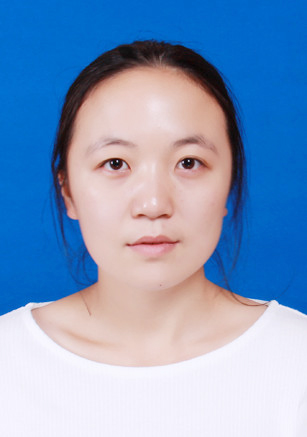}}]{Xiaolong Wan} received the bachelor's degree from the Harbin Institute of Technology, in 2018. She is working toward the PhD degree in the School of Computer Science and Technology, Harbin Institute of Technology. Her research interests include knowledge graph, graph mining.
\end{IEEEbiography}

\begin{IEEEbiography}[{\includegraphics[width=1in, height=1.2in,clip,
keepaspectratio]{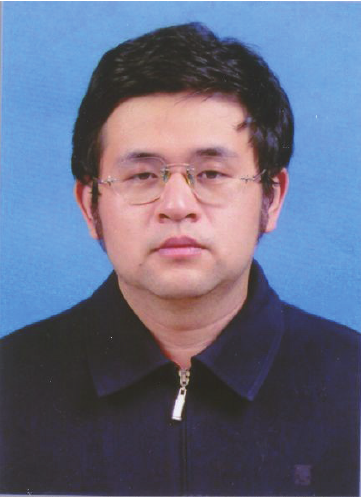}}]{Hongzhi Wang} is a professor and doctoral supervisor with the Harbin Institute of Technology. He was awarded a Microsoft fellowship and IBM PhD fellowship, and he was designated as a Chinese excellent database engineer. His research interests include big data management, data quality, graph data management, and web data management.
\end{IEEEbiography}

%




\end{document}